\documentclass[a4paper,UKenglish,cleveref, autoref, thm-restate]{lipics-v2021}



\usepackage{url}
\usepackage{amsmath,amssymb,amsthm,hyperref}
\usepackage{color}
\usepackage{amsfonts}
\usepackage{algpseudocode}
\usepackage[linesnumbered,ruled,vlined]{algorithm2e}
\SetKwProg{Procedure}{Procedure}{}{}

\usepackage{comment}

\newcommand{\E}{\mathop{\mathbb{E}}}
\newcommand{\EM}{\mathrm{EM}}
\newcommand{\TV}{\mathrm{TV}}

\newcommand{\OPT}{\mathrm{opt}}
\newcommand{\FOPT}{\mathrm{fopt}}

\bibliographystyle{plainurl}

\title{Average sensitivity of the Knapsack Problem} 


\author{Soh Kumabe}{The University of Tokyo}{soh\_kumabe@mist.i.u-tokyo.ac.jp}{}{Supported by JST, PRESTO Grant Number JPMJPR192B.}

\author{Yuichi Yoshida}{National Institute of Informatics}{yyoshida@nii.ac.jp}{https://orcid.org/0000-0001-8919-8479}{Supported by JST, PRESTO Grant Number JPMJPR192B.}


\authorrunning{S. Kumabe and Y. Yoshida}

\Copyright{Soh Kumabe and Yuichi Yoshida}

\ccsdesc[100]{Theory of computation $\to$ Design and analysis of algorithms $\to$ Approximation algorithms analysis} 

\keywords{Average Sensitivity, Knapsack Problem, FPRAS} 

\category{} 

\relatedversion{} 




\nolinenumbers 

\EventEditors{John Q. Open and Joan R. Access}
\EventNoEds{2}
\EventLongTitle{42nd Conference on Very Important Topics (CVIT 2016)}
\EventShortTitle{CVIT 2016}
\EventAcronym{CVIT}
\EventYear{2016}
\EventDate{December 24--27, 2016}
\EventLocation{Little Whinging, United Kingdom}
\EventLogo{}
\SeriesVolume{42}
\ArticleNo{23}

\begin{document}

\maketitle

\begin{abstract}

In resource allocation, we often require that the output allocation of an algorithm is stable against input perturbation because frequent reallocation is costly and untrustworthy.
Varma and Yoshida (SODA'21) formalized this requirement for algorithms as the notion of average sensitivity.
Here, the average sensitivity of an algorithm on an input instance is, roughly speaking, the average size of the symmetric difference of the output for the instance and that for the instance with one item deleted, where the average is taken over the deleted item.

In this work, we consider the average sensitivity of the knapsack problem, a representative example of a resource allocation problem.
We first show a $(1-\epsilon)$-approximation algorithm for the knapsack problem with average sensitivity $O(\epsilon^{-1}\log \epsilon^{-1})$.
Then, we complement this result by showing that any $(1-\epsilon)$-approximation algorithm has average sensitivity $\Omega(\epsilon^{-1})$.
As an application of our algorithm, we consider the incremental knapsack problem in the random-order setting, where the goal is to maintain a good solution while items arrive one by one in a random order.
Specifically, we show that for any $\epsilon > 0$, there exists a $(1-\epsilon)$-approximation algorithm with  amortized recourse $O(\epsilon^{-1}\log \epsilon^{-1})$ and amortized update time $O(\log n+f_\epsilon)$, where $n$ is the total number of items and $f_\epsilon>0$ is a value depending on $\epsilon$.
\end{abstract}
\newpage



\section{Introduction}

\subsection{Background and Motivation}

The \emph{knapsack problem} is one of the most fundamental models in \emph{resource allocation}, which handles the selection of good candidates under a budget constraint. For example, it can model the hiring process of employees in a company and the selection process of government projects.
The knapsack problem is formally defined as follows.
The input is a pair $(V,W)$, where $V$ is a set of $n$ items, $W\in \mathbb{R}_{>0}$ is a weight limit, and each item $i\in V$ has a positive weight $w(i)\leq W$ and value $v(i)$.
The goal of the problem is to find a set of items $S \subseteq V$ that maximizes the total value $\sum_{i\in S}v(i)$, subject to the weight constraint $\sum_{i\in S}w(i)\leq W$.


Sometimes, the information used to allocate resources is uncertain or outdated. For example, suppose that a satellite isolated from the Earth is taking actions. The satellite has a list of potential actions $V$, and each action has a fixed value $v(i)$ and fuel consumption $w(i)$. Some of the actions may be infeasible at the moment due to the satellite's conditions, such as the atmosphere or surrounding space debris. The satellite's objective is to find a combination of feasible actions with the highest possible total value without running out of $W$ amount of fuel.

The control room on the Earth wants to know the satellite's action for future planning. Since direct communication to the satellite is not always possible, the control room would simulate the satellite's decision process. However, the list $V$ that the control room has as potential actions for the satellite may be different from the list that the satellite uses. Even in such a situation, the control room would like to predict the actual actions taken by the satellite with some accuracy. Such a purpose can be achieved by designing algorithms with small \emph{average sensitivity}~\cite{Murai2019,Varma2021}.

The following definitions are necessary to formally define the average sensitivity.
Let $V$ be a finite set.
Then, the \emph{Hamming distance}\footnote{The Hamming distance is usually defined for strings, but our definition matches the formal definition by considering a set $S \subseteq V$ as a binary string of length $|V|$.} between two sets $S,S' \subseteq V$ is $|S\triangle S'|$, where $S\triangle S'=(S\setminus S')\cup (S'\setminus S)$ is the \emph{symmetric difference}.
For two probability distributions $\mathcal{S}$ and $\mathcal{S}'$ over sets in $V$, the \emph{earth mover's distance} $\EM(\mathcal{S},\mathcal{S}')$ between them is given by
\begin{align}
  \min_{\mathcal{D}} \E_{(S,S') \sim \mathcal{D}} |S \triangle S'|,
  \label{eq:earth-mover}
\end{align}
where the minimum is taken over distributions of a pair of sets such that its marginal distributions on the first and the second coordinates are equal to $\mathcal{S}$ and $\mathcal{S}'$, respectively.
Let $\mathcal{A}$ be a (randomized) algorithm for the knapsack problem, and let $(V,W)$ be an instance of the knapsack problem.
We abuse the notation $\mathcal{A}$ (resp., $\mathcal{A}^i$) to represent the output (distribution) of the algorithm $\mathcal{A}$ on the instance $(V,W)$ (resp., $(V\setminus \{i\},W)$).
Then, the \emph{average sensitivity} of the algorithm (on the instance $(V,W)$) is
\begin{align}
    \frac{1}{n}\sum_{i\in V}\EM(\mathcal{A},\mathcal{A}^i).
  \label{eq:average-sensitivity}
\end{align}
An algorithm is informally called \emph{stable-on-average} if its average sensitivity is small.

\subsection{Our Contributions}

Our main contribution is a fully polynomial-time randomized approximation scheme (FPRAS) for the knapsack problem with a small average sensitivity:
\begin{theorem}\label{thm:intro}
  For any $\epsilon>0$, there is a $(1-\epsilon)$-approximation algorithm for the knapsack problem with time complexity polynomial in $\epsilon^{-1}n$ and average sensitivity $O(\epsilon^{-1}\log \epsilon^{-1})$, where $n$ is the number of items.
\end{theorem}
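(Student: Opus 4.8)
\emph{Proof strategy.} The plan is to build a \emph{randomized} version of the classical knapsack FPTAS whose internal decisions vary smoothly when one item is removed, so that the output distributions $\mathcal{A}$ and $\mathcal{A}^i$ can be coupled with small earth mover's distance. The first step is to compute a value $L$ with $\OPT \le L \le 2\OPT$ that is \emph{itself} stable on average; one can simply take $L := \FOPT$, which is computable in polynomial time by sorting items by density. Writing $x^*$ for the optimal fractional solution, removing item $i$ changes $\FOPT$ by at most $x^*_i v(i)$, so $\frac1n\sum_{i\in V}|L - L^i| \le \frac1n\sum_i x^*_i v(i) = \FOPT/n = O(\OPT/n)$. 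Fix $L$ and split the items by value: item $i$ is \emph{large} if $v(i)\ge\tau$ and \emph{small} otherwise, where $\tau$ is $\epsilon L$ multiplied by an independent random factor drawn from $[1,1+\epsilon]$ (an item whose value lies in the resulting ``gray band'' is thus classified randomly, which prevents a deleted item from triggering a deterministic cascade of reclassifications). Since $\tau\ge\epsilon\OPT$, every feasible solution contains at most $\OPT/\tau\le\epsilon^{-1}$ large items, and the large items occupy only the $O(\log\epsilon^{-1})$ value scales $[\epsilon L,2\epsilon L),\dots,[L/2,L]$, at most $\epsilon^{-1}$ of any one scale being used in any solution.

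The algorithm itself is the standard ``DP on large items, greedy on small items'' scheme, made randomized. For the large items I would run a dynamic program over subsets in which each large value is rounded down to a grid of spacing $\Theta(\epsilon^2 L)$ shifted by a common random additive offset, recording for every attainable (rounded value, weight) pair a witnessing subset, and then select uniformly at random a configuration whose value is within the grid resolution of the best attainable one. For each configuration of total weight $W_L$, I would fill the residual budget by taking small items in nonincreasing order of density $v(i)/w(i)$ up to a \emph{randomly shrunk} budget $(1-\epsilon U)(W-W_L)$ with $U\sim\mathrm{Unif}[0,1]$, and output the union; the final answer is the best union found. The fractional-relaxation bound (each small item is worth at most $\tau=O(\epsilon\OPT)$) shows the small part loses $O(\epsilon\OPT)$, the large-value rounding loses $O(\epsilon^{-1})\cdot\Theta(\epsilon^2 L)=O(\epsilon\OPT)$, and there are $\mathrm{poly}(\epsilon^{-1})$ grid points and $\mathrm{poly}(n)$ large items, so after rescaling $\epsilon$ by a constant this is a $(1-\epsilon)$-approximation running in time $\mathrm{poly}(\epsilon^{-1}n)$.

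The random shrinking of the small-item budget is the crucial device, and I expect verifying its effect to be the main technical obstacle. With a deterministic budget, deleting a single small prefix item frees weight $w(i)$ that can be absorbed by $\Omega(n)$ tiny items, so the symmetric difference can be enormous and the algorithm is not stable on average; once the cutoff is smeared over an interval of relative width $\epsilon$, a deletion only displaces the cutoff distribution slightly, and the task is to show that the resulting earth mover's cost, summed over deleted small items and divided by $n$, is $O(\epsilon^{-1}\log\epsilon^{-1})$ (the logarithm being the natural price of turning an $\Omega(n)$-item worst case into a bound independent of $n$). The analogous subtleties for the large items — that the common random offset makes the two grid-roundings couple, that the uniform choice among near-optimal large configurations is Lipschitz in the item set under the earth mover's metric, and that the jittered threshold $\tau$ does not cause correlated reclassifications — are the remaining ingredients of this step.

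Finally, I would bound $\frac1n\sum_{i\in V}\EM(\mathcal{A},\mathcal{A}^i)$ by coupling the run on $(V,W)$ with the run on $(V\setminus\{i\},W)$ under the same estimate randomness, threshold jitter, DP offset, configuration choice, and $U$, and splitting on the deleted item. If $i$ is small and outside the chosen prefix the two runs are identical; if $i$ is small and inside the prefix, the budget-shrinking analysis above controls the contribution after averaging; if $i$ is large, re-running the large-item DP can change the selected configuration on all of its $\le\epsilon^{-1}$ items and hence change $W_L$ and the small-item prefix, but the disturbance touches only $O(\epsilon^{-1})$ items per value scale and propagates through at most $O(\log\epsilon^{-1})$ scales. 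Combining the two cases, every deletion contributes at most $O(\epsilon^{-1}\log\epsilon^{-1})$ to the sum (the small-item terms only in an averaged sense), and since there are at most $n$ items, dividing by $n$ yields the claimed average sensitivity $O(\epsilon^{-1}\log\epsilon^{-1})$.
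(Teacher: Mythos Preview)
Your overall architecture---compute a stable estimate $L\approx\FOPT$, randomize the large/small threshold, handle small items by greedy with a randomly shrunk budget, and handle large items by some DP---matches the paper closely, and your randomized-budget trick for the small items is exactly the paper's \textsc{ModifiedGreedy}. But the proposal has a genuine gap at the step that is in fact the crux of the paper: how you \emph{select} the large-item configuration.

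You propose to ``select uniformly at random a configuration whose value is within the grid resolution of the best attainable one,'' and you list as an ingredient to be checked ``that the uniform choice among near-optimal large configurations is Lipschitz in the item set under the earth mover's metric.'' It is not. When you delete a small item $i$, the score of every large configuration $R$ drops by roughly $\FOPT(V\setminus L,1-w(R))-\FOPT((V\setminus L)\setminus\{i\},1-w(R))$, and these drops differ across configurations. A configuration sitting just outside the ``within grid resolution'' window can therefore enter it (or vice versa), changing the uniform distribution by $\Omega(1)$ in total variation; this can happen for each of $\Omega(n)$ small items, and the near-optimal configurations can be pairwise disjoint sets of large items with completely different residual budgets. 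No smoothing of item values by a random offset fixes this, because the instability comes from the \emph{threshold} defining the window, not from rounding. The paper resolves this by abandoning uniform selection and using the \emph{exponential mechanism}: it first reduces the candidates to $O(\epsilon^{-1})$ sets $A_0,\dots,A_l$ (one per value bucket), assigns score $x_t=tc+\FOPT(V\setminus L,1-w(A_t))$, and samples $t$ with probability $\propto\exp(x_t/d)$ for $d=\Theta\bigl(\epsilon\,\OPT/\log\epsilon^{-1}\bigr)$. The point is that $\sum_{i\in V\setminus L}(x_t-x_t^i)\le\FOPT(V\setminus L,1-w(A_t))$ by a fractional-knapsack telescoping, so the exponential weights move multiplicatively by at most $\exp((x_t-x_t^i)/d)$, and the \emph{sum over $i$} of the resulting TV shifts is $O(\FOPT/d)=O(\epsilon^{-1}\log\epsilon^{-1})$. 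The $\log\epsilon^{-1}$ is exactly the price of making $d$ small enough to beat the union bound over $\epsilon^{-1}$ candidates in the approximation analysis; it has nothing to do with ``$O(\log\epsilon^{-1})$ value scales.''

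A second, related gap: you claim that when a large item is deleted and the DP reconfigures, ``the disturbance touches only $O(\epsilon^{-1})$ items per value scale and propagates through at most $O(\log\epsilon^{-1})$ scales.'' This is not true. If the selected large set $R$ changes at all, the residual budget $1-w(R)$ changes, and the small-item greedy prefix can change by $\Omega(n)$ items. The paper does not try to bound this Hamming cost; instead it bounds the \emph{total variation} between the distributions of $R$ and $R^i$ (this is where the exponential mechanism is again essential), multiplies that TV by the trivial Hamming bound $n$, and observes that after summing over $i$ and dividing by $n$ the contribution is still $O(\epsilon^{-1}\log\epsilon^{-1})$. Your coupling argument as stated would need a per-deletion Hamming bound that simply does not hold.
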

It is noteworthy that Kumabe and Yoshida~\cite{kumabe2022average} have presented a $(1-\epsilon)$-approximation algorithm with $O(\epsilon^{-1}\log^3 (nW))$ average sensitivity for the knapsack problem with integer weights. In contrast, our algorithm can be applied to an instance with non-integer weight and has a smaller average sensitivity bound independent of the weight limit $W$.



The following proposition states that the algorithm that outputs an optimal solution has an unbounded average sensitivity.
Thus to bound the average sensitivity, we should allow for approximation as in Theorem~\ref{thm:intro}.
\begin{proposition}\label{prop:unbounded}
    The average sensitivity of the algorithm that outputs an optimal solution can be as large as $\Omega(n)$.
\end{proposition}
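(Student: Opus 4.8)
The plan is to exhibit, for every $n \ge 2$, a single knapsack instance on $n$ items on which the (essentially unique) optimal solution changes almost completely upon the deletion of all but one of the items. Concretely, I would take $W = 1$ and let $V = A \cup \{b\}$, where $A$ consists of $n-1$ \emph{light} items, each of weight $\tfrac{1}{n-1}$ and value $1$, and $b$ is a single \emph{heavy} item of weight $1$ and value $n - \tfrac{3}{2}$. All weights lie in $(0,W]$ and all values are positive, so this is a valid instance.

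First I would pin down the optimum on $(V,W)$. Any feasible set either contains $b$, in which case it has weight at least $1 = W$ and hence equals $\{b\}$ with value $n - \tfrac{3}{2}$; or it is a subset of $A$, in which case it fits entirely and the best choice is $A$ itself, of value $n-1$. Since $n-1 > n - \tfrac{3}{2}$, the unique optimum is $S := A$. Next, fix any light item $i \in A$ and consider $(V \setminus \{i\}, W)$. A feasible set not containing $b$ is a subset of $A \setminus \{i\}$ and so has value at most $n-2$; a feasible set containing $b$ must equal $\{b\}$ (again by the weight bound, since $b$ already uses the full capacity and every remaining item has positive weight), of value $n - \tfrac{3}{2}$. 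As $n - \tfrac{3}{2} > n-2$, the unique optimum on $V \setminus \{i\}$ is $S^i := \{b\}$. On the other hand, deleting $b$ leaves an instance whose unique optimum is still $A$.

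Finally I would assemble the bound. Because the relevant optima are unique, the algorithm that outputs an optimal solution returns $S$ on $(V,W)$ and $S^i$ on $(V \setminus \{i\},W)$ with probability $1$, so $\EM(\mathcal{A},\mathcal{A}^i) = |S \triangle S^i|$. For each of the $n-1$ light items this equals $|A \triangle \{b\}| = (n-1) + 1 = n$, while for $i = b$ it equals $|A \triangle A| = 0$. Hence the average sensitivity is $\tfrac{1}{n}\bigl((n-1)\cdot n + 0\bigr) = n-1 = \Omega(n)$.

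The only part requiring care is the feasibility bookkeeping in the deleted instances: verifying that no leftover combination of light items can match the value of $\{b\}$, and that $b$ cannot be padded with any additional light item. Both follow immediately from the uniform light-item weights and the choice of the intermediate value $n - \tfrac{3}{2} \in (n-2,\, n-1)$, which also guarantees that all the optima in question are strict and therefore unique, so that no tie-breaking subtleties arise.
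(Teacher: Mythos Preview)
Your proof is correct. The construction is clean: the unique optimum flips from $A$ to $\{b\}$ upon deleting any light item, and your arithmetic is right, yielding average sensitivity exactly $n-1$.

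The paper takes a different route. Rather than a single heavy alternative item, it builds $V = V_1 \cup V_2$ with $k$ items of weight $1/k$ and value $1/k$ in $V_1$, and $k$ items of weight $1/k^2$ and value $1/k^3$ in $V_2$. The optimum on $V$ is $V_1$, but deleting any $i \in V_1$ frees enough capacity that the optimum becomes all of $V \setminus \{i\}$, giving symmetric difference $k+1$ on half the deletions and average sensitivity $(k+1)/2 = \Omega(n)$. Your construction is arguably simpler and gives a slightly stronger constant. What the paper's construction buys, however, is that \emph{every} item has value at most $1/k$; by taking $k \ge \lceil \delta^{-1} \rceil$ the same instance witnesses unbounded sensitivity even when all item values are at most $\delta$. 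The paper explicitly uses this feature right after the proof to conclude that the natural greedy algorithm is unstable on small-value instances, motivating the modified greedy of Section~\ref{sec:small-weight}. Your instance, with $v(b) = n - \tfrac{3}{2}$, does not serve that secondary purpose, though it proves the proposition as stated just as well.
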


Let $\OPT(V)$ be the optimal value of the original instance.
The basic idea behind the algorithm of Theorem~\ref{thm:intro} is classifying items into two categories according to whether their values are more than a threshold $\approx \epsilon\cdot \OPT(V)$. 
It is not difficult to design stable-on-average $(1-\epsilon)$-approximation algorithms when all the items belong to one of the two categories.
Indeed, if all the items are small, i.e., with values at most $\epsilon \cdot \OPT(V)$, then we can prove that a variant of the greedy algorithm has an approximation ratio $(1-O(\epsilon))$ and an $O(\epsilon^{-1})$ average sensitivity.
If all items are large, i.e., with values at least $\epsilon \cdot \OPT(V)$, then any algorithm has a small average sensitivity because any feasible solution contains at most $\epsilon^{-1}$ items.
To combine these two algorithms, we add a subset of the large items selected by the \emph{exponential mechanism}~\cite{mcsherry2007mechanism} to the output.
However, if we apply the exponential mechanism to all possible sets of large items, the average sensitivity becomes too high because the number of such sets is exponentially large.
Therefore, we need to reduce the number of candidate sets appropriately without sacrificing the approximation guarantee.
A more detailed overview is provided in Section~\ref{sec:overview}.

Next, we show that our upper bound on the average sensitivity is tight up to a logarithmic factor in $\epsilon^{-1}$.
\begin{theorem}\label{thm:intro-lower-bound}
  Let $\epsilon > 0$.
  Then any $(1-\epsilon)$-approximation algorithm for the knapsack problem has average sensitivity $\Omega(\epsilon^{-1})$.
\end{theorem}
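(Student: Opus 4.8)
My plan is to exhibit a single hard instance $I=(V,W)$ on which the output of every $(1-\epsilon)$-approximation algorithm is essentially forced to flip between two ``far apart'' solutions as items are deleted, and then to lower bound the earth mover's distance by combining a value-gap argument with this rigidity. The instance has $n=k+1$ items, for an integer $k=\Theta(\epsilon^{-1})$ to be fixed: a set $A=\{a_1,\dots,a_k\}$ of ``small'' items, each of weight $1/k$ and value $1$, together with one ``big'' item $t$ of weight slightly larger than $1-1/k$ and value $v_t$, and weight limit $W=1$. The weights are arranged so that $A$ exactly fills the knapsack, $\{t\}$ alone fits, but $t$ is incompatible with every $a_j$ (their combined weight exceeds $W$). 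Consequently every feasible solution of $I$ is either a subset of $A$ or the singleton $\{t\}$; since $v_t<k$, we get $\OPT(I)=v(A)=k$, attained only by $A$.

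The crux is to choose $v_t$ in a window that simultaneously (i) keeps $\{t\}$ far below optimal in $I$, so near-optimal solutions of $I$ cluster at $A$, and (ii) makes the leftover $A\setminus\{a_j\}$ (value $k-1$) far below optimal in $I\setminus\{a_j\}$, so near-optimal solutions of $I\setminus\{a_j\}$ cluster at $\{t\}$. Fixing a constant $c>2$, the requirement is $(k-1)/(1-c\epsilon)\le v_t\le(1-c\epsilon)k$, together with $c\epsilon k\le 1$; the value window is nonempty exactly when $k(2c\epsilon-c^2\epsilon^2)\le 1$, which forces $k=O(\epsilon^{-1})$ and is why one takes $k=\lfloor 1/(2c\epsilon)\rfloor$. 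With such a choice one checks by inspection of the (very short) list of feasible solutions that the only feasible solution of $I$ with value exceeding $(1-c\epsilon)\OPT(I)$ is $A$ itself, and the only feasible solution of $I\setminus\{a_j\}$ with value exceeding $(1-c\epsilon)\OPT(I\setminus\{a_j\})=(1-c\epsilon)v_t$ is $\{t\}$.

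Next I would apply the standard value-gap argument. Since $\E[v(\mathcal{A})]\ge(1-\epsilon)\OPT(I)$, every feasible solution has value at most $\OPT(I)$, and every feasible solution other than $A$ has value at most $(1-c\epsilon)\OPT(I)$, a one-line computation gives $\Pr[\mathcal{A}=A]\ge 1-1/c$; the same argument on $I\setminus\{a_j\}$ gives $\Pr[\mathcal{A}^{a_j}=\{t\}]\ge 1-1/c$ for every $j$. Taking $c=4$ so that $1-2/c>0$, in any coupling of $\mathcal{A}$ and $\mathcal{A}^{a_j}$ the events $\{\mathcal{A}=A\}$ and $\{\mathcal{A}^{a_j}=\{t\}\}$ hold simultaneously with probability at least $1-2/c$, and on that event the symmetric difference is $|A\triangle\{t\}|=k+1$. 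Hence $\EM(\mathcal{A},\mathcal{A}^{a_j})\ge(1-2/c)(k+1)$ for each of the $k$ items $a_j$. Summing $\EM(\mathcal{A},\mathcal{A}^i)$ over all $n=k+1$ items (the $i=t$ term contributes a harmless $\ge 0$) and dividing by $n$ yields average sensitivity at least $\tfrac{(1-2/c)k(k+1)}{k+1}=\Omega(k)=\Omega(\epsilon^{-1})$.

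\textbf{Main obstacle.}
The hard part is precisely the tension in the second paragraph: forcing a large change upon deletion requires $A$ to be \emph{rigidly} near-optimal, hence all its items must be ``large'' and so $|A|=O(\epsilon^{-1})$; yet we also need the alternative $\{t\}$ to be almost tied with $A$ in value, so that deleting one small item tips the optimum entirely to $\{t\}$. Reconciling these constraints is exactly what pins $k$ at $\Theta(\epsilon^{-1})$, and it is the reason the matching lower bound is $\Theta(\epsilon^{-1})$ and not larger. The remaining ingredients — enumerating the feasible solutions, the value-gap estimate, and the coupling bound on $\EM$ — are routine. As usual, the statement is meaningful only for $\epsilon$ below an absolute constant (so that $k\ge 1$); for larger $\epsilon$ the bound $\Omega(\epsilon^{-1})$ is vacuous up to constants.
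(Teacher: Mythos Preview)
Your proof is correct and follows the same high-level strategy as the paper: build a hard instance on which the $(1-\epsilon)$-approximation guarantee forces the output to concentrate on one fixed set, arrange that deleting a single item flips the forced output to a set at Hamming distance $\Theta(k)$, and finish with the value-gap/coupling argument you describe. The difference lies only in the instance. The paper uses two competing families $V_1,V_2$ of sizes $k$ and $k-1$ (items of weight $1/k$ and $1/(k-1)$ respectively), tuned so that $V_1$ is the unique near-optimum of $V$ while $V_2$ becomes the unique near-optimum of $V\setminus\{i\}$ for $i\in V_1$; the flip is between two \emph{large} sets. You instead replace the second family by a single heavy item $t$, so the flip is from $A$ to the singleton $\{t\}$. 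Your construction is a bit cleaner --- one fewer parameter to tune, a shorter enumeration of feasible sets, and the final arithmetic $\tfrac{1}{n}\sum_i \EM \ge (1-2/c)k$ drops out immediately --- while the paper's version has the mild aesthetic advantage that the lower bound is witnessed by instances where \emph{both} the ``before'' and ``after'' near-optima are large, so no single item carries $\Theta(1)$ fraction of the optimum. Quantitatively the two arguments give the same constant (both end up with $k=\lfloor \epsilon^{-1}/8\rfloor$), and neither dominates the other.
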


The \emph{simple knapsack problem} is a special case of the general knapsack problem in which the value of each item is proportional to its weight.
For this problem, we obtain a deterministic algorithm with a better average sensitivity:
\begin{theorem}\label{thm:intro-simple}
  For any $\epsilon>0$, there is a deterministic $(1-\epsilon)$-approximation algorithm for the simple knapsack problem with time complexity polynomial in $\epsilon^{-1}n$ and average sensitivity $O(\epsilon^{-1})$, where $n$ is the number of items.
\end{theorem}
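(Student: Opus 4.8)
The plan is to carry over the high-level strategy behind Theorem~\ref{thm:intro} --- threshold the items by weight, handle the two classes separately, and recombine --- while exploiting the fact that here the value of an item equals its weight, which lets us drop the randomness. Call an item \emph{large} if $w(i)>\epsilon W$ and \emph{small} otherwise, and write $L,S$ for the two classes. The driving observation is that any feasible solution contains fewer than $\epsilon^{-1}$ large items; hence \emph{any} deterministic rule for choosing the large part of the output contributes only $O(\epsilon^{-1})$ to the average sensitivity, since deleting one item turns a set of size $<\epsilon^{-1}$ into another set of size $<\epsilon^{-1}$.

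Concretely, the algorithm I would analyze runs a deterministic FPTAS for subset-sum on $L$ with capacity $W$ (e.g.\ the classical dynamic program over achievable sums with a fixed trimming and tie-breaking rule), obtaining a feasible $L^\dagger\subseteq L$ with $w(L^\dagger)\ge(1-\epsilon)\mu$, where $\mu$ is the largest weight of a feasible subset of $L$; it sets the residual budget $B:=W-w(L^\dagger)$ and then greedily adds small items in nonincreasing order of weight, stopping as soon as either all of $S$ is taken or the accumulated weight is within $\epsilon W$ of $B$ (so $S^\dagger=S$ if $w(S)\le B$, and otherwise $S^\dagger$ is the shortest nonincreasing prefix of $S$ of weight at least $B-\epsilon W$); the output is $L^\dagger\cup S^\dagger$. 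This small-item step is the deterministic counterpart of the ``greedy variant'' used in Theorem~\ref{thm:intro}. Feasibility is immediate since $w(L^\dagger)+w(S^\dagger)\le w(L^\dagger)+B=W$. For the approximation ratio I would split into two cases: if $w(S)\le B-\epsilon W$, the output uses all of $S$ with a $(1-\epsilon)$-optimal large part, so its weight is at least $(1-\epsilon)(\mu+w(S))\ge(1-\epsilon)\OPT(V)$ (using $\OPT(V)\le\mu+w(S)$); otherwise the stopping rule gives output weight at least $(B-\epsilon W)+w(L^\dagger)=W-\epsilon W\ge(1-\epsilon)\OPT(V)$.

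For the average sensitivity I would decompose over the deleted item $i$. If $i\in L\setminus L^\dagger$, then $L^\dagger$ and $B$ are unchanged --- this is where I need the large-item FPTAS arranged so that removing an item it does not use leaves its output intact --- and $S$ is unchanged, so the whole output is unchanged. If $i\in L^\dagger$, the symmetric difference may be $\Omega(n)$ because $B$ can swing and pull in or drop many small items, but there are fewer than $\epsilon^{-1}$ such indices, so their total contribution to $\sum_i|\mathcal A\triangle\mathcal A^i|$ is $O(n\epsilon^{-1})$, i.e.\ $O(\epsilon^{-1})$ on average. If $i\in S$, then $L^\dagger$ and $B$ are untouched and the only change is a one-element deletion from the input of the greedy small-item routine, so everything reduces to bounding the average sensitivity of the ``nonincreasing-prefix-up-to-$B-\epsilon W$'' rule on a set of items each of weight at most $\epsilon W$.

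This reduction is where I expect the bulk of the work. If $w(S)\le B$ the rule outputs all of $S$ and a deletion changes it by one element, so assume $w(S)>B$. Deleting a small item outside the chosen prefix $S^\dagger$ changes nothing; deleting $u_j\in S^\dagger$ either leaves the truncated prefix still of weight $\ge B-\epsilon W$ (symmetric difference $1$), or forces a \emph{refill} --- we append the next small items $u_{t+1},u_{t+2},\dots$ until the weight recovers (or until $w(S)$ has dropped to $\le B$, in which case we take everything remaining). The refilled block has total weight less than $2\epsilon W$, but its cardinality $r_j$ can be large when the items just past the prefix are much lighter than $u_j$; the crux is to show $\sum_{u_j\in S^\dagger}(1+r_j)=O(n\epsilon^{-1})$. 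Here I would double count pairs $(u_j,u_\ell)$ with $u_\ell$ appended on deletion of $u_j$: such a pair requires $w(u_j)$ to exceed the residual weight accumulated up to $u_\ell$, which confines $u_\ell$ to a window of total weight at most $2\epsilon W$ beyond the original prefix; then, using that all appended items are among the ``excluded'' small items, that these weigh at most $\epsilon W$ each (so their number forces $n$ to be large when the lightest items are light), and that the prefix itself weighs at most $W$, I would balance the per-$u_j$ refill demand against the global bound to get $O(\epsilon^{-1})$. The secondary technical point, flagged above, is to realize the large-item FPTAS as a rule whose output --- and in particular its weight, hence $B$ --- is stable under deleting an unused item despite subset-sum optima being highly non-unique; I would handle this by carrying the rounding scale of the FPTAS into the definition of $B$ and arguing $B$ moves by $O(\epsilon W)$ except for the $O(\epsilon^{-1})$ ``essential'' large items, whose cost is already absorbed by the second case above.
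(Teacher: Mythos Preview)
Your high-level decomposition (large items by brute force, small items greedily) is exactly the paper's. The gap is in the small-item rule: you sort small items in \emph{nonincreasing} weight and stop once the accumulated weight reaches $B-\epsilon W$. This does \emph{not} give $O(\epsilon^{-1})$ average sensitivity; it can be $\Theta(\epsilon^{-2})$.

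Here is a concrete instance with no large items (so $B=W=1$). Take $t\approx (1-\epsilon)/\epsilon^2$ ``heavy-small'' items each of weight $\epsilon^2$, and $m=t$ ``light-small'' items each of weight $b=\epsilon^2/m\approx\epsilon^4$. In nonincreasing order the prefix $S^\dagger$ is exactly the $t$ heavy-small items, of total weight $1-\epsilon=B-\epsilon W$. Deleting any one of them drops the weight by $\epsilon^2$, and the refill must absorb all $m$ light-small items (their total weight is exactly $\epsilon^2$). Hence each of the $t$ prefix deletions causes symmetric difference $1+m$, while each of the $m$ non-prefix deletions causes $0$. The average sensitivity is
\[
\frac{t(1+m)}{t+m}\;\approx\;\frac{t}{2}\;=\;\Theta(\epsilon^{-2}).
\]
Your double-counting sketch does correctly confine the refilled items to a window of total weight $<\epsilon W$, but that window can still contain $\Theta(n)$ items, each of which is counted for $\Theta(n)$ prefix deletions; nothing forces $\sum_j r_j=O(n\epsilon^{-1})$.

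The paper fixes this with a single change: sort the small items in \emph{nondecreasing} weight and greedily take the longest prefix fitting in the residual budget. Then deleting $u_j$ from the prefix removes weight $w(u_j)\le w(u_{k+2})$, so appending $u_{k+1}$ and $u_{k+2}$ already overshoots; the new output differs from the old in at most $\{u_j,u_{k+1}\}$, i.e.\ symmetric difference $\le 2$ for every small-item deletion. Combined with the trivial $|S|\le\epsilon^{-1}$ bound for deletions inside the chosen large set (and $0$ for unused large items, since the paper computes $\OPT(L)$ exactly with a fixed tie-break rather than via an FPTAS), this gives average sensitivity $\le\epsilon^{-1}+2$ directly, with no balancing argument needed.
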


Finally, we discuss the connection to dynamic algorithms. In the \emph{incremental dynamic knapsack problem}, items arrive one by one, and the goal is to maintain an approximate solution for the current set of items.
The amortized recourse of a dynamic algorithm is the average Hamming distance between the outputs before and after an item arrives.
More formally, the \emph{amortized recourse} of a deterministic algorithm over a stream $v_1,\dots, v_n$ of items is defined as
$\frac{1}{n}\sum_{i=1}^{n}|X_{i-1}\triangle X_i|$,
where $X_i$ is the solution of the algorithm right after $v_i$ is added.
The amortized recourse of a randomized algorithm is the expectation of amortized recourse over the randomness of the algorithm.
When the arrival order is random, we can construct an algorithm for the incremental dynamic knapsack problem using our stable-on-average algorithm:
\begin{theorem}\label{thm:dynamic}
For any $\epsilon>0$, there exists $f_\epsilon>0$ such that there is a $(1-\epsilon)$-approximation algorithm with amortized recourse $O(\epsilon^{-1}\log \epsilon^{-1})$ and update time $O(f_\epsilon+\log n)$ for the incremental knapsack problem in the random-order setting.
\end{theorem}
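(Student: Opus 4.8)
The plan is to bootstrap the stable-on-average algorithm $\mathcal{A}$ of Theorem~\ref{thm:intro} into a dynamic algorithm, using the standard correspondence between average sensitivity and recourse in the random-order model (cf.~\cite{Varma2021}). The key observation is that, when items arrive in uniformly random order, the prefix $V_{i-1}=\{v_1,\dots,v_{i-1}\}$ is obtained from $V_i=\{v_1,\dots,v_i\}$ by deleting a uniformly random element: conditioned on the \emph{set} $V_i=U$, the last-arrived item $v_i$ is uniform over $U$. Writing $c=O(\epsilon^{-1}\log\epsilon^{-1})$ for the average-sensitivity bound of $\mathcal{A}$, this gives, for every $i$,
\begin{align*}
  \E\bigl[\EM(\mathcal{A}(V_{i-1}),\mathcal{A}(V_i))\bigr]
  = \E_{U}\Bigl[\tfrac{1}{|U|}\sum_{j\in U}\EM\bigl(\mathcal{A}(U),\mathcal{A}(U\setminus\{j\})\bigr)\Bigr]
  \le c,
\end{align*}
since the inner average is exactly the average sensitivity of $\mathcal{A}$ on $(U,W)$, which is at most $c$ for every instance.

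I would then define the dynamic algorithm so that after processing $v_1,\dots,v_i$ its output $X_i$ is distributed exactly as $\mathcal{A}(V_i)$. Initialize $X_0=\emptyset$; when $v_i$ arrives, the algorithm knows $V_{i-1}$ and $V_i=V_{i-1}\cup\{v_i\}$, and draws $X_i$ from the conditional distribution, given that its second coordinate equals the current $X_{i-1}$, of a coupling of $(\mathcal{A}(V_i),\mathcal{A}(V_{i-1}))$ whose expected Hamming distance (essentially) matches $\EM(\mathcal{A}(V_i),\mathcal{A}(V_{i-1}))$. Because the second marginal of such a coupling is $\mathcal{A}(V_{i-1})$, a straightforward induction on $i$ shows $X_i\sim\mathcal{A}(V_i)$ conditioned on the arrival order; hence Theorem~\ref{thm:intro} guarantees $\E[v(X_i)]\ge(1-\epsilon)\OPT(V_i)$ at every step and $X_i$ is always feasible, while $\E[\,|X_{i-1}\triangle X_i|\mid\text{order}\,]=\EM(\mathcal{A}(V_i),\mathcal{A}(V_{i-1}))$ up to the coupling slack. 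Combining with the display and averaging over $i$ yields amortized recourse $\frac1n\sum_{i=1}^n\E|X_{i-1}\triangle X_i|\le c=O(\epsilon^{-1}\log\epsilon^{-1})$.

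The main obstacle is the update time: an $\EM$-optimal coupling is only an existence statement, and sampling from it naively would require handling the distributions $\mathcal{A}(V_{i-1})$ and $\mathcal{A}(V_i)$, each supported on exponentially many sets. To obtain the $O(f_\epsilon+\log n)$ bound I would open up the construction of $\mathcal{A}$ and realize a specific, efficiently samplable coupling compositionally from its ingredients --- indeed the bounds on $\EM$ used in the proof of Theorem~\ref{thm:intro} are themselves witnessed by explicit couplings, and the point is that those couplings can be sampled incrementally. Concretely: (i) keep the items in a balanced search tree keyed by value and by value-to-weight ratio, so the running estimate of $\OPT(V_i)$ (a power of $1+\epsilon$, monotone nondecreasing along the stream) and the greedy component on the small items can be updated and re-coupled in $O(\log n)$ time; (ii) reuse the internal random seed of $\mathcal{A}$ between consecutive instances, so that with the coupled randomness the small-item/greedy part of the solution moves by only $O(1)$ amortized per arrival, exactly as in the sensitivity proof; (iii) maintain the $f_\epsilon$-many candidate sets of large items fed to the exponential mechanism --- with $f_\epsilon$ depending on $\epsilon$ alone, thanks to the candidate-reduction step in the proof of Theorem~\ref{thm:intro} --- whose scores change slowly between $V_{i-1}$ and $V_i$, so that re-sampling under the matching coupling costs $O(f_\epsilon)$. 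The one delicate point is reclassification of items between ``small'' and ``large'' when the $\OPT$-estimate crosses a new power of $1+\epsilon$: a single arrival may reclassify many items, but since $\OPT(V_i)$ only grows, each item crosses the threshold at most once, so the reclassification work summed over the whole stream is $O(n(f_\epsilon+\log n))$, i.e.\ $O(f_\epsilon+\log n)$ amortized; and since reclassification only alters the algorithm's internals while preserving $X_i\sim\mathcal{A}(V_i)$, it does not affect the recourse bound. Putting (i)--(iii) together gives a $(1-\epsilon)$-approximation with amortized recourse $O(\epsilon^{-1}\log\epsilon^{-1})$ and amortized update time $O(f_\epsilon+\log n)$, as claimed.
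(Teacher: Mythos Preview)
Your recourse argument is essentially the paper's: Proposition~\ref{prop:boundedrecourse} formalizes exactly your observation that in the random-order model the $(i{-}1)$-th prefix is the $i$-th prefix with a uniformly random element deleted, and then couples consecutive outputs via the explicit transportation witnessing the sensitivity bound (what the paper calls a \emph{one-way computable} transportation). So the $(1-\epsilon)$-approximation and the $O(\epsilon^{-1}\log\epsilon^{-1})$ amortized recourse follow just as you outline.

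The gap is in the update-time claim for the large-item component. Your step~(iii) relies on there being only $f_\epsilon$ candidate sets $A_t$ fed to the exponential mechanism, but having few candidates does not by itself make \emph{computing} each $A_t$ cheap: $A_t$ is the minimum-weight subset of large items with value in $[tc,(t{+}1)c)$, and there may be arbitrarily many large items (many items can have value $\geq \epsilon\cdot\FOPT(V)$ yet tiny weight), so this is still a nontrivial combinatorial subproblem at each step. The rounding inside Theorem~\ref{thm:intro} handles it by a dynamic program whose running time is polynomial in $n$, not $O(f_\epsilon)$. The paper's fix, which you do not mention, is a different preprocessing: round each item's \emph{value} to a power of $(1-\epsilon)^{-1}$. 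Then large items take only $O(\log_{(1-\epsilon)^{-1}}\epsilon^{-1})$ distinct values; since any feasible set uses at most $\epsilon^{-1}$ items, one only needs the $\epsilon^{-1}$ lightest items in each value class, and the entire large-item search becomes brute force over a universe whose size depends only on $\epsilon$, giving the $O(f_\epsilon)$ bound. Your rounding of the \emph{estimate of $\OPT$} to a power of $1{+}\epsilon$ is orthogonal and does not substitute for this. With this one ingredient added, your outline matches the paper's.
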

The fact that the output of our stable-on-average algorithm does not depend on the arrival order of the items but depends only on the current set of items implies that this result can also be applied to the \emph{decremental knapsack problem}, in which we are to maintain a good solution while the items are removed one by one from the initial set of items.

\subsection{Related Work}

\subsubsection{Knapsack Problem}

The \emph{knapsack problem} is one of the $21$ problems that was first proved to be NP-hard by Karp~\cite{karp1972reducibility}, but admits a pseudo-polynomial time algorithm via dynamic programming~\cite{kellerer2004knapsack}. Ibarra and Kim~\cite{ibarra1975fast} proposed the first fully polynomial-time approximation scheme (FPTAS) for the knapsack problem. The main idea is to round the values of items into multiples of a small value, and run a dynamic programming algorithm on the resulting instance. Since then, several faster algorithms have been developed~\cite{chan2018approximation,jin2019improved,kellerer1999new,kellerer2004improved,lawler1979fast,rhee2015faster}.

The knapsack problem has been studied in the context of online settings.
In the most basic \emph{online knapsack problem}, items arrive one by one, and when an item arrives, it is irrevocably added it to the knapsack or discarded. The difference from our dynamic model is that, in the online knapsack problem, once an item is accepted or rejected, the decision cannot be reverted.
However, this is not the case in our model.
While Marchetti-Spaccamela and Vercellis~\cite{marchetti1995stochastic} showed that no algorithm has a bounded competitive ratio for this problem in general, Buchbinder and Naor~\cite{buchbinder2009online} proposed an $O(\log(U/L))$-competitive algorithm for the case where all weights are much smaller than the weight limit, where $U$ and $L$ are the upper and lower bounds on the \emph{efficiency} $w(i)/v(i)$. Zhou, Chakrabarty, and Lukose showed that this bound is tight~\cite{zhou2008budget}.
Several variations of the online knapsack problem have been investigated, such as the \emph{removable online knapsack problem}~\cite{iwama2002removable}, which allows removing items added from the knapsack, the \emph{knapsack secretary problem}~\cite{babaioff2007knapsack}, which concerns items arriving in random order, and the \emph{online knapsack problem with a resource buffer}~\cite{han2019online}, in which we have a buffer that can contain higher weights of items than the weight limit, and at the end, we can select a subset of the stored items as the output.

The problem most closely related to ours is the \emph{dynamic setting}.
In this problem, the goal is to maintain a good solution while items are added or deleted dynamically.
In this model, there is no restriction on adding items that were previously deleted from the knapsack (and vice versa).
While designing fast algorithms that maintain an exact solution is a major issue of focus in the context of data structures~\cite{eppstein1999dynamic}, the problem of maintaining approximate solutions has also been investigated in a variety of problems including the shortest path problem~\cite{shiloach1981line,van2019dynamic}, maximum matching problem~\cite{gupta2013fully,onak2010maintaining}, and set cover problem~\cite{gupta2017online}.
Recently, dynamic approximation algorithms with bounded recourse have been studied for several problems including the bin-packing problem~\cite{feldkord2018fully}, set cover problem~\cite{gupta2017online}, and submodular cover problem~\cite{gupta2020fully}.
As for the knapsack problem, a recent work of B{\"o}hm et al.~\cite{bohm2020fully} presented an algorithm that maintains a $(1-\epsilon)$-approximate solution with polylogarithmic update time in the fully dynamic setting, where both additions and deletions of items are allowed.
Because this model considers both additions and deletions, the solution must be drastically changed even after a single update operation. Therefore, they proposed an algorithm that maintains a data structure from which a good solution can be recovered (with a time proportional to the size of the output), rather than explicitly maintaining the solution as a set of items.

\subsubsection{Average Sensitivity}
Murai and Yoshida~\cite{Murai2019} introduced the notion of average sensitivity for centralities, i.e., importance of nodes and edges, on networks to compare various notions of centralities.
The notion of average sensitivity for graph problems was recently introduced by Varma and Yoshida~\cite{Varma2021}, in which they studied various graph problems, including the minimum spanning tree problem, minimum cut problem, maximum matching problem, and minimum vertex cover problem.
Zhou and Yoshida~\cite{Yoshida2021} presented a $(1-\epsilon)$-approximation algorithm for the maximum matching problem with sensitivity solely depending on $\epsilon$, where the \emph{(worst-case) sensitivity} is defined as~\eqref{eq:average-sensitivity} with the average replaced by the maximum over $i$.
Kumabe and Yoshida~\cite{kumabe2022average} presented a stable-on-average algorithm for the maximum weight chain problem on directed acyclic graphs. The approximation ratio of their algorithm is $(1-\epsilon)$ and average sensitivity is polylogarithmic in the number of vertices in the graph, which roughly corresponds to the number of states of the dynamic programming.
They designed stable-on-average algorithms for a wide range of dynamic programming problems, including the knapsack problem with integer weights, by reducing them to the maximum weight chain problem.
Peng and Yoshida~\cite{Peng2020} analyzed the average sensitivity of spectral clustering and showed that it is proportional to $\lambda_2/\lambda_3^2$, where $\lambda_i$ is the $i$-th smallest eigenvalue of the (normalized) Laplacian of the input graph.
Intuitively, this bound indicates that spectral clustering is stable on average when the input graph can be well partitioned into two communities but not into three.
This implies that spectral clustering is stable on average at relevant instances.

\subsubsection{Differential Privacy}

\emph{Differential privacy}, proposed by Dwork \emph{et al.}~\cite{dwork2006calibrating} is related to average sensitivity.
For $\epsilon>0$ and $\delta>0$, an algorithm is \emph{$(\epsilon,\delta)$-differential private} if for any set $S$ and a pair of two inputs $V$ and its ``neighbor'' $V'$, we have
\begin{align*}
    \Pr[\mathcal{A}(V)\in S]\leq \exp(\epsilon)\Pr[\mathcal{A}(V')\in S]+\delta,
\end{align*}
where $\mathcal{A}(V)$ is the output distribution of the algorithm for the input $V$.
In the context of the knapsack problem, it is natural to regard that $V$ and  $V'$ are neighbors when $V'$ is obtained from $V$ by deleting one item (or vice versa).

It is known that $(\epsilon,\delta)$-differential privacy implies that the total variation distance between $\mathcal{A}(V)$ and $\mathcal{A}(V')$ is at most $\epsilon+\delta$~\cite{vadhan2017complexity}.
Here, the \emph{total variation distance} $\TV(\mathcal{S},\mathcal{S}')$ between two distributions $\mathcal{S},\mathcal{S}'$ is given by $
  \min_{\mathcal{D}} \E_{(S,S') \sim \mathcal{D}} 1_{S=S'}$,
where $1_P$ is the indicator of a predicate $P$, i.e., it is $1$ if $P$ holds and $0$ otherwise, and the minimum is taken over distributions of a pair of sets such that its marginal distributions on the first and second coordinates are equal to $\mathcal{S}$ and $\mathcal{S}'$, respectively. The total variation distance differs from the earth mover's distance in that we measure the distance of sets by the trivial zero-one distance, and not by the Hamming distance.

\subsection{Organization}

The rest of this paper is organized as follows. In~\cref{sec:prelim}, we review the basics of the knapsack problem, especially focusing on the fractional knapsack problem.
In~\cref{sec:small-weight}, we present a stable-on-average $(1-O(\epsilon))$-approximation algorithm when all items have value at most $\epsilon$.
In~\cref{sec:general}, we present an (inefficient) stable-on-average algorithm for the general setting. We discuss on improving the time complexity to obtain an FPRAS in~\cref{sec:fptas}.
In~\cref{sec:lower_bound}, we prove the $O(\epsilon^{-1})$ lower bound on the average sensitivity of $(1-\epsilon)$-approximation algorithms.
In~\cref{sec:simple}, we present a deterministic $(1-\epsilon)$-approximation algorithm with $O(\epsilon^{-1})$ sensitivity for the simple knapsack problem.
Finally, in~\cref{sec:recourse}, we discuss the application to the dynamic knapsack problem in the random-order setting.



\section{Basic Facts about the Knapsack Problem}\label{sec:prelim}

For an instance $(V,W)$ of the knapsack problem, let $\OPT(V,W)$ be the optimal value for the instance.
As $W$ is often set to $1$, we define $\OPT(V) := \OPT(V,1)$ for convenience.
In this work, we assume that each item has a unique (comparable) identifier that remains unchanged with the deletion of other items. This naturally defines the ordering of items. In our algorithms, we implicitly use this ordering for the tiebreak. For example, when we sort items, we use a sorting algorithm that is stable with respect to this ordering.
For a set $S \subseteq V$, let $v(S)$ and $w(S)$ denote $\sum_{i \in S}v(i)$ and $\sum_{i \in S}w(i)$, respectively.


We consider the following fractional relaxation of the knapsack problem, which we call the \emph{fractional knapsack problem}:
\begin{align*}
    \begin{array}{lll}
    \text{maximize} & \displaystyle \sum_{i\in V}v(i)x(i), \\
    \text{subject to} & \displaystyle \sum_{i\in V}w(i)x(i)\leq W,\\
    & 0\leq x(i)\leq 1\quad (i\in V).
    \end{array}
\end{align*}
Then, we denote the optimal value of this problem by $\FOPT(V,W)$.
We also define $\FOPT(V) := \FOPT(V,1)$, and the \emph{efficiency} of an item $i$ is $v(i)/w(i)$.
The following is well known:
\begin{lemma}[\cite{dantzig1957discrete}]\label{lem:fractional-optimal-solution}
    Suppose that items in $V$ are sorted in non-increasing order of efficiency, i.e., $v(1)/w(1)\geq \cdots \geq v(n)/w(n)$.
    Let $k$ be the largest index with $w(1)+\cdots+w(k)\leq 1$.
    Then, $\FOPT(V,1)$ is achieved by the solution
    \[
        x(i)=\begin{cases}
            1 & (i=1,\dots, k), \\
            \frac{1-\sum_{j=1}^{k}w(j)}{w(i+1)} &  (i=k+1),\\
            0 & (i \geq k+2).
        \end{cases}
    \]
\end{lemma}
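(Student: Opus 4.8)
The plan is to establish the two inequalities $\FOPT(V,1) \ge v^{\top}\hat x$ and $\FOPT(V,1) \le v^{\top}\hat x$, where $\hat x$ denotes the solution displayed in the statement; the first is immediate from feasibility of $\hat x$, and the second I would obtain from weak LP duality. Throughout I would assume without loss of generality that $v(i) > 0$ for every $i$ (an item with $v(i) \le 0$ is assigned $0$ in some optimal fractional solution and may be deleted), and I would first dispose of the degenerate case $w(V) \le 1$: then $k = n$, there is no item $k+1$, $\hat x \equiv \mathbf 1$, and since every feasible solution $x$ satisfies $x \le \mathbf 1$ and all values are positive, $\hat x$ is optimal. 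So assume $w(V) > 1$; then $k < n$ and, by maximality of $k$, $\sum_{j \le k} w(j) \le 1 < \sum_{j \le k+1} w(j)$, so $\theta := \bigl(1 - \sum_{j \le k} w(j)\bigr)/w(k+1) \in [0,1)$. Hence every coordinate of $\hat x$ lies in $[0,1]$ and $w^{\top}\hat x = \sum_{j \le k} w(j) + \theta\, w(k+1) = 1$, so $\hat x$ is feasible and $\FOPT(V,1) \ge v^{\top}\hat x = \sum_{j \le k} v(j) + \theta\, v(k+1)$.

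For the matching upper bound I would write the LP dual of the fractional knapsack LP,
\[
  \min\ \ \lambda + \sum_{i\in V}\mu(i)\qquad\text{s.t.}\qquad \lambda\, w(i) + \mu(i) \ge v(i)\ \ (i\in V),\quad \lambda \ge 0,\quad \mu \ge 0,
\]
and plug in $\lambda^{\star} := v(k+1)/w(k+1)$, the efficiency of the fractional item, together with $\mu^{\star}(i) := \max\{0,\ v(i) - \lambda^{\star} w(i)\}$. Nonnegativity and the constraint $\lambda^{\star} w(i) + \mu^{\star}(i) \ge v(i)$ hold by construction, so this point is dual feasible. Now the efficiency ordering enters: since $v(1)/w(1) \ge \cdots \ge v(n)/w(n)$ and $\lambda^{\star} = v(k+1)/w(k+1)$, we have $v(i) \ge \lambda^{\star} w(i)$ for $i \le k+1$ and $v(i) \le \lambda^{\star} w(i)$ for $i \ge k+1$; hence $\mu^{\star}(i) = v(i) - \lambda^{\star} w(i)$ for $i \le k$ and $\mu^{\star}(i) = 0$ for $i \ge k+1$. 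The dual objective is therefore $\lambda^{\star} + \sum_i \mu^{\star}(i) = \lambda^{\star}\bigl(1 - \sum_{j \le k} w(j)\bigr) + \sum_{j \le k} v(j) = \theta\, v(k+1) + \sum_{j \le k} v(j) = v^{\top}\hat x$. By weak LP duality $\FOPT(V,1) \le \lambda^{\star} + \sum_i \mu^{\star}(i) = v^{\top}\hat x$, which together with the first inequality shows $\hat x$ attains $\FOPT(V,1)$.

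As this is a classical statement, there is no real difficulty; the only points that need care are the degenerate case $w(V) \le 1$, in which item $k+1$ does not exist, and the choice of $\lambda^{\star}$, which is forced to equal the efficiency of the fractional item precisely so that $\mu^{\star}$ vanishes from index $k+1$ on and the dual value coincides with $v^{\top}\hat x$ (this is exactly complementary slackness for $\hat x$). A reader preferring a duality-free argument can replace the duality step by an exchange argument: take any feasible $x^{\star}$, normalized so that $w^{\top}x^{\star} = 1$ (which does not decrease the objective, since $v \ge 0$), and repeatedly move weight from a coordinate $j$ with $x^{\star}(j) > 0$ to a coordinate $i < j$ with $x^{\star}(i) < 1$; each move leaves $v^{\top}x^{\star}$ unchanged or increases it because $v(i)/w(i) \ge v(j)/w(j)$, and saturating one of the two coordinates at every step (tracked, say, by the potential $\sum_i |w(i)x^{\star}(i) - w(i)\hat x(i)|$) drives $x^{\star}$ to $\hat x$. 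I would present the duality proof and mention the exchange argument only as a remark.
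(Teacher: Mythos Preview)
Your argument is correct: feasibility of $\hat x$ is immediate, and the dual certificate $\lambda^\star = v(k+1)/w(k+1)$, $\mu^\star(i)=\max\{0,v(i)-\lambda^\star w(i)\}$ is exactly the complementary-slackness pair for $\hat x$, so weak duality gives the matching upper bound. The degenerate case $w(V)\le 1$ is handled cleanly.

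There is nothing to compare against, however: the paper does not prove this lemma at all. It is stated as a classical fact with a citation to Dantzig~\cite{dantzig1957discrete} and used as a black box (to derive Lemmas~\ref{lem:frac_efficient} and~\ref{lem:opt2fopt}). Your duality proof is one of the standard arguments; the exchange argument you sketch at the end is the other common one and is closer in spirit to Dantzig's original greedy justification. Either would be fine as a supplementary proof, but for the purposes of this paper none is required.
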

Using Lemma~\ref{lem:fractional-optimal-solution}, $\FOPT(V,W)$ can be computed in polynomial time.
The following is a direct consequence of Lemma~\ref{lem:fractional-optimal-solution}:
\begin{lemma}\label{lem:frac_efficient}
    For $W\leq 1$, we have $\FOPT(V,W)\geq W\cdot \FOPT(V)$.
\end{lemma}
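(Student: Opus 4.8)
The plan is to exhibit an explicit feasible solution for the instance $(V,W)$ by uniformly scaling down an optimal fractional solution for $(V,1)$. First I would unfold the definition $\FOPT(V)=\FOPT(V,1)$ and fix an optimal fractional solution $x^*\colon V\to[0,1]$ with $\sum_{i\in V}w(i)x^*(i)\le 1$ and $\sum_{i\in V}v(i)x^*(i)=\FOPT(V)$; such a solution exists because the fractional knapsack LP is feasible and bounded (one may, if desired, take the explicit optimizer described in Lemma~\ref{lem:fractional-optimal-solution}).

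Next I would set $y(i):=W\cdot x^*(i)$ for every $i\in V$ and check that $y$ is feasible for the fractional knapsack problem with weight limit $W$. The weight constraint is immediate: $\sum_{i\in V}w(i)y(i)=W\sum_{i\in V}w(i)x^*(i)\le W$. The box constraints hold because $0\le x^*(i)\le 1$ together with $0<W\le 1$ give $0\le y(i)=Wx^*(i)\le x^*(i)\le 1$; this is the only place the hypothesis $W\le 1$ enters. Since $y$ is feasible for $(V,W)$, the optimum dominates its value, so $\FOPT(V,W)\ge \sum_{i\in V}v(i)y(i)=W\sum_{i\in V}v(i)x^*(i)=W\cdot\FOPT(V)$, which is exactly the claim.

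I do not expect any real obstacle here: the argument is a one-line scaling observation, and the only subtlety worth spelling out is that $W\le 1$ is precisely what keeps the scaled variables inside $[0,1]$. If one insists on deriving the bound "directly" from Lemma~\ref{lem:fractional-optimal-solution} as the surrounding prose suggests, one could instead observe that the greedy solution there for budget $W$ coordinatewise dominates the $W$-scaled greedy solution for budget $1$; but the scaling argument above is cleaner and avoids any case analysis on the breakpoint index $k$.
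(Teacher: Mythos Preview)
Your proof is correct. The paper itself does not spell out a proof of this lemma, merely asserting it as a direct consequence of Lemma~\ref{lem:fractional-optimal-solution}; your scaling argument is exactly the kind of one-line verification the paper is gesturing at, and your closing remark already identifies the alternative of reading it off the explicit greedy solution.
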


The followings are also folklore results.
\begin{lemma}\label{lem:opt2fopt}
    $\OPT(V)\leq \FOPT(V)\leq 2\OPT(V)$ holds.
\end{lemma}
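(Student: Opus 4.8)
The plan is to prove the two inequalities separately, both being immediate from the definitions and Lemma~\ref{lem:fractional-optimal-solution}, with $W$ fixed to $1$ throughout.

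For $\OPT(V)\le \FOPT(V)$, I would simply note that the $0$-$1$ indicator vector of any feasible knapsack solution $S$ (i.e.\ one with $w(S)\le 1$) is a feasible point of the fractional knapsack program with objective value $v(S)$. Applying this to an optimal integral solution yields $\OPT(V)\le \FOPT(V)$.

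For $\FOPT(V)\le 2\OPT(V)$, I would invoke Lemma~\ref{lem:fractional-optimal-solution}: sort the items in non-increasing order of efficiency and let $k$ be the largest index with $w(1)+\cdots+w(k)\le 1$. If $k=n$, then all items fit, so $\FOPT(V)=v(V)=\OPT(V)$ and we are done. Otherwise, the lemma gives $\FOPT(V)=v(1)+\cdots+v(k)+x(k+1)\,v(k+1)$ with $x(k+1)\in[0,1]$, hence $\FOPT(V)\le v(\{1,\dots,k\})+v(k+1)$. Now $\{1,\dots,k\}$ is a feasible integral solution by the choice of $k$, and $\{k+1\}$ is feasible because every individual item satisfies $w(i)\le W=1$ by assumption. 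Each of these two sets therefore has value at most $\OPT(V)$, and summing the two bounds gives $\FOPT(V)\le 2\OPT(V)$.

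Since everything reduces to reading off the explicit structure of the fractional optimum, there is no substantive obstacle; the only points needing a little care are the boundary case $k=n$ (where there is no fractionally-selected ``split'' item) and the use of the standing assumption $w(i)\le W$ to guarantee that the singleton $\{k+1\}$ is a feasible knapsack solution.
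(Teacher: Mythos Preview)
Your proposal is correct and follows essentially the same argument as the paper's own proof: both split into the two inequalities, handle the first as immediate, and for the second use the greedy index $k$ from Lemma~\ref{lem:fractional-optimal-solution}, distinguish the case $k=n$, and otherwise bound $\FOPT(V)\le v(\{1,\dots,k\})+v(k+1)\le \OPT(V)+\OPT(V)$. Your write-up is in fact slightly more explicit than the paper's in justifying why $\{k+1\}$ is feasible via the standing assumption $w(i)\le W$.
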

\begin{proof}
$\OPT(V)\leq \FOPT(V)$ is clear.
Now we show the other inequality.
Let $k$ be the largest index with $w(1)+\dots+w(k)\leq 1$. If $k=n$, the statement is clear. Otherwise, we have $\FOPT(V)\leq (v(1)+\dots v(k))+v(k+1)\leq \OPT(V)+\OPT(V)=2\OPT(V)$.
\end{proof}

\begin{lemma}\label{lem:fracdiffsum}
For an item set $U$ and weight limit $W$, we have
\[
    \sum_{i\in U}(\FOPT(U,W)-\FOPT(U\setminus \{i\},W))\leq \FOPT(U,W).
\]
\end{lemma}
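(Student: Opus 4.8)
The idea is to bound each summand $\FOPT(U,W)-\FOPT(U\setminus\{i\},W)$ by the ``LP contribution'' of item $i$ to the fractional optimum, and then sum. First I would fix an optimal solution $x^*$ of the fractional knapsack LP on the instance $(U,W)$; such a solution exists because the feasible region is a nonempty compact polytope, and indeed Lemma~\ref{lem:fractional-optimal-solution} exhibits one explicitly. By optimality, $\sum_{i\in U}v(i)x^*(i)=\FOPT(U,W)$, while $\sum_{i\in U}w(i)x^*(i)\le W$ and $0\le x^*(i)\le 1$ for every $i$.

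Next, for a fixed $i\in U$, consider the vector $\tilde x$ obtained from $x^*$ by zeroing out the $i$-th coordinate. It is feasible for the fractional knapsack LP on $(U\setminus\{i\},W)$: the box constraints clearly still hold, and its total weight is $\sum_{j\ne i}w(j)x^*(j)\le W$. Its objective value equals $\FOPT(U,W)-v(i)x^*(i)$, so $\FOPT(U\setminus\{i\},W)\ge \FOPT(U,W)-v(i)x^*(i)$, i.e.
\[
\FOPT(U,W)-\FOPT(U\setminus\{i\},W)\le v(i)x^*(i).
\]
Summing over $i\in U$ then yields
\[
\sum_{i\in U}\bigl(\FOPT(U,W)-\FOPT(U\setminus\{i\},W)\bigr)\le\sum_{i\in U}v(i)x^*(i)=\FOPT(U,W),
\]
which is exactly the claimed inequality.

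The argument is short, and I do not anticipate a genuine obstacle; the only step requiring a moment's thought is the feasibility of $\tilde x$ for the instance $(U\setminus\{i\},W)$, which holds simply because deleting an item only relaxes the weight constraint. (If one additionally wants each summand to be nonnegative---which is not needed for the lemma---note that any feasible solution for $(U\setminus\{i\},W)$ extends to a feasible solution for $(U,W)$ of the same value by appending $x(i)=0$, so $\FOPT(U\setminus\{i\},W)\le\FOPT(U,W)$.)
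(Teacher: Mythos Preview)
Your proof is correct and follows essentially the same approach as the paper: fix an optimal fractional solution $x^*$, observe that zeroing out the $i$-th coordinate gives a feasible solution for $(U\setminus\{i\},W)$ with value $\FOPT(U,W)-v(i)x^*(i)$, and sum the resulting inequalities over $i$.
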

\begin{proof}
Let $x \in \mathbb{R}^U$ be an optimal solution to the fractional knapsack problem for the instance $(U,W)$.
For $i\in U$, we have $\FOPT(U\setminus \{i\},W)\geq \FOPT(U,W)-x(i)v(i)$ because $x-x(i)e_i$ (regarded as a vector in $\mathbb{R}^{U \setminus \{i\}}$) is a feasible solution to the instance $(U\setminus \{i\},W)$, where $e_i$ is the characteristic vector of $i$.
Therefore, we have
\[
    \sum_{i\in U}(\FOPT(U,W)-\FOPT(U\setminus \{i\},W))
    \leq \sum_{i\in U}x(i)v(i) = \FOPT(U,W).
    \qedhere
\]
\end{proof}

\section{Items with Small Values}\label{sec:small-weight}

In this section, we provide a stable-on-average algorithm for instances with each item having a small value.
Specifically, we show the following:
\begin{theorem}\label{thm:small}
    For any $\epsilon,\delta > 0$, there exists an algorithm for the knapsack problem with average sensitivity $O(\epsilon^{-1})$ that, given an instance $(V,W)$ with each item having value at most $\delta$, outputs a solution with value at least $(1-\epsilon)\FOPT(V,W)-\delta$.
\end{theorem}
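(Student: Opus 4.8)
The plan is to analyze the following algorithm. Sort the items of $V$ in non-increasing order of efficiency $v(i)/w(i)$, breaking ties by the fixed identifiers, so that $v(1)/w(1)\ge\cdots\ge v(n)/w(n)$, and write $p_i:=w(1)+\cdots+w(i)$. If $\epsilon\ge 1$, output $\emptyset$ (which already has value $0\ge(1-\epsilon)\FOPT(V,W)-\delta$ and sensitivity $0$); otherwise sample $\theta$ uniformly from $[(1-\epsilon)W,W]$ and output the maximal prefix $S_\theta:=\{i:p_i\le\theta\}=\{1,\dots,k_\theta\}$, where $k_\theta:=\max\{k:p_k\le\theta\}$.

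Feasibility is immediate since $w(S_\theta)=p_{k_\theta}\le\theta\le W$. For the value, Lemma~\ref{lem:fractional-optimal-solution} gives $\FOPT(V,\theta)=v(S_\theta)+\frac{\theta-p_{k_\theta}}{w(k_\theta+1)}v(k_\theta+1)\le v(S_\theta)+\delta$ (interpreting the fractional term as $0$ when $k_\theta=n$), so $v(S_\theta)\ge\FOPT(V,\theta)-\delta$; and since $\FOPT(V,\theta)\ge\frac{\theta}{W}\FOPT(V,W)$ (a rescaled form of Lemma~\ref{lem:frac_efficient}, equivalently by concavity of $\lambda\mapsto\FOPT(V,\lambda)$ with $\FOPT(V,0)=0$), we get $\FOPT(V,\theta)\ge(1-\epsilon)\FOPT(V,W)$. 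Hence $v(S_\theta)\ge(1-\epsilon)\FOPT(V,W)-\delta$ for every $\theta$.

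For the sensitivity I would couple the run on $(V,W)$ with the run on $(V\setminus\{i\},W)$ by reusing the same $\theta$ (valid, as both use $U[(1-\epsilon)W,W]$), so the average sensitivity is at most $\E_\theta\bigl[\tfrac1n\sum_{i\in V}|S_\theta\triangle S'_{\theta,i}|\bigr]$, where $S'_{\theta,i}$ is the prefix returned on $(V\setminus\{i\},W)$ with threshold $\theta$. Fix $\theta$ and set $k=k_\theta$. A short case analysis on the position of $i$ pins down $S'_{\theta,i}$: if $i\le k$, the whole prefix $\{1,\dots,k\}\setminus\{i\}$ still fits (its weight is $p_k-w(i)<\theta$), so $S'_{\theta,i}=(\{1,\dots,k\}\setminus\{i\})\cup F_i$ with $F_i:=\{j>k:p_j\le\theta+w(i)\}$, and thus $|S_\theta\triangle S'_{\theta,i}|=1+|F_i|$; if $i=k+1$, then $|S_\theta\triangle S'_{\theta,i}|\le n$; and if $i\ge k+2$, then item $k+1$ still does not fit, so $S'_{\theta,i}=S_\theta$. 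Therefore $\sum_{i\in V}|S_\theta\triangle S'_{\theta,i}|\le 2n+g(\theta)$, where $g(\theta):=\sum_{i\le k_\theta}|F_i|=\bigl|\{(i,j):p_i\le\theta<p_j\le\theta+w(i)\}\bigr|$.

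The crux is to bound $\E_\theta[g(\theta)]$: a single deletion can cause $\Omega(n)$ items to ``flood in'' (many light items fitting in the freed weight), so this quantity looks divergent without the random threshold, and the argument must exploit it. Here $\E_\theta[g(\theta)]=\frac1{\epsilon W}\int_{(1-\epsilon)W}^{W}g(x)\,dx$, and interchanging the sum and the integral,
\[
\int_{(1-\epsilon)W}^{W}g(x)\,dx=\sum_{i<j}\bigl|[\max(p_i,p_j-w(i)),\,p_j)\cap[(1-\epsilon)W,W]\bigr|.
\]
The summand for $(i,j)$ is nonzero only if $p_i\le W$ and $(1-\epsilon)W<p_j<W+w(i)$, and then it is at most $p_j-(p_j-w(i))=w(i)$. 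For a fixed $i$ at most $n$ indices $j$ lie in that range, so the inner sum is at most $n\,w(i)$; and the items $i$ with $p_i\le W$ form a prefix, so $\sum_{i:p_i\le W}w(i)\le W$. Hence $\int_{(1-\epsilon)W}^{W}g(x)\,dx\le nW$, so $\E_\theta[g(\theta)]\le n/\epsilon$, and the average sensitivity is $2+1/\epsilon=O(\epsilon^{-1})$. The main obstacle is precisely this last step — seeing that the naively unbounded flooding term is tamed once the $\frac1{\epsilon W}$ density is combined with the two weight-budget observations (each relevant pair $(i,j)$ contributes at most $w(i)$, and the eligible $i$'s have total weight at most $W$) so that the factors of $W$ cancel; verifying the case analysis for $S'_{\theta,i}$ and the concavity fact used for the approximation ratio is routine.
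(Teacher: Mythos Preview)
Your proposal is correct and follows essentially the same approach as the paper: the same randomized-threshold greedy algorithm (Algorithm~\ref{alg:smallknapsack}), the same approximation argument via Lemmas~\ref{lem:fractional-optimal-solution} and~\ref{lem:frac_efficient}, and the same coupling by a shared threshold for the sensitivity bound. The only cosmetic difference is that the paper organizes the double sum by fixing the affected item $j$ first (Lemma~\ref{lem:smallsum}) whereas you fix the deleted item $i$ first and isolate the flooding term $g(\theta)$; both routes rest on the identical observation that each contributing pair $(i,j)$ costs at most $w(i)$ of threshold-measure and that the eligible $i$'s form a prefix of total weight at most $W$.
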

This algorithm will be used in our algorithm for the general case as a subroutine in Section~\ref{sec:general}.
Note that it suffices to design an algorithm for the case in which the weight limit is $1$ because we can first divide the weight of each item by $W$ and feed the resulting instance to the algorithm.
Hence in the rest of this paper, we assume that the weight limit is $1$ (and use the symbol $W$ for a different purpose).
A pseudocode of our algorithm is given in Algorithm~\ref{alg:smallknapsack}.

\begin{algorithm}[t!]
\caption{A $(1-O(\epsilon))$-approximation algorithm with small average sensitivity}\label{alg:smallknapsack}
\Procedure{\emph{\Call{ModifiedGreedy}{$\epsilon,V$}}}{
    Reorder the items of $V$ so that $v(1)/w(1)\geq \cdots \geq v(n)/w(n)$\;
    Let $W$ be sampled from $[1-\epsilon,1]$ uniformly at random\;
    Let $t$ be the maximum index with $\sum_{i=1}^{t}w(i)\leq W$\;\label{line:smalltaket}
    \Return $\{1,\dots, t\}$.
}
\end{algorithm}


\subsection{Modified Greedy Algorithm}\label{sec:greedy}
We first consider the following greedy algorithm:
Reorder the items in $V$ in a non-increasing order of efficiency, i.e., $v(1)/w(1)\geq \cdots \geq v(n)/w(n)$, take the maximum index $t$ with $\sum_{i=1}^{t}w(i)\leq 1$, and then output the set $\{1,\dots, t\}$.
It is easy to verify that this natural algorithm outputs a solution with value at least $\OPT(V)-\delta$ when every item has value at most $\delta$.
However, its average sensitivity is unbounded.
To confirm this, we consider the proof of Proposition~\ref{prop:unbounded}.
\begin{proof}[Proof of Proposition~\ref{prop:unbounded}]
Let $k$ be an integer.
Let $V_1$ be the set of $k$ copies of an item with weight $\frac{1}{k}$ and value $\frac{1}{k}$, $V_2$ be the set of $k$ copies of an item with weight $\frac{1}{k^2}$ and value $\frac{1}{k^3}$, and let $V=V_1\cup V_2$.
The optimal solution for the instance $V$ is $V_1$, which does not change upon deleting an item in $V_2$.
However, if an item $i\in V_1$ is deleted, the optimal solution changes to $V\setminus \{i\}$. Therefore, the average sensitivity is
\[
    \frac{k}{2k}\cdot 0 + \frac{k}{2k} \cdot (k+1) =\frac{k+1}{2},
\]
which is as large as $\Omega(n)$.
\end{proof}
This proof also shows that for any $\delta>0$, the natural greedy algorithm has an unbounded average sensitivity even if the values of items are at most $\delta$, by setting $k\geq \lceil \delta^{-1} \rceil$.

An issue of the greedy algorithm is that, when there are many small items (in weight) ``right after'' the weight limit in the efficiency ordering (e.g., the items in $V_2$ in the instance used in the proof of Proposition~\ref{prop:unbounded}), deleting an item in the current solution may cause the algorithm to add many of the small items.
To resolve this issue, we uniformly sample the weight limit $W$ from $[1-\epsilon,1]$, instead of fixing it to $1$.

\subsection{Analysis}
First we analyze the approximation ratio of Algorithm~\ref{alg:smallknapsack}.
For $W\in [1-\epsilon,1]$, let $S(W)$ (resp., $S^i(W)$) be the output of the greedy algorithm run on $V$ (resp., $V\setminus \{i\}$) with the weight limit set to $W$.
Note that Algorithm~\ref{alg:smallknapsack} run on $V$ (resp., $V \setminus \{i\}$) first samples $W$ from $[1-\epsilon,1]$ uniformly at random and then outputs $S(W)$ (resp., $S^i(W)$).
The following lemma is obtained from a standard argument about the knapsack problem.
\begin{lemma}\label{lem:smallapprox}
Suppose every item has value at most $\delta$.
Then, Algorithm~\ref{alg:smallknapsack} outputs a solution with value at least $(1-\epsilon)\FOPT(V)-\delta$.
\end{lemma}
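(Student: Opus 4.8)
The plan is to establish the slightly stronger statement that for \emph{every} fixed value $W \in [1-\epsilon,1]$, the greedy solution $S(W)$ returned in Line~\ref{line:smalltaket} already satisfies $v(S(W)) \geq (1-\epsilon)\FOPT(V) - \delta$. Since Algorithm~\ref{alg:smallknapsack} merely samples such a $W$ and returns $S(W)$, the lemma follows at once (in fact with probability $1$, not merely in expectation).

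First I would compare the greedy solution to the fractional optimum \emph{for the same weight limit}. The items are sorted so that $v(1)/w(1) \geq \cdots \geq v(n)/w(n)$, and $t$ is the largest index with $w(1) + \cdots + w(t) \leq W$; this is exactly the index in Lemma~\ref{lem:fractional-optimal-solution} applied with weight limit $W$ in place of $1$, so $\FOPT(V,W)$ is attained by taking items $1,\dots,t$ integrally together with at most one further (fractional) copy of item $t+1$. Hence, if $t < n$,
\[
    \FOPT(V,W) \;\leq\; v(\{1,\dots,t\}) + v(t+1) \;=\; v(S(W)) + v(t+1) \;\leq\; v(S(W)) + \delta,
\]
using that every value is at most $\delta$; and if $t = n$ then $S(W) = V$ and $\FOPT(V,W) = v(V) = v(S(W))$. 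In either case $v(S(W)) \geq \FOPT(V,W) - \delta$.

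Second, I would lower-bound $\FOPT(V,W)$ in terms of $\FOPT(V) = \FOPT(V,1)$: since $W \leq 1$, Lemma~\ref{lem:frac_efficient} gives $\FOPT(V,W) \geq W \cdot \FOPT(V) \geq (1-\epsilon)\FOPT(V)$, where the last step uses $W \geq 1-\epsilon$. Combining this with the previous step yields $v(S(W)) \geq (1-\epsilon)\FOPT(V) - \delta$, which proves the claim. There is no real obstacle: the two substantive facts — that greedy loses at most one item's value against the fractional optimum, and that shrinking the weight limit from $1$ to $W \geq 1-\epsilon$ costs only a factor $(1-\epsilon)$ in the fractional optimum — are precisely Lemmas~\ref{lem:fractional-optimal-solution} and~\ref{lem:frac_efficient}. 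The only points needing attention are the degenerate cases $t = n$ and $t = 0$ (so $S(W) = \emptyset$), and both are absorbed by the case analysis above, since when $t = 0$ the fractional optimum uses only a fraction of item $1$ and is thus at most $v(1) \leq \delta$.
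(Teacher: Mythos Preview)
Your proof is correct and follows essentially the same approach as the paper: first bound $v(S(W)) \geq \FOPT(V,W) - \delta$ via Lemma~\ref{lem:fractional-optimal-solution} and the value bound $\delta$, then use Lemma~\ref{lem:frac_efficient} together with $W \geq 1-\epsilon$ to obtain $\FOPT(V,W) \geq (1-\epsilon)\FOPT(V)$. You simply spell out the degenerate cases $t=n$ and $t=0$ more explicitly than the paper does.
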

\begin{proof}
We have
\[
    \sum_{i=1}^{t}w(i)\geq \FOPT(V,W)-\delta \geq W\FOPT(V,1)-\delta \geq (1-\epsilon)\FOPT(V,1)-\delta,    
\]
where the first inequality is from Lemma~\ref{lem:fractional-optimal-solution}, the second inequality is from Lemma~\ref{lem:frac_efficient}, and the last inequality is from $W\geq 1-\epsilon$.
\end{proof}

We turn to analyzing the average sensitivity of Algorithm~\ref{alg:smallknapsack}.
We start by proving the following auxiliary lemma.
\begin{lemma}\label{lem:smallsum}
For any $j\in V$, it holds that
\[
    \int_{1-\epsilon}^{1}\sum_{i\in V} 1_{j\in S(W)\triangle S^i(W)}\mathrm{d}W\leq 1+\epsilon.
\]
\end{lemma}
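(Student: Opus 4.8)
The plan is to express both $S(W)$ and $S^i(W)$ through the prefix sums of the weights in efficiency order, reduce the integral to two explicit pieces, and finish with a short case analysis on the location of $W_j := \sum_{m=1}^{j} w(m)$ relative to $[1-\epsilon,1]$ (items indexed so that $v(1)/w(1)\ge\cdots\ge v(n)/w(n)$, as in Lemma~\ref{lem:fractional-optimal-solution} and Algorithm~\ref{alg:smallknapsack}).

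First I would record the elementary description of the greedy set: since all weights are positive, the prefix sums increase, $W_1\le W_2\le\cdots$, so the greedy prefix $\{1,\dots,t\}$ with $t=\max\{k:W_k\le W\}$ satisfies $j\in S(W)\iff W_j\le W$. Deleting an item does not change the relative efficiency order, so the same description applies to $S^i(W)$ with the appropriate prefix weight: if $i>j$ the weight up to $j$ is still $W_j$, hence $j\in S^i(W)\iff W_j\le W$; if $i<j$ it is $W_j-w(i)$, hence $j\in S^i(W)\iff W_j-w(i)\le W$; and if $i=j$ then $j\notin S^j(W)$. Therefore the terms with $i>j$ never contribute to $S(W)\triangle S^i(W)$ at coordinate $j$; the term $i=j$ contributes $1_{W_j\le W}$; and since $w(i)>0$ forces $j\in S(W)\Rightarrow j\in S^i(W)$, an $i<j$ term contributes $1_{W_j-w(i)\le W<W_j}$. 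Integrating,
\[
\int_{1-\epsilon}^{1}\sum_{i\in V}1_{j\in S(W)\triangle S^i(W)}\,\mathrm{d}W=A+B,\quad A:=\int_{1-\epsilon}^{1}1_{W_j\le W}\,\mathrm{d}W,\quad B:=\sum_{i<j}\bigl|[1-\epsilon,1]\cap[W_j-w(i),W_j)\bigr|.
\]

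Next I would compute the interval length in closed form: $|[1-\epsilon,1]\cap[W_j-w(i),W_j)|$ equals $\min\bigl(w(i),\,W_j-1+\epsilon\bigr)$ when $W_j\le 1$ and $\min\bigl(w(i)-(W_j-1),\,\epsilon\bigr)$ when $W_j\ge 1$, and it is $0$ unless $W_j>1-\epsilon$. Then I split into cases. If $W_j\le 1-\epsilon$, then $B=0$ and $A=\epsilon$. If $1-\epsilon<W_j\le 1$, then $A=1-W_j$ and $B\le\sum_{i<j}w(i)=W_{j-1}=W_j-w(j)<W_j$, so $A+B<1$. The only delicate case is $W_j>1$: here $A=0$, and writing $\theta:=W_j-1>0$ and $P:=\{i<j:w(i)>\theta\}$,
\[
B\le\sum_{i\in P}\bigl(w(i)-\theta\bigr)=\Bigl(\sum_{i\in P}w(i)\Bigr)-|P|\theta\le W_{j-1}-|P|\theta\le W_j-|P|\theta=1+(1-|P|)\theta\le 1,
\]
using $|P|\ge 1$ (if $P=\emptyset$ then $B=0$). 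In every case $A+B\le 1+\epsilon$.

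The main obstacle is exactly the case $W_j>1$: a priori many deletions $i<j$ can each cause $j$ to enter the solution, and the crude per-item bound $|[1-\epsilon,1]\cap[W_j-w(i),W_j)|\le\epsilon$ only yields $\epsilon|P|$, which is too weak when $|P|$ is large. The resolution is to use the sharper per-item length $w(i)-\theta$ and to exploit that the weights of the items in $P$ sum to at most $W_{j-1}\le 1+\theta$, which makes the sum telescope down to $1$; no other step requires anything beyond the positivity of the weights and the greedy structure from Lemma~\ref{lem:fractional-optimal-solution}.
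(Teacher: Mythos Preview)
Your proof is correct and shares the paper's decomposition: isolate the $i=j$ term (your $A$, worth at most $\epsilon$) and express each $i<j$ contribution as the length of $[1-\epsilon,1]\cap[W_j-w(i),W_j)$. Where you diverge is in the final estimate. The paper argues that $j\in S^i(W)\setminus S(W)$ forces $i\in S(W)$, restricts the sum to indices $i\le k$ (the greedy cutoff at weight limit $1$), and finishes with $\sum_{i\le k}w(i)\le 1$. You instead split on the position of $W_j$ relative to $[1-\epsilon,1]$ and, in the case $W_j>1$, use the sharper per-term bound $w(i)-\theta$ together with $\sum_{i\in P}w(i)\le W_{j-1}$ to telescope down to $1$. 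Your extra care is actually warranted: the implication $i\in S(W)$ need not hold when $W_j>1$ (e.g.\ weights $0.6,0.6,0.1$ in efficiency order, $j=3$, $i=2$, $W=0.95$: here $3\in S^2(W)\setminus S(W)$ but $2\notin S(W)$ since $W_2=1.2>W$), so the paper's restriction to $i\le k$ is not justified as stated, whereas your Case~3 handles exactly this regime. In short, same idea and same interval picture, but your case analysis is the cleaner completion.
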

\begin{proof}
Note that, $j\in S(W)$ implies $j\in S^i(W)$ unless $i=j$. Therefore, we have
\begin{align}
    \sum_{i\in V} 1_{j\in S(W)\triangle S^i(W)}
    \leq \sum_{i\in V, i\neq j}1_{j\in S^i(W)\setminus S(W)} + 1.\label{eq:removeij}
\end{align}

Suppose $i\neq j$. Then, $j\in S^i(W)\setminus S(W)$ holds if and only if $i\in S(W)$ and $w(\{1,\dots, j\}\setminus \{i\}) \leq W < w(\{1,\dots, j\})$. Let $k$ be the largest index such that $w(\{1,\dots, k\})\leq 1$. Then, we have
\begin{align*}
    &\int_{1-\epsilon}^{1}\sum_{i\in V}1_{j\in S(W)\triangle S^i(W)}\mathrm{d}W
    \leq \int_{1-\epsilon}^{1}\left(\sum_{i\in V, i\neq j}1_{j\in S^i(W)\setminus S(W)} + 1\right)\mathrm{d}W\\
    &\leq \int_{1-\epsilon}^{1}\sum_{i = 1}^k 1_{w(\{1,\dots, j\}\setminus \{i\}) \leq W < w(\{1,\dots, j\})}\mathrm{d}W + \epsilon
    = \sum_{i = 1}^k\int_{1-\epsilon}^{1}1_{w(\{1,\dots, j\}\setminus \{i\}) \leq W < w(\{1,\dots, j\})}\mathrm{d}W + \epsilon\\
    &\leq \sum_{i = 1}^k w(i) + \epsilon
     \leq 1 + \epsilon.
\end{align*}
Here, the first inequality is given by~\eqref{eq:removeij}, the second inequality is due to the rephrase of the condition $j\in S^i(W)\setminus S(W)$ mentioned above, the third inequality is by $w(\{1,\dots, j\})-w(\{1,\dots, j\}\setminus \{i\}) \leq  w(i)$ and the distribution of $W$, and the last inequality arises from the definition of $k$.
\end{proof}

Let $\mathcal{A}_{\text{SMALL}}$ (resp., $\mathcal{A}_{\text{SMALL}}^i\;(i \in V)$) be the output distribution of Algorithm~\ref{alg:smallknapsack} run on the instance $V$ (resp., $V\setminus \{i\}$). Now, we show the following:
\begin{lemma}\label{lem:smallsensitivity}
Algorithm~\ref{alg:smallknapsack} has average sensitivity $\epsilon^{-1}+1$.
\end{lemma}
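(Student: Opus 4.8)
The plan is to bound the average sensitivity directly from its definition by relating the earth mover's distance $\EM(\mathcal{A}_{\text{SMALL}}, \mathcal{A}_{\text{SMALL}}^i)$ to the quantities controlled by Lemma~\ref{lem:smallsum}. Since both $\mathcal{A}_{\text{SMALL}}$ and $\mathcal{A}_{\text{SMALL}}^i$ are generated by first sampling the \emph{same} random weight limit $W$ uniformly from $[1-\epsilon,1]$ and then returning $S(W)$ or $S^i(W)$, the natural coupling $\mathcal{D}$ is to sample $W$ and output the pair $(S(W), S^i(W))$. This coupling has the correct marginals, so
\[
    \EM(\mathcal{A}_{\text{SMALL}}, \mathcal{A}_{\text{SMALL}}^i) \le \E_{W}\bigl|S(W) \triangle S^i(W)\bigr| = \frac{1}{\epsilon}\int_{1-\epsilon}^{1}\bigl|S(W)\triangle S^i(W)\bigr|\,\mathrm{d}W.
\]

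Next I would plug this into the definition of average sensitivity and swap the order of summation/integration:
\[
    \frac{1}{n}\sum_{i\in V}\EM(\mathcal{A}_{\text{SMALL}}, \mathcal{A}_{\text{SMALL}}^i)
    \le \frac{1}{n}\sum_{i\in V}\frac{1}{\epsilon}\int_{1-\epsilon}^{1}\sum_{j\in V} 1_{j\in S(W)\triangle S^i(W)}\,\mathrm{d}W
    = \frac{1}{n\epsilon}\sum_{j\in V}\int_{1-\epsilon}^{1}\sum_{i\in V} 1_{j\in S(W)\triangle S^i(W)}\,\mathrm{d}W.
\]
Now Lemma~\ref{lem:smallsum} bounds the inner integral by $1+\epsilon$ for each fixed $j$, so the whole expression is at most $\frac{1}{n\epsilon}\cdot n(1+\epsilon) = \epsilon^{-1}+1$, which is exactly the claimed bound.

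The only real content is making sure the bookkeeping is airtight: that the coupling is legitimate (same $W$ used on both sides, which is true by construction of Algorithm~\ref{alg:smallknapsack}), that $|S(W)\triangle S^i(W)| = \sum_{j\in V} 1_{j\in S(W)\triangle S^i(W)}$ and that the sum over $j$ can be taken over $V$ rather than $V\setminus\{i\}$ (harmless, since an element of $S^i(W) \subseteq V\setminus\{i\}$ is never $i$, and $i$ may or may not be in $S(W)$ — but if $i \in S(W)$ then $i \in S(W)\triangle S^i(W)$ and it is legitimately counted), and that Fubini applies to the nonnegative integrand so the interchange of $\sum_i$ and $\int$ is valid. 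I expect no genuine obstacle here; the lemma is essentially a direct corollary of Lemma~\ref{lem:smallsum} once the coupling is written down, and the main care-point is simply confirming that the natural coupling that reuses the shared randomness $W$ is a valid transport plan and that the triangle-difference identity over the full ground set $V$ introduces no error.
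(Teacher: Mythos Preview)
Your proposal is correct and essentially identical to the paper's proof: the paper also couples $\mathcal{A}_{\text{SMALL}}$ and $\mathcal{A}_{\text{SMALL}}^i$ by using the same sampled $W$, writes $\EM(\mathcal{A}_{\text{SMALL}},\mathcal{A}_{\text{SMALL}}^i)\le \frac{1}{\epsilon}\int_{1-\epsilon}^{1}\sum_{j\in V}1_{j\in S(W)\triangle S^i(W)}\,\mathrm{d}W$, swaps the sums, and applies Lemma~\ref{lem:smallsum} to obtain $\epsilon^{-1}+1$. Your additional remarks on the legitimacy of the coupling and the Fubini interchange only make the argument more explicit.
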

\begin{proof}
For each $i\in V$, we bound the earth mover's distance between $\mathcal{A}_{\text{SMALL}}$ to $\mathcal{A}_{\text{SMALL}}^i$ by transporting the maximum possible probability mass of $S(W)$ to $S^i(W)$. Now, we have
\begin{align*}
    \frac{1}{n}\sum_{i\in V}\EM(\mathcal{A}_{\text{SMALL}},\mathcal{A}_{\text{SMALL}}^i)
    &\leq \frac{1}{n}\sum_{i\in V}\left(\frac{1}{\epsilon}\int_{1-\epsilon}^{1}\left(\sum_{j\in V} 1_{j\in S(W)\triangle S^i(W)}\right)\mathrm{d}W\right)\\
    &\leq \frac{1}{n\epsilon}\sum_{j\in V}\left(1+\epsilon\right)
    = \epsilon^{-1}+1,
\end{align*}
where the first inequality arises from the transport of the probability mass and the definition of $W$ in Algorithm~\ref{alg:smallknapsack}, and the second inequality is obtained by swapping the sum and integral and applying Lemma~\ref{lem:smallsum}.
\end{proof}
Theorem~\ref{thm:small} is immediate by Lemmas~\ref{lem:smallapprox} and~\ref{lem:smallsensitivity}.



\section{General Case}\label{sec:general}
In this section, we consider the general case of the knapsack problem, and prove the following:
\begin{theorem}\label{thm:main}
For any $\epsilon>0$, there exists a $(1-\epsilon)$-approximation algorithm for the knapsack problem with average sensitivity $O(\epsilon^{-1}\log \epsilon^{-1})$.
\end{theorem}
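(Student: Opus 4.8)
The plan is to split the instance at a threshold of order $\epsilon\cdot\OPT(V)$, solve the ``all small'' part with the algorithm of Section~\ref{sec:small-weight} and the ``all large'' part by brute force over a small candidate family, and glue the two together with the exponential mechanism. First I would compute $\FOPT(V)$ (a $2$-approximation of $\OPT(V)$ by Lemma~\ref{lem:opt2fopt}, and stable on average by Lemma~\ref{lem:fracdiffsum}), sample a threshold $\theta$ uniformly from an interval of width $\Theta(\epsilon\FOPT(V))$ near $\epsilon\FOPT(V)$, and call item $i$ \emph{large} if $v(i)>\theta$; randomizing $\theta$ ensures that a deletion, which moves $\FOPT(V)$ by only $O(\FOPT(V)/n)$ on average, reclassifies any fixed item with correspondingly small probability. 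Every feasible solution contains at most $O(\epsilon^{-1})$ large items.

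To cut the candidate large-sets down to $\mathrm{poly}(\epsilon^{-1})$, I would round large-item values down to powers of $1+\epsilon$---there are only $O(\epsilon^{-1}\log\epsilon^{-1})$ distinct values since they lie in $(\theta,\FOPT(V)]$---and, by a dynamic program over these value classes, compute for each target $u$ among the $m=O(\epsilon^{-1}\log\epsilon^{-1})$ powers of $1+\epsilon$ up to $\FOPT(V)$ a minimum-weight set $L_u$ of large items of rounded value at least $u$ (automatically $|L_u|=O(\epsilon^{-1})$), discarding those with $w(L_u)>1$. The candidate solution $C_u$ is $L_u$ together with the output of Algorithm~\ref{alg:smallknapsack} on the small items with budget $1-w(L_u)$ (error parameter $\theta$); it is feasible. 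A routine exchange argument against the large part $L^{*}$ of an optimum---for the right $u$ one has $w(L_u)\le w(L^{*})$ and $v(L_u)\ge(1-O(\epsilon))v(L^{*})$, and Theorem~\ref{thm:small} with Lemma~\ref{lem:frac_efficient} bounds the fill---gives some $C_u$ of value $(1-O(\epsilon))\OPT(V)$.

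The algorithm outputs $C_u$ with probability $\propto\exp(\lambda v(C_u))$, $\lambda=\Theta(\epsilon^{-1}\FOPT(V)^{-1}\log m)$. The approximation ratio is then routine: the exponential mechanism costs $O(\lambda^{-1}\log m)=O(\epsilon\OPT(V))$ in expectation and the rounding and Algorithm~\ref{alg:smallknapsack} each cost $O(\epsilon\OPT(V))$, so rescaling $\epsilon$ yields $(1-\epsilon)$. For the sensitivity, if $n=O(\epsilon^{-1})$ the average sensitivity is at most $2n$ trivially, so assume $n=\Omega(\epsilon^{-1})$; I would couple the runs on $V$ and $V\setminus\{i\}$ by sharing $\theta$ and the internal randomness of Algorithm~\ref{alg:smallknapsack} and by coupling the choice of $u$. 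Conditioned on the same $u$, the two outputs differ only in the $O(\epsilon^{-1})$ large items of $L_u$ incident to $i$ and in the greedy fill, whose expected symmetric difference---averaged over $i$, over the shared randomness, and over $u$---is $O(\epsilon^{-1})$ by Theorem~\ref{thm:small}. What remains, and is the crux, is the earth-mover cost of the shift in the distribution of $u$ induced by deleting $i$.

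Here ``$\EM\le\TV\times\mathrm{diameter}$'' is far too lossy: $C_u$ and $C_{u'}$ share only $O(\epsilon^{-1})$ large items, but their greedy fills are prefixes of the small items for the \emph{different} budgets $1-w(L_u)$ and $1-w(L_{u'})$ and may differ in arbitrarily many items. The structure I would exploit is that $u\mapsto w(L_u)$ is monotone, so the candidates form a chain whose small parts are nested, and deleting a small item $i$ at efficiency position $p$ only lowers the scores $v(C_u)$ for which $i$ lies in the fill of $C_u$, i.e.\ for $u$ below a single threshold $u^{*}(i)$; hence the exponential-mechanism distribution moves mass only across $u^{*}(i)$, in total $O(\lambda v(i))$. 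Charging each unit of shifted mass against the chain segment it crosses, and using the $O(\epsilon^{-1})$ average sensitivity of Algorithm~\ref{alg:smallknapsack} together with $\sum_i v(i)\cdot 1_{i\text{ used in some fill}}=O(\OPT(V))$ (and, for large $i$, the analogous $\sum_{i\in L_u}v(i)=O(\OPT(V))$ and the reclassification bounds from Lemma~\ref{lem:fracdiffsum}), these charges should sum, after dividing by $n=\Omega(\epsilon^{-1})$, to $O(\epsilon^{-1}\log m)=O(\epsilon^{-1}\log\epsilon^{-1})$. Making the charging argument rigorous---showing the shifted mass stays near $u^{*}(i)$ and that the per-step Hamming costs along the chain telescope correctly on average---is the main technical work.
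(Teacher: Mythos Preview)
Your architecture---random threshold, small family of large-item candidates indexed by a value target, exponential mechanism over the candidates, and \Call{ModifiedGreedy}{} for the fill---matches the paper's Algorithm~\ref{alg:knapsack} almost exactly. The substantive divergence, and the source of the gap, is the score you feed the exponential mechanism. The paper scores candidate $t$ by the deterministic proxy
\[
  x_t \;=\; tc + \FOPT(V\setminus L,\,1-w(A_t)),
\]
not by the realised value $v(C_u)$ of the full output. This choice is what makes the sensitivity analysis go through: by Lemma~\ref{lem:fracdiffsum} one has $\sum_{i\in V\setminus L}(x_t-x_t^i)\le \FOPT(V\setminus L,1-w(A_t))$ for \emph{every} $t$, and together with the monotonicity $x_t\ge x_t^i$ (Lemma~\ref{lem:decreasex}) this yields $\sum_{i}\TV(R,R^i)=O(\epsilon^{-1}\log\epsilon^{-1})$ directly (Lemmas~\ref{lem:baai2}--\ref{lem:combined}). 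The paper then simply uses $\EM\le \TV\cdot n$ and divides by $n$ (Lemma~\ref{lem:largesensitivity}); so the ``$\TV\times\text{diameter}$'' route you call ``far too lossy'' is precisely what the paper does, and it is not lossy because the \emph{sum} $\sum_i\TV$ is bounded, not the per-item $\TV$.

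Your score $v(C_u)$ does not enjoy these properties. Deleting a small item $i$ from the fill replaces it by items $k{+}1,\ldots,k'$; one checks $v(C_u)-v(C_u^i)\in[-c,\,v(i)]$, so scores can \emph{increase} (the monotonicity fails), and the change depends on the cutoff $k=k(u)$, hence on $u$. Consequently the conditional distribution among $\{u\le u^*(i)\}$ also shifts, and the claim that ``mass only moves across $u^*(i)$, in total $O(\lambda v(i))$'' is not correct as stated. Moreover, the summed score change $\sum_i|v(C_u)-v(C_u^i)|$ is only bounded by $c\cdot|\text{fill of }C_u|$, which can be $\Theta(\epsilon\,\FOPT(V)\cdot n)$, so the analogue of Lemma~\ref{lem:baai3} does not close. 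The fix is exactly the paper's: replace the fill's realised value by $\FOPT(V\setminus L,\,1-w(L_u))$ in the score (and use the bucket label $tc$ rather than $v(L_u)$ for the large part), after which Lemma~\ref{lem:fracdiffsum} gives the needed summability and the chain/charging argument becomes unnecessary.
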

Note that this algorithm takes exponential time.
We will discuss on improving time complexity in~\cref{sec:fptas}.

\subsection{Technical Overview and Algorithm Description}\label{sec:overview}
First, we explain the intuition behind our algorithm (Algorithm~\ref{alg:knapsack}).
As in Section~\ref{sec:small-weight}, we assume that the weight limit is $1$ without loss of generality.
We fix a parameter $0 < \epsilon < 0.05$.
We say that an item is \emph{large} if its value is at least $\epsilon \cdot \FOPT(V)$ and \emph{small} otherwise.
If all items are small, \Call{ModifiedGreedy}{} (Algorithm~\ref{alg:smallknapsack}) with the parameter $\epsilon$ has an approximation ratio $1-O(\epsilon)$ and average sensitivity $O(\epsilon^{-1})$, and we are done.
On the contrary, if all items are large, the procedure of outputting the optimal solution has average sensitivity $O(\epsilon^{-1})$ because any feasible solution has cardinality at most $\epsilon^{-1}$.
We combine these two observations to obtain an algorithm for the general case as follows.
Note that here we don't concern about the running time; we accelerate it in~\cref{sec:fptas}.

Let $L \subseteq V$ be the set of large items.
First, we take a subset $R \subseteq L$ with $w(R) \leq 1$ that maximizes $v(R)+\OPT(V\setminus L, 1-w(R))$. (Note that in this section, we do not take efficiency into consideration and therefore we assume that $\OPT(V\setminus L, 1-w(R))$ can be computed.)
Then, we output $R\cup \Call{ModifiedGreedy}{\epsilon, V\setminus L, 1-w(R)}$, where $\Call{ModifiedGreedy}{\epsilon, U, W}$ runs $\Call{ModifiedGreedy}{\epsilon, U}$ after replacing $w(i)$ with $w(i)/W$ for each $i \in U$.

However, this algorithm has the following two issues.
\begin{enumerate}
\item If the value of $\FOPT(V)$ changes upon deleting an item, the set $L$ may drastically change. For example, consider the case when there are many items with values slightly less than $\epsilon \cdot \FOPT(V)$. Then, the set of small items added to the output may drastically change.
\item Even when the value $\FOPT(V)$ does not change, if the choice of $R$ changes upon deleting an item, the value of $1-w(R)$ also changes. Thus, the set of small items added to the output may drastically change.
\end{enumerate}

To resolve the first issue, we sample a value threshold $c=O(\epsilon\cdot \FOPT(V))$ that classifies items as large or small from an appropriate distribution, instead of fixing it to $\epsilon\cdot \FOPT(V)$.

Now, we consider the second issue.
Suppose we have sampled the same value threshold $c$ both before and after deleting an item $i \in V$.
There are two cases in which $R$ changes after deleting the item $i$.
\begin{enumerate}
\item If the item $i$ is large and $i \in R$, then the algorithm should change the choice of $R$ because the item $i$ would no longer exist.
\item If the item $i$ is small, then the algorithm may change the choice of $R$ because the value $\FOPT(V\setminus L, 1-w(R))$ (for the original $R$) may decrease.
\end{enumerate}
The first case is easy to resolve; $R$ contains $O(\epsilon^{-1})$ many items and therefore, by taking average over $i$, this case contributes to the average sensitivity by $O(\epsilon^{-1})$.

To address the second case (deleting small items), we ensure that the distribution of $R$ does not change significantly with small decreases in $\FOPT(V\setminus L, 1-w(R))$.
To this end, instead of considering the $R$ with the maximum value of $v(R)+\FOPT(V\setminus L, 1-w(R))$ among $R\subseteq L$, we apply the exponential mechanism~\cite{mcsherry2007mechanism}.
Specifically, we sample $R$ with probability proportional to the exponential of this value with appropriate scaling and rounding (see Line~\ref{line:exponential} in Algorithm~\ref{alg:knapsack}).
To ensure that the mechanism outputs a $(1-\epsilon)$-approximate solution, we reduce the number of candidates that can possibly be $R$ to a constant.
This is implemented by taking only one candidate $A_t$ from the family of sets $R$ with $w(R)\in \left[tc,(t+1)c\right)$ for each integer $t$ (see Line~\ref{line:smallestweight} in Algorithm~\ref{alg:knapsack}).
For technical reasons, we apply an exponential mechanism for the value $tc+\FOPT(V\setminus L, 1-w(A_t))$, rather than the exact value $v(A_t)+\OPT(V\setminus L, 1-w(A_t))$ (see Line~\ref{line:defx}).
In Sections~\ref{subsec:approximation-ratio} and~\ref{subsec:sensitivity}, we analyze the approximation ratio and average sensitivity of Algorithm~\ref{alg:knapsack}, respectively.

\begin{algorithm}[t!]
\caption{A $(1-O(\epsilon))$-approximation algorithm with small average sensitivity}\label{alg:knapsack}
\Procedure{\emph{\Call{StableOnAverageKnapsack}{$\epsilon,V$}}}{
Sample $c$ from the uniform distribution over $[\epsilon\cdot \FOPT(V),2\epsilon\cdot \FOPT(V)]$\;
Let $L\subseteq V$ be the set of items with value at least $c$\;
Let $l=\lfloor\FOPT(V)/c\rfloor$\;
\For{$t=0,\dots, l$}{
    \If{there is a subset of $L$ with value in $\left[tc,(t+1)c\right)$}{\label{line:condx}
        Let $A_{t}\subseteq L$ be the set of items with smallest weight that satisfies $tc \leq v(A_{t}) < (t+1)c$\label{line:smallestweight}\; (if there are multiple choices, choose the lexicographically smallest one)\;
        Let $x_{t}=tc + \FOPT(V\setminus L, 1-w(A_{t}))$\;\label{line:defx}
    }\Else{
        Let $A_{t}=\emptyset$ and $x_{t}=-\infty$\;
    }
}
Let $d=\frac{c}{10\log(\epsilon^{-1})}=O\left(\frac{\epsilon}{\log (\epsilon^{-1})}\FOPT(V)\right)$\;\label{line:choicec}
Sample $t^{\circ} \in \{0,\ldots,l\}$ with probability proportional to $\exp(x_{t^{\circ}}/d)$ and let $R=A_{t^{\circ}}$\;\label{line:exponential}
\Return $R\cup \Call{ModifiedGreedy}{V\setminus L, 1-w(A_{t})}$\;\label{line:small}
}
\end{algorithm}

\subsection{Approximation Ratio}\label{subsec:approximation-ratio}

First, we analyze the approximation ratio of Algorithm~\ref{alg:knapsack}.
Let $S^*$ be a set of items that attains $\OPT(V)$. Let $t^* \geq 0$ be an integer such that $v(S^*\cap L)\in \left[t^*c, (t^*+1)c\right)$.

The following lemma bounds the loss in the approximation ratio caused by considering only $A_{0},\dots, A_{l}$ instead of all subsets of $L$ as candidate sets that can possibly be $R$.

\begin{lemma}\label{lem:errort}
    $x_{t^*}\geq \left(1-4\epsilon \right)\OPT(V)$ holds.
\end{lemma}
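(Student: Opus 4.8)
The plan is to exhibit a concrete lower bound on $x_{t^*}$ by comparing it against the optimal solution $S^*$, split into its large part $S^*\cap L$ and its small part $S^*\setminus L$. The key observation is that $S^*\cap L$ is itself a subset of $L$ whose value lies in $[t^*c,(t^*+1)c)$, so the set $A_{t^*}$ is well-defined (the condition on Line~\ref{line:condx} holds for $t=t^*$), and moreover $t^*\le l$ since $t^*c\le v(S^*\cap L)\le\OPT(V)\le\FOPT(V)$, so the loop actually reaches $t=t^*$.

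The main chain of inequalities I would carry out is as follows. By the minimality of the weight of $A_{t^*}$ among subsets of $L$ with value in $[t^*c,(t^*+1)c)$, we have $w(A_{t^*})\le w(S^*\cap L)$, hence $1-w(A_{t^*})\ge 1-w(S^*\cap L)\ge w(S^*\setminus L)$, where the last step uses $w(S^*)\le 1$. Since $\FOPT(U,W)$ is monotone in $W$, this gives
\[
\FOPT(V\setminus L,\,1-w(A_{t^*}))\ \ge\ \FOPT(V\setminus L,\,w(S^*\setminus L))\ \ge\ v(S^*\setminus L),
\]
the final inequality because $S^*\setminus L$ is a feasible integral solution to the instance $(V\setminus L, w(S^*\setminus L))$, so its value is a lower bound on $\FOPT$ of that instance. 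Combining this with $t^*c> v(S^*\cap L)-c$ (from $v(S^*\cap L)<(t^*+1)c$) yields
\[
x_{t^*}\ =\ t^*c+\FOPT(V\setminus L,\,1-w(A_{t^*}))\ >\ \bigl(v(S^*\cap L)-c\bigr)+v(S^*\setminus L)\ =\ \OPT(V)-c .
\]

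Finally, I would bound the additive error $c$: since $c$ is sampled from $[\epsilon\FOPT(V),2\epsilon\FOPT(V)]$ we have $c\le 2\epsilon\FOPT(V)$, and by Lemma~\ref{lem:opt2fopt} $\FOPT(V)\le 2\OPT(V)$, so $c\le 4\epsilon\OPT(V)$. Plugging this in gives $x_{t^*}> \OPT(V)-4\epsilon\OPT(V)=(1-4\epsilon)\OPT(V)$, as desired. There is no real obstacle here; the only point requiring a little care is making sure the residual instance $(V\setminus L,\cdot)$ is handled correctly — specifically invoking monotonicity of $\FOPT$ in the weight limit and the trivial bound $\FOPT\ge v(\text{feasible integral solution})$ — and tracking that $c$ scales against $\OPT(V)$ rather than $\FOPT(V)$, which is where the factor $4$ (as opposed to $2$) in the statement comes from.
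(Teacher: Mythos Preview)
Your proof is correct and follows essentially the same route as the paper: both use $w(A_{t^*})\le w(S^*\cap L)$ from the minimum-weight choice, lower-bound $\FOPT(V\setminus L,1-w(A_{t^*}))$ by $v(S^*\setminus L)$, apply $t^*c\ge v(S^*\cap L)-c$, and finish with $c\le 2\epsilon\,\FOPT(V)\le 4\epsilon\,\OPT(V)$. Your write-up is in fact a bit more careful than the paper's in explicitly verifying that $t^*\le l$ and that the condition on Line~\ref{line:condx} holds.
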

\begin{proof}
We have
\begin{align*}
    x_{t^*}&=t^*c+\FOPT(V\setminus L, 1-w(A_{t^*}))
    \geq t^*c+\FOPT(V\setminus L, 1-w(S^*\cap L))\\
    &\geq t^*c+v(S^*\setminus L)
    \geq v(S^*\cap L)-c+v(S^*\setminus L)
    = \OPT(V)-c
    \geq (1-4\epsilon)\OPT(V),
\end{align*}
where the first equality is from the definition of $x_{t^*}$, the first inequality is from $w(A_{t^*})\leq w(S^*\cap L)$, which is ensured by Line~\ref{line:smallestweight} in the algorithm, the second inequality is from $\FOPT(V\setminus L, 1-w(S^*\cap L))\geq \OPT(V\setminus L, 1-w(S^*\cap L))=v(S^*\setminus L)$, the third inequality is from the definition of $t^*$, the second equality is from the optimality of $S^*$, and the last inequality is from $c\leq 2\epsilon\cdot \FOPT(V)\leq 4\epsilon\cdot \OPT(V)$.
\end{proof}

Next we analyze the loss in the approximation ratio caused by the exponential method applied at Line~\ref{line:exponential}.
We prove the following.
\begin{lemma}\label{lem:errorexp}
$\E[x_{t^{\circ}}]\geq (1-3\epsilon) x_{t^*}$ holds.
\end{lemma}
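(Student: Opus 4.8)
The statement $\E[x_{t^\circ}]\ge(1-3\epsilon)x_{t^*}$ is a standard utility guarantee for the exponential mechanism, so the plan is to run the textbook argument with the specific scaling $d=c/(10\log\epsilon^{-1})$ chosen in Line~\ref{line:choicec}. Write $Z=\sum_{t=0}^{l}\exp(x_t/d)$ for the normalization constant, so that $\Pr[t^\circ=t]=\exp(x_t/d)/Z$. The key quantitative observation is that an index $t$ with $x_t$ much smaller than $x_{t^*}$ contributes negligibly: specifically, $\Pr[t^\circ=t]\le \exp((x_t-x_{t^*})/d)$, since $Z\ge\exp(x_{t^*}/d)$. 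Therefore, if we set a ``bad'' threshold at $x_{t^*}-\Delta$ for a suitable slack $\Delta$, the total probability of landing on a bad index is at most $(l+1)\exp(-\Delta/d)$.

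I would choose $\Delta$ so that this bad probability is at most, say, $\epsilon$ (or some small absolute constant times $\epsilon$), and so that $\Delta$ itself is at most $2\epsilon\, x_{t^*}$. For the first requirement we need $\exp(\Delta/d)\ge (l+1)/\epsilon$; since $l=\lfloor\FOPT(V)/c\rfloor\le \FOPT(V)/c\le \epsilon^{-1}$ (using $c\ge\epsilon\FOPT(V)$), we have $l+1\le \epsilon^{-1}+1$, so it suffices that $\Delta/d\ge \log((\epsilon^{-1}+1)/\epsilon)=O(\log\epsilon^{-1})$, i.e. $\Delta = O\bigl(d\log\epsilon^{-1}\bigr)=O(c)$ by the definition of $d$; concretely $\Delta = 10 d\log\epsilon^{-1}\cdot(\text{small const}) $ works, which is $O(c)$ and hence $O(\epsilon\FOPT(V))=O(\epsilon\,\OPT(V))$. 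For the second requirement, $\Delta\le 2\epsilon x_{t^*}$, we use the lower bound $x_{t^*}\ge(1-4\epsilon)\OPT(V)=\Omega(\OPT(V))$ from Lemma~\ref{lem:errort} together with $\OPT(V)\ge\FOPT(V)/2\ge c/(2\epsilon)$ from Lemma~\ref{lem:opt2fopt}, so $\Delta=O(c)=O(\epsilon\,\OPT(V))\le 2\epsilon x_{t^*}$ once $\epsilon$ is below the fixed constant $0.05$ and the constants are tuned. Then
\[
\E[x_{t^\circ}] \;\ge\; (x_{t^*}-\Delta)\cdot\Pr[x_{t^\circ}\ge x_{t^*}-\Delta] + (-\infty)\cdot 0,
\]
but the $-\infty$ values need care: indices with $x_t=-\infty$ have $\exp(x_t/d)=0$ and are never selected, so they simply drop out of the sum and cause no problem. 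Hence $\E[x_{t^\circ}]\ge (x_{t^*}-\Delta)(1-\epsilon)\ge x_{t^*}(1-2\epsilon)(1-\epsilon)\ge(1-3\epsilon)x_{t^*}$, using $\Delta\le 2\epsilon x_{t^*}$ and expanding the product (the cross term $2\epsilon^2$ is absorbed since $\epsilon<0.05$).

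The main thing to get right — and the only real obstacle — is bookkeeping the constants so that the two competing demands on $\Delta$ (large enough that the tail is $\le\epsilon$, small enough that $\Delta\le 2\epsilon x_{t^*}$) are simultaneously satisfiable; this is exactly why $d$ was defined with the $\log\epsilon^{-1}$ in the denominator, so the argument is tight by design. A secondary subtlety is handling the $x_t=-\infty$ entries cleanly and confirming $Z>0$ (which holds because $x_{t^*}$ is finite by Lemma~\ref{lem:errort}, so at least one term is positive). Everything else — the bound $l+1\le\epsilon^{-1}+1$, the union bound over bad indices, and the final arithmetic — is routine.
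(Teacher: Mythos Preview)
Your proposal is correct and follows essentially the same route as the paper's proof: both bound the probability that the exponential mechanism selects an index with $x_t$ below a threshold near $x_{t^*}$ by $(l+1)\exp(-\Delta/d)$, use $l+1\le\epsilon^{-1}+1$ and the definition of $d$ to make this tail $O(\epsilon)$, and then lower-bound the expectation by (threshold)$\times$(good probability), relying on $x_{t^\circ}\ge 0$ for the selected index. The paper simply fixes the threshold at $(1-\epsilon)x_{t^*}$ (your $\Delta=\epsilon x_{t^*}$) and obtains tail probability $\le 2\epsilon$, arriving at the identical product $(1-\epsilon)(1-2\epsilon)\ge 1-3\epsilon$; your generic-$\Delta$ presentation is the same argument with the constants left to be tuned.
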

\begin{proof}
We have
\begin{align}
    &\Pr[x_{t^{\circ}}\leq (1-\epsilon)x_{t^*}]
    = \frac{\sum_{t\in \{0,\dots, l\}: x_{t}\leq (1-\epsilon)x_{t^*}}\exp\left(\frac{x_{t}}{d}\right)}{\sum_{t\in \{0,\dots, l\}}\exp\left(\frac{x_{t}}{d}\right)}
    \leq \frac{(l+1) \exp\left(\frac{(1-\epsilon)x_{t^*}}{d}\right)}{\exp\left(\frac{x_{t^*}}{d}\right)} \nonumber \\
    &= (l+1) \exp\left(- \frac{\epsilon x_{t^*}}{d}\right) \nonumber \\
    &\leq (l+1) \exp\left(- \frac{\epsilon(1-4\epsilon)\OPT(V)}{d}\right)
    \leq (l+1) \exp\left(-\frac{5}{2}\log \epsilon^{-1}(1-4\epsilon)\right) \nonumber \\
    &= (l+1) \epsilon^{\frac{5}{2} (1-4\epsilon)}
    \leq 2\epsilon^{-1}\cdot \epsilon^2 = 2\epsilon.\label{eq:2c2c}
\end{align}
Here, the first equality is from Line~\ref{line:exponential} of the algorithm, the first inequality is from the fact that there are at most $l+1$ indices $t$ with $x_{t}\leq (1-\epsilon)x_{t^*}$, the second inequality is from Lemma~\ref{lem:errort}, the third inequality is from
\[
    d= \frac{c}{10\log \epsilon^{-1}} \leq \frac{\epsilon}{5\log \epsilon^{-1}}\cdot \FOPT(V)\leq \frac{5}{2}\cdot \frac{\epsilon}{\log \epsilon^{-1}}\cdot \OPT(V),
\]
and the last inequality is from $l+1\leq \epsilon^{-1}+1\leq 2\epsilon^{-1}$ and $\frac{5}{2}(1-4\epsilon)\geq 2$, which is from $\epsilon\leq 0.05$.
Therefore, we have
\[
    \E[x_{t^{\circ}}]
    \geq (1-\epsilon)x_{t^*}\Pr[x_{t^{\circ}} > (1-\epsilon)x_{t^*}]
    \geq (1-\epsilon) \left(1-2\epsilon\right)x_{t^*} \geq (1-3\epsilon)x_{t^*}.
\]
Here, the second inequality is from~\eqref{eq:2c2c}.
\end{proof}

Combining the lemmas above yields the following.
\begin{lemma}\label{lem:stepapprox}
$\E[v(A_{t^{\circ}})+\FOPT(V\setminus L, 1-w(A_{t^{\circ}}))]\geq (1-7\epsilon) \OPT(V)$ holds.
\end{lemma}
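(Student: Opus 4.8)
The plan is to chain the two preceding lemmas together after observing that the quantity we want to bound dominates $x_{t^{\circ}}$ pointwise. Recall that whenever the index $t$ passes the test on Line~\ref{line:condx}, the set $A_t$ is chosen on Line~\ref{line:smallestweight} to satisfy $tc \le v(A_t) < (t+1)c$, and $x_t$ is set on Line~\ref{line:defx} to $tc + \FOPT(V\setminus L, 1-w(A_t))$. Hence for every such $t$ we have
\[
  v(A_t) + \FOPT(V\setminus L, 1-w(A_t)) \;\ge\; tc + \FOPT(V\setminus L, 1-w(A_t)) \;=\; x_t .
\]
First I would check that the random index $t^{\circ}$ produced on Line~\ref{line:exponential} is, with probability $1$, one that passed the test on Line~\ref{line:condx}: indices with $x_t = -\infty$ get weight $\exp(-\infty/d)=0$ in the exponential mechanism, and at least one finite weight exists because $t^*$ (defined via $v(S^*\cap L)\in[t^*c,(t^*+1)c)$) satisfies $t^* \le \lfloor \FOPT(V)/c\rfloor = l$ and $S^*\cap L$ witnesses that the test holds for $t=t^*$, so $x_{t^*}$ is finite. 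Thus the displayed inequality applies to $t^{\circ}$ almost surely, giving
\[
  \E\bigl[v(A_{t^{\circ}}) + \FOPT(V\setminus L, 1-w(A_{t^{\circ}}))\bigr] \;\ge\; \E[x_{t^{\circ}}].
\]

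Then I would simply apply Lemma~\ref{lem:errorexp} followed by Lemma~\ref{lem:errort} to obtain
\[
  \E[x_{t^{\circ}}] \;\ge\; (1-3\epsilon)\,x_{t^*} \;\ge\; (1-3\epsilon)(1-4\epsilon)\,\OPT(V) \;\ge\; (1-7\epsilon)\,\OPT(V),
\]
where the last step uses $(1-3\epsilon)(1-4\epsilon) = 1 - 7\epsilon + 12\epsilon^2 \ge 1 - 7\epsilon$. Combining the two displays yields the claim.

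I do not anticipate a real obstacle here; the only points requiring care are the almost-sure finiteness of $x_{t^{\circ}}$ (so that taking expectations is legitimate and the pointwise bound $v(A_{t^{\circ}}) \ge t^{\circ} c$ is available) and the bookkeeping that $t^* \le l$ so that $A_{t^*}$ is actually instantiated by the algorithm — both of which follow directly from the definitions and from $\OPT(V)\le\FOPT(V)$ (Lemma~\ref{lem:opt2fopt}).
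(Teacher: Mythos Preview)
Your proposal is correct and follows essentially the same approach as the paper: bound $v(A_{t^\circ})+\FOPT(V\setminus L,1-w(A_{t^\circ}))\ge x_{t^\circ}$ via Line~\ref{line:defx}, then chain Lemma~\ref{lem:errorexp} and Lemma~\ref{lem:errort}. Your extra care that $t^\circ$ almost surely has $x_{t^\circ}>-\infty$ and that $t^*\le l$ (so $A_{t^*}$ is actually instantiated) is a welcome addition that the paper leaves implicit.
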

\begin{proof}
We have
\begin{align*}
    &\E[v(A_{t^{\circ}})+\FOPT(V\setminus L, 1-w(A_{t^{\circ}}))]\\
    &\geq \E[x_{t^{\circ}}]
    \geq (1-3\epsilon)x_{t^*}
    \geq (1-3\epsilon)(1-4\epsilon)\OPT(V)
    \geq (1-7\epsilon)\OPT(V),
\end{align*}
where the first inequality is from Line~\ref{line:defx} of the algorithm, the second inequality is from Lemma~\ref{lem:errorexp}, and the third inequality is from Lemma~\ref{lem:errort}.
\end{proof}

Now we bound the approximation ratio of Algorithm~\ref{alg:knapsack}. Let $\mathcal{A}_{\mathrm{SMALL}}(V',W')$ be the output of Algorithm~\ref{alg:smallknapsack} on $V'$, where all weights in the input are divided by $W'$.
\begin{lemma}\label{lem:largeapprox}
The approximation ratio of Algorithm~\ref{alg:knapsack} is at least $1-12\epsilon$.
\end{lemma}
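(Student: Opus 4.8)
\textbf{Proof proposal for Lemma~\ref{lem:largeapprox}.}
The plan is to lower bound the value of the returned set $R \cup \Call{ModifiedGreedy}{V\setminus L, 1-w(R)}$ (where $R = A_{t^\circ}$) by combining the guarantee of the small‑items subroutine with the expectation bound already proved in Lemma~\ref{lem:stepapprox}. Since $R \subseteq L$ and $\Call{ModifiedGreedy}{V\setminus L, 1-w(R)}$ only selects items of $V\setminus L$ with total (unscaled) weight at most $1-w(R)$, the two parts are disjoint and together form a feasible solution, so the output value is exactly $v(R)$ plus the value of the subroutine's output.

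The key steps, in order: (i) Recall that $\Call{ModifiedGreedy}{\epsilon, U, W}$ runs $\Call{ModifiedGreedy}{\epsilon,U}$ on the instance with weights rescaled by $1/W$, and that this rescaling leaves the fractional optimum unchanged, i.e.\ the fractional optimum of the rescaled instance equals $\FOPT(V\setminus L, 1-w(R))$. (ii) Every item of $V\setminus L$ has value strictly less than $c$ by the definition of $L$, so applying Lemma~\ref{lem:smallapprox} with $\delta = c$ to the rescaled instance shows the subroutine outputs value at least $(1-\epsilon)\FOPT(V\setminus L, 1-w(R)) - c$. (iii) Adding $v(R)$ gives that the output value is at least $(1-\epsilon)\bigl(v(A_{t^\circ}) + \FOPT(V\setminus L, 1-w(A_{t^\circ}))\bigr) - c$. (iv) Take expectations over the internal randomness (the choice of $t^\circ$) and apply Lemma~\ref{lem:stepapprox} to get expected output value at least $(1-\epsilon)(1-7\epsilon)\OPT(V) - c$. (v) Bound $c \le 2\epsilon\cdot\FOPT(V) \le 4\epsilon\cdot\OPT(V)$ using the sampling range of $c$ together with Lemma~\ref{lem:opt2fopt}, and use $(1-\epsilon)(1-7\epsilon) \ge 1-8\epsilon$, to conclude that the expected output value is at least $(1-8\epsilon)\OPT(V) - 4\epsilon\,\OPT(V) = (1-12\epsilon)\OPT(V)$.

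I do not expect a serious obstacle here: the proof is essentially bookkeeping that correctly threads together Lemmas~\ref{lem:smallapprox}, \ref{lem:opt2fopt}, and~\ref{lem:stepapprox}. The one point that needs care is step (i)—verifying that the weight rescaling inside $\Call{ModifiedGreedy}{\cdot,\cdot,\cdot}$ preserves both feasibility of the combined solution (so $w(R)$ plus the subroutine's unscaled weight stays within $1$) and the value of the fractional optimum, so that Lemma~\ref{lem:smallapprox}, stated for weight limit $1$, transfers cleanly to the residual instance with weight limit $1-w(R)$. A minor secondary point is confirming that the additive loss $\delta$ in Lemma~\ref{lem:smallapprox} can indeed be taken to be $c$ (rather than the exact maximum value of a small item), which holds since all items of $V\setminus L$ have value below $c$.
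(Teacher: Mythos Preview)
Your proposal is correct and follows essentially the same approach as the paper's proof: apply Lemma~\ref{lem:smallapprox} (with $\delta=c$) to bound the subroutine's value by $(1-\epsilon)\FOPT(V\setminus L,1-w(A_{t^\circ}))-c$, absorb $v(A_{t^\circ})$ into the $(1-\epsilon)$ factor, take expectations and invoke Lemma~\ref{lem:stepapprox}, and finish with $c\le 2\epsilon\,\FOPT(V)\le 4\epsilon\,\OPT(V)$. The only cosmetic difference is that the paper keeps $c$ inside the expectation throughout (it is random), whereas you bound it pointwise; since $c\le 4\epsilon\,\OPT(V)$ holds for every realization, both treatments are equivalent.
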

\begin{proof}
Let $\mathcal{A}$ be the output distribution of Algorithm~\ref{alg:knapsack} applied on the instance $(V,1)$.
We have
\begin{align*}
    \E[\mathcal{A}]&=\E[v(A_{t^{\circ}})+v\left(\mathcal{A}_{\mathrm{SMALL}}(V\setminus L, 1-w(A_{t^{\circ}}))\right)]\\
    &\geq \E[v(A_{t^{\circ}})+(1-\epsilon)\FOPT(V\setminus L, 1-w(A_{t^{\circ}}))-c]\\
    &\geq \E[(1-\epsilon)(v(A_{t^{\circ}})+\FOPT(V\setminus L, 1-w(A_{t^{\circ}})))-c]\\
    &\geq \E[(1-\epsilon)(1-7\epsilon)\OPT(V)-c]
    \geq (1-12\epsilon)\OPT(V),
\end{align*}
where the first inequality is from Lemma~\ref{lem:smallapprox}, the third inequality is from Lemma~\ref{lem:stepapprox}, and the last inequality is from $c\leq 2\FOPT(V)\leq 4\OPT(V)$.
\end{proof}

\subsection{Average Sensitivity}\label{subsec:sensitivity}

In this section, we discuss bounding the average sensitivity of Algorithm~\ref{alg:knapsack}.
For the parameters used in the algorithm applied on the instance $(V \setminus \{i\},1)$, we use symbols $c^i, d^i$, and $R^i$ to denote $c$, $d$, and $R$, respectively (and use $c$, $d$, and $R$ for the instance $(V,1)$).
Similarly, we use the symbols $A^i_{t}$ and $x^i_{t}$ for $t=0,\dots, l$ to denote $A_{t}$ and $x_{t}$, respectively, for the instance $(V \setminus \{i\},1)$.
We first prove that the distributions of $R$ and $R^i$ are sufficiently close on average, where the average is taken over $i\in V$ (Lemma~\ref{lem:combined}). Subsequently, we combine the analysis for large and small items (Lemma~\ref{lem:largesensitivity}).

The distributions of $R$ and $R^i$ may differ for the following two reasons: the difference between the distributions of $c$ and $c^i$, and the absence of the item $i$ in the instance $V\setminus \{i\}$.
We first forcus on the second reason.
Specifically, we fix the value $\hat{c}$ and assume $c=c^i=\hat{c}$.
In this case, we prove that the distribution of $R$ is sufficiently close to that of $R^i$ on average, where the average is taken over $i\in V$.
To this end, we analyze the following quantity:
\begin{align}
    \sum_{i\in V}\sum_{A\subseteq V}\max\left(0,\Pr[R=A \mid c=\hat{c}]-\Pr[R^i=A \mid c^i=\hat{c}]\right).\label{eq:befdiffs}
\end{align}
Because $\Pr[R=A \mid c=\hat{c}]$ is positive only if $A=A_{t}$ for some $t\in \{0,\dots, l\}$, we have
\begin{align}
    \eqref{eq:befdiffs} = \sum_{i\in V}\sum_{t\in \{0,\dots, l\}}\max\left(0,\Pr[R=A_{t} \mid c=\hat{c}]-\Pr[R^i=A_{t} \mid c^i=\hat{c}]\right).\label{eq:diffs}
\end{align}
We analyze the sum in~\eqref{eq:diffs} by dividing it into two cases: $i\in L$ and $i\in V\setminus L$.
We first show that $x_t\geq x^i_t$ holds for all $i$ and $t$.
\begin{lemma}\label{lem:decreasex}
For any $t\in \{0,\dots, l\}$ and $i\in V$, we have $x_t\geq x^i_t$.
\end{lemma}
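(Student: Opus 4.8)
The plan is to prove $x_t \geq x_t^i$ by comparing the two defining quantities termwise, distinguishing whether the deleted item $i$ is large or small. Recall $x_t = tc + \FOPT(V \setminus L, 1 - w(A_t))$ when a subset of $L$ of value in $[tc, (t+1)c)$ exists, and $x_t = -\infty$ otherwise; similarly $x_t^i = tc + \FOPT((V \setminus \{i\}) \setminus L^i, 1 - w(A_t^i))$, where $L^i$ is the large-item set for the instance $V \setminus \{i\}$ (under the same threshold $\hat c$, so $L^i = L \setminus \{i\}$). Since the $tc$ term is identical, it suffices to show that either $x_t^i = -\infty$, or the $\FOPT$ term does not increase upon deletion of $i$. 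First I would dispose of the trivial case: if the condition on Line~\ref{line:condx} fails for the instance $V \setminus \{i\}$ at index $t$, then $x_t^i = -\infty \leq x_t$ and we are done. So assume a subset of $L \setminus \{i\}$ with value in $[tc,(t+1)c)$ exists; note this is also a subset of $L$ with value in that range, so $A_t$ is well defined too.

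Next I would handle the case $i \in V \setminus L$ (deleting a small item). Here $L^i = L$, so $A_t^i = A_t$ — the lexicographically-smallest minimum-weight subset of $L$ with value in $[tc,(t+1)c)$ is unaffected by deleting a small item, since the family of candidate subsets of $L$ is unchanged. Then $x_t - x_t^i = \FOPT(V \setminus L, 1 - w(A_t)) - \FOPT((V \setminus \{i\}) \setminus L, 1 - w(A_t)) \geq 0$, because removing an item from the ground set of a (fractional) knapsack instance cannot increase its optimum: any feasible solution for $(V \setminus \{i\}) \setminus L$ is feasible for $V \setminus L$ with the same objective value.

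Finally I would handle $i \in L$ (deleting a large item). Now $L^i = L \setminus \{i\}$, and the candidate family for $A_t^i$ is the subsets of $L \setminus \{i\}$ with value in $[tc,(t+1)c)$, a subfamily of the candidates for $A_t$. Hence $w(A_t^i) \geq w(A_t)$ by minimality of $A_t$ over the larger family, so $1 - w(A_t^i) \leq 1 - w(A_t)$. Therefore
\[
\FOPT((V \setminus \{i\}) \setminus L^i,\, 1 - w(A_t^i)) = \FOPT((V \setminus L) \cup \{i\} \setminus \{i\},\, 1 - w(A_t^i)) \leq \FOPT(V \setminus L,\, 1 - w(A_t)),
\]
using $(V \setminus \{i\}) \setminus L^i = V \setminus L$ (since $i \in L$) together with monotonicity of $\FOPT(U, \cdot)$ in the weight budget. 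This gives $x_t^i \leq x_t$ in all cases. The main thing to get right is bookkeeping around which ground set and which budget each $\FOPT$ refers to after deletion — in particular the identity $(V\setminus\{i\})\setminus L^i = V\setminus L$ when $i\in L$, and the fact that shrinking the candidate family can only increase $w(A_t^i)$; neither step is deep, but both are easy to state backwards.
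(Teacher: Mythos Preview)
Your proof is correct and follows essentially the same route as the paper's: both proofs fix the common threshold $\hat c$, dispose of the $-\infty$ case, then split on whether $i$ is small (where $A_t^i=A_t$ and $\FOPT$ is monotone in the ground set) or large (where $w(A_t^i)\geq w(A_t)$ and $\FOPT$ is monotone in the budget). The only cosmetic difference is that the paper further splits the large case into $i\in L\setminus A_t$ (getting $A_t^i=A_t$ exactly) and $i\in A_t$, whereas you handle both at once via the ``candidate family shrinks, so minimum weight cannot decrease'' observation; your version is slightly cleaner.
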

\begin{proof}
If $x_t=-\infty$, then we have $x^i_t=-\infty$ due to Line~\ref{line:condx} of Algorithm~\ref{alg:knapsack}.
Hereafter, we assume $x_t\neq -\infty$. We prove the lemma by considering the following three cases:
\begin{claim}\label{cla:1}
    If $i\in L\setminus A_t$, we have $x_{t} = x^i_{t}$.
\end{claim}
\begin{proof}
    Since $i\not \in A_{t}$, we have $A_{t}=A^i_{t}$. Therefore, we have
    \begin{align*}
        x_{t}&=tc+\FOPT(V\setminus L,1-w(A_{t})) \\
        &=tc+\FOPT((V\setminus \{i\})\setminus (L\setminus \{i\}),1-w(A^i_{t}))=x^i_{t}, 
    \end{align*}
    where the second equality is from $i\in L$ and $A_{t}=A^i_{t}$.
\end{proof}
\begin{claim}\label{cla:2}
    If $i\in A_t$, we have $x_{t} \geq x^i_{t}$.
\end{claim}
\begin{proof}
    We have
    \begin{align*}
        x_{t}&=tc+\FOPT(V\setminus L,1-w(A_{t}))\\
        &\geq tc+\FOPT((V\setminus \{i\})\setminus (L\setminus \{i\}),1-w(A^i_{t}))= x^i_{t}, 
    \end{align*}
    where the inequality is from $i\in L$ and $w(A_{t})\leq w(A^i_{t})$, which follows from Line~\ref{line:smallestweight} of Algorithm~\ref{alg:knapsack}.
\end{proof}
\begin{claim}\label{cla:3}
    If $i\in V\setminus L$, we have $x_{t} \geq x^i_{t}$.
\end{claim}
\begin{proof}
    We have
    \begin{align}
        x_{t}&=tc+\OPT(V\setminus L, 1-w(A_{t})) \nonumber \\
        &= tc+\OPT(V\setminus L, 1-w(A^i_{t})) \nonumber \\
        &\geq tc+\OPT((V\setminus \{i\})\setminus (L\setminus \{i\}), 1-w(A^i_{t}))=x^i_{t},
    \end{align}
    where the second equality is because $L=L^i$ and hence $A_{t}=A^i_{t}$, and the last equality is from $i\not \in L$.
\end{proof}
Then we complete the case analysis and the lemma is proved.
\end{proof}

The next lemma handles the case $i\in L$. 
\begin{lemma}\label{lem:baai2}
For any $\hat{c} \in \mathbb{R}$, we have
\[
    \sum_{i\in L}\sum_{t\in \{0,\dots, l\}}\max\left(0,\Pr[R=A_{t}\mid c=\hat{c}]-\Pr[R^i=A_{t}\mid c^i=\hat{c}]\right)\leq \epsilon^{-1}.
\]
\end{lemma}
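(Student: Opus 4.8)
The goal is to bound, for a fixed value threshold $\hat c$, the total ``probability mass that moves'' in the distribution of $R$ when we delete a large item, summed over all large items. The main point is that deleting an item $i \in L$ only affects $A_t$ and $x_t$ for those indices $t$ where $i$ is actually used, i.e.\ where $i \in A_t$; for indices with $i \in L \setminus A_t$ we have $x_t = x_t^i$ by Claim~\ref{cla:1}. So I would first split the inner sum over $t$ according to whether $i \in A_t$ or not.

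\textbf{Step 1: the $t$ with $i \notin A_t$ cancel in the right direction.} Fix $i \in L$. Partition $\{0,\dots,l\}$ into $T_i^{\mathrm{in}} = \{t : i \in A_t\}$ and $T_i^{\mathrm{out}} = \{t : i \notin A_t\}$ (ignoring $t$ with $A_t = \emptyset$, for which $x_t = -\infty$ and the term is $0$). For $t \in T_i^{\mathrm{out}}$, Claim~\ref{cla:1} gives $x_t = x_t^i$, while for every $t$ we have $x_t \ge x_t^i$ by Lemma~\ref{lem:decreasex}. Since both $R$ and $R^i$ are sampled by an exponential mechanism (over the same index set $\{0,\dots,l\}$, with the same scale $d$ because $c = c^i = \hat c$ forces $d = d^i$), and the weights $\exp(x_t/d)$ only decrease (weakly) when passing from $x_t$ to $x_t^i$, an index $t \in T_i^{\mathrm{out}}$ whose weight is unchanged can only \emph{gain} probability under $R^i$ relative to $R$; hence $\Pr[R = A_t \mid c=\hat c] - \Pr[R^i = A_t \mid c^i = \hat c] \le 0$ for all $t \in T_i^{\mathrm{out}}$, so the $\max(0,\cdot)$ kills those terms. (I should double-check the corner case where $A_t = A_{t'}$ as sets for distinct indices $t \ne t'$; the cleanest fix is to treat the exponential mechanism as acting on indices $t$, not on sets, and note that $\Pr[R = A]$ is a sum over indices mapping to $A$ — the inequality still goes through termwise in $t$.)

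\textbf{Step 2: bound the surviving terms by $|T_i^{\mathrm{in}}|$ and sum over $i$.} It remains to bound $\sum_{i \in L} \sum_{t \in T_i^{\mathrm{in}}} \max\bigl(0, \Pr[R=A_t\mid c=\hat c] - \Pr[R^i=A_t\mid c^i=\hat c]\bigr)$. Each summand is at most $\Pr[R = A_t \mid c = \hat c]$, so the inner sum is at most $\sum_{t \in T_i^{\mathrm{in}}} \Pr[R = A_t \mid c = \hat c] = \Pr[i \in R \mid c = \hat c]$ (the event that the sampled set $R$ is one of the $A_t$ containing $i$, which is exactly the event $i \in R$ since $i \in L$ and $R = A_{t^\circ} \subseteq L$). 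Therefore the whole double sum is at most $\sum_{i \in L} \Pr[i \in R \mid c = \hat c] = \E[\,|R| \mid c = \hat c\,]$. Finally, $R \subseteq L$ is feasible, i.e.\ $w(R) \le 1$, and every item in $L$ has value $\ge \hat c \ge \epsilon \cdot \FOPT(V)$; hmm, for the cardinality bound I instead use that $R$ has value $v(R) \le \FOPT(V) \le 1 \cdot (\text{...})$ — more directly, since each item of $R$ has value at least $\hat c \ge \epsilon \FOPT(V)$ and $v(R) \le \FOPT(V)$, we get $|R| \le \FOPT(V)/\hat c \le \epsilon^{-1}$. Hence $\E[|R| \mid c = \hat c] \le \epsilon^{-1}$, which is the claimed bound.

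\textbf{Main obstacle.} The only real subtlety is Step~1: justifying that weakly decreasing a subset of the exponential-mechanism weights cannot increase the probability assigned to any index whose weight was left unchanged. This is a short monotonicity fact about the softmax/Gibbs distribution — if $w_t^i \le w_t$ for all $t$ with equality at $t_0$, then $w_{t_0}^i / \sum_t w_t^i \ge w_{t_0}/\sum_t w_t$ — and I would state and prove it as a one-line sub-claim before invoking it. A secondary bookkeeping point is making sure that conditioning on $c = c^i = \hat c$ legitimately identifies the two exponential mechanisms' scales and index ranges; since $d$ and $l$ are deterministic functions of $c$ and $\FOPT(V)$ — wait, $l = \lfloor \FOPT(V)/c \rfloor$ uses $\FOPT(V)$, not $\FOPT(V \setminus \{i\})$, so $l$ and $l^i$ may differ — I would handle this by noting $l^i \le l$ (since $\FOPT(V \setminus\{i\}) \le \FOPT(V)$) and that indices $t > l^i$ simply don't exist for $R^i$, i.e.\ contribute $0$ to $\Pr[R^i = A_t]$, which only strengthens the direction of the inequality in Step~1 for those $t$ and is harmless for Step~2.
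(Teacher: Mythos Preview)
Your proposal is correct and follows essentially the same approach as the paper: split the inner sum into indices $t$ with $i\in A_t$ versus $i\notin A_t$, use Claim~\ref{cla:1} and Lemma~\ref{lem:decreasex} together with softmax monotonicity to kill the $i\notin A_t$ terms, and bound the remaining terms by $\Pr[R=A_t\mid c=\hat c]$ before summing to $\E[|R|\mid c=\hat c]\le \epsilon^{-1}$ (the paper phrases this last step as $\sum_t |A_t|\Pr[R=A_t]\le \epsilon^{-1}$, which is the same computation after swapping sums). Your explicit attention to the softmax monotonicity sub-claim and to the possible discrepancy $l^i\le l$ is a bit more careful than the paper's presentation, but the substance is identical.
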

\begin{proof}
We have
\begin{align}
    &\sum_{i\in L}\sum_{t\in \{0,\dots, l\}}\max\left(0,\Pr[R=A_{t}\mid c=\hat{c}]-\Pr[R^i=A_{t}\mid c^i=\hat{c}]\right) \nonumber \\
    &= \sum_{i\in L}\sum_{t: i\not \in A_{t}}\max\left(0,\Pr[R=A_{t}\mid c=\hat{c}]-\Pr[R^i=A_{t}\mid c^i=\hat{c}]\right)\label{eq:notin}\\
    &+ \sum_{i\in L}\sum_{t: i\in A_{t}}\max\left(0,\Pr[R=A_{t}\mid c=\hat{c}]-\Pr[R^i=A_{t}\mid c^i=\hat{c}]\right)\label{eq:in}.
\end{align}

We first consider the first term~\eqref{eq:notin}.
Now, for $t\in \{0,\dots, l\}$ and $i\not \in A_{t}$, we have
\begin{align*}
    \Pr[R=A_{t}\mid c=\hat{c}] &= \frac{\exp(x_{t}/\hat{d})}{\sum_{t'\in \{0,\dots, l\}}\exp(x_{t'}/\hat{d})}\\
    &= \frac{\exp(x^i_{t}/\hat{d})}{\sum_{t'\in \{0,\dots, l\}}\exp(x_{t'}/\hat{d})}\\
    &\leq \frac{\exp(x^i_{t}/\hat{d})}{\sum_{t'\in \{0,\dots, l\}}\exp(x^i_{t'}/\hat{d})}=\Pr[R^i=A_{t}\mid c^i=\hat{c}].
\end{align*}
Here, the first and the last equality is from Line~\ref{line:exponential} of Algorithm~\ref{alg:knapsack}, the second equality is from Claim~\ref{cla:1}, and the inequality is from Lemma~\ref{lem:decreasex}.
Therefore the value of~\eqref{eq:notin} is zero.

Now, we evaluate the second term~\eqref{eq:in}. We have
\begin{align*}
    \eqref{eq:in}
    &\leq \sum_{i\in L}\sum_{t:i\in A_{t}}\Pr[R=A_{t}\mid c=\hat{c}]\\
    &= \sum_{t\in \{0,\dots, l\}}|A_{t}|\cdot \Pr[R=A_{t}\mid c=\hat{c}]\\
    &\leq \epsilon^{-1}\sum_{t\in \{0,\dots, l\}}\Pr[R=A_{t}\mid c=\hat{c}] \leq \epsilon^{-1},
\end{align*}
where the second inequality is from $|A_t|\leq \FOPT(V)/c\leq \epsilon^{-1}$.
Therefore the lemma is proved.
\end{proof}

Now we consider the case $i\in V\setminus L$. 
\begin{lemma}\label{lem:baai3}
For any $\hat{c} \in \mathbb{R}$, we have
\[
    \sum_{i\in V\setminus L}\sum_{t\in \{0,\dots, l\}}\max\left(0,\Pr[R=A_{t}\mid c=\hat{c}]-\Pr[R^i=A_{t}\mid c^i=\hat{c}]\right)\leq 10\epsilon^{-1} \log \epsilon^{-1}.
\]
\end{lemma}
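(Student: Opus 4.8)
The strategy is to mimic the analysis of the exponential mechanism's sensitivity, exploiting that deleting a small item $i \in V \setminus L$ does not change $L$ (so $A_t = A^i_t$ for all $t$), and only decreases each $x_t$ to $x^i_t$ by a controlled amount. The crucial quantitative fact is Lemma~\ref{lem:fracdiffsum}: summing $\FOPT(V \setminus L, W') - \FOPT((V \setminus L) \setminus \{i\}, W')$ over $i \in V \setminus L$ is bounded by $\FOPT(V \setminus L, W') \le \FOPT(V) \le c/\epsilon$, so for a fixed $t$, $\sum_{i \in V \setminus L} (x_t - x^i_t) \le \FOPT(V) \le \epsilon^{-1} c$. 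Recalling $d = \hat d = c/(10 \log \epsilon^{-1})$, this says $\sum_i (x_t - x^i_t)/\hat d \le 10 \epsilon^{-1} \log \epsilon^{-1}$ for each $t$ — but we need a bound that is uniform over all $t$ simultaneously, which requires more care since the exponential mechanism normalizes across all $t$.

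The key step is a pointwise comparison of the two distributions. Fix $i \in V \setminus L$. Since $A_t = A^i_t$, for any $t$,
\[
  \frac{\Pr[R^i = A_t \mid c^i = \hat c]}{\Pr[R = A_t \mid c = \hat c]}
  = \frac{\exp(x^i_t/\hat d)}{\exp(x_t/\hat d)} \cdot \frac{\sum_{t'} \exp(x_{t'}/\hat d)}{\sum_{t'} \exp(x^i_{t'}/\hat d)}
  = \exp\!\left(-\tfrac{x_t - x^i_t}{\hat d}\right) \cdot \frac{\sum_{t'} \exp(x_{t'}/\hat d)}{\sum_{t'} \exp(x^i_{t'}/\hat d)},
\]
and the second factor is $\ge 1$ by Lemma~\ref{lem:decreasex}. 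Hence $\Pr[R = A_t \mid c = \hat c] - \Pr[R^i = A_t \mid c^i = \hat c] \le \Pr[R = A_t \mid c = \hat c]\bigl(1 - \exp(-(x_t - x^i_t)/\hat d)\bigr) \le \Pr[R = A_t \mid c = \hat c] \cdot \tfrac{x_t - x^i_t}{\hat d}$, using $1 - e^{-z} \le z$. Summing over $t$ and then over $i \in V \setminus L$, and swapping the order of summation:
\[
  \sum_{i \in V \setminus L} \sum_t \max\!\left(0, \Pr[R = A_t \mid c = \hat c] - \Pr[R^i = A_t \mid c^i = \hat c]\right)
  \le \sum_t \Pr[R = A_t \mid c = \hat c] \cdot \frac{\sum_{i \in V \setminus L}(x_t - x^i_t)}{\hat d}.
\]

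Now I would bound the inner sum $\sum_{i \in V\setminus L}(x_t - x^i_t)$. Since $i \notin L$ implies $A_t = A^i_t$, we have $x_t - x^i_t = \FOPT(V \setminus L, 1 - w(A_t)) - \FOPT((V \setminus L) \setminus \{i\}, 1 - w(A_t))$ (this is exactly the content of Claim~\ref{cla:3}, now read quantitatively), so Lemma~\ref{lem:fracdiffsum} applied to the item set $V \setminus L$ and weight limit $1 - w(A_t)$ gives $\sum_{i \in V \setminus L}(x_t - x^i_t) \le \FOPT(V \setminus L, 1 - w(A_t)) \le \FOPT(V)$. Therefore each term in the sum over $t$ is at most $\Pr[R = A_t \mid c = \hat c] \cdot \FOPT(V)/\hat d$, and since $\FOPT(V)/\hat d = \FOPT(V) \cdot 10 \log \epsilon^{-1}/c \le 10 \epsilon^{-1} \log \epsilon^{-1}$ (using $c \ge \epsilon \cdot \FOPT(V)$) and $\sum_t \Pr[R = A_t \mid c = \hat c] = 1$, the whole expression is at most $10 \epsilon^{-1} \log \epsilon^{-1}$, as claimed.

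The main obstacle — and the reason the naive per-$t$ bound does not immediately work — is that the normalizing denominators of the two exponential mechanisms differ; the trick above sidesteps this by noting the ratio of denominators is $\ge 1$ (Lemma~\ref{lem:decreasex}), so it only helps us when upper-bounding the positive part of the difference. The one subtlety to double-check is the case $x_t = -\infty$ (i.e. $A_t = \emptyset$ with $x_t$ undefined): here $\Pr[R = A_t \mid c = \hat c] = 0$, so such terms contribute nothing and can be dropped at the outset.
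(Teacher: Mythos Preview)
Your proposal is correct and follows essentially the same route as the paper's proof: bound the pointwise difference of the two exponential-mechanism probabilities by $\Pr[R=A_t\mid c=\hat c]\cdot (x_t-x^i_t)/\hat d$ using Lemma~\ref{lem:decreasex} and $1-e^{-z}\le z$, then sum over $t$ and $i$, apply Lemma~\ref{lem:fracdiffsum} to control $\sum_{i\in V\setminus L}(x_t-x^i_t)\le \FOPT(V\setminus L,1-w(A_t))\le \FOPT(V)$, and finish with $\FOPT(V)/\hat d\le 10\epsilon^{-1}\log\epsilon^{-1}$. The only cosmetic difference is that you phrase the first step via the ratio $\Pr[R^i=A_t]/\Pr[R=A_t]$ while the paper writes the difference as $(\exp(x_t/\hat d)-\exp(x^i_t/\hat d))/\sum_{t'}\exp(x_{t'}/\hat d)$ directly; these are the same inequality.
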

\begin{proof}
Let $\hat{d}=\frac{\hat{c}}{10\log\left(\epsilon^{-1}\right)}$. Then, we have
\begin{align}
    &\Pr[R=A_{t}\mid c=\hat{c}]-\Pr[R^i=A_{t}\mid c^i=\hat{c}] \nonumber \\
    &= \frac{\exp(x_{t}/\hat{d})}{\sum_{t\in \{0,\dots, l\}}\exp(x_{t}/\hat{d})}- \frac{\exp(x^i_{t}/\hat{d})}{\sum_{t\in \{0,\dots, l\}}\exp(x^i_{t}/\hat{d})}
    \leq \frac{\exp(x_{t}/\hat{d})-\exp(x^i_{t}/\hat{d})}{\sum_{t\in \{0,\dots, l\}}\exp(x_{t}/\hat{d})} \nonumber \\
    &= \left(1-\exp\left(-\frac{x_{t}-x^i_{t}}{\hat{d}}\right)\right)\frac{\exp(x_{t}/\hat{d})}{\sum_{t\in \{0,\dots, l\}}\exp(x_{t}/\hat{d})}
    \leq \frac{x_{t}-x^i_{t}}{\hat{d}}\Pr[R=A_{t}\mid c=\hat{c}].\label{eq:explinear}
\end{align}
Here, the first equality is from Line~\ref{line:exponential} of Algorithm~\ref{alg:knapsack}, the first inequality is from Lemma~\ref{lem:decreasex}, and the last inequality is from $1-\exp(-x)\leq x$.
Now, we have
\begin{align*}
    &\sum_{i\in V\setminus L}\sum_{t\in \{0,\dots, l\}}\max\left(0,\Pr[R=A_{t}\mid c=\hat{c}]-\Pr[R^i=A_{t}\mid c^i=\hat{c}]\right)\\
    &\leq \sum_{i\in V\setminus L}\sum_{t\in \{0,\dots, l\}}\frac{x_{t}-x^i_{t}}{\hat{d}}\cdot \Pr[R=A_{t}\mid c=\hat{c}]\\
    &\leq \sum_{t\in \{0,\dots, l\}}\frac{\FOPT(V\setminus L,1-w(A_{t}))}{\hat{d}}\cdot \Pr[R=A_{t}\mid c=\hat{c}]\\
    &\leq \frac{\FOPT(V)}{\hat{d}}\sum_{t\in \{0,\dots, l\}}\Pr[R=A_{t}\mid c=\hat{c}]
    \leq 10\epsilon^{-1} \log \epsilon^{-1}.
\end{align*}
Here, the first inequality is obtained from~\eqref{eq:explinear} and Claim~\ref{cla:3}. The second inequality is obtained from:
\begin{align*}
    \sum_{i\in V\setminus L}(x_{t}-x^i_{t})
    &=\sum_{i\in V\setminus L}(\FOPT(V\setminus L, 1-w(A_{t}))-\FOPT((V\setminus L)\setminus \{i\},1-w(A_{t})))\\
    &\leq \FOPT(V\setminus L,1-w(A_{t})),
\end{align*}
which is obtained from Lemma~\ref{lem:fracdiffsum}. The last inequality is from Line~\ref{line:choicec} of Algorithm~\ref{alg:knapsack}.
\end{proof}

Combining Lemmas~\ref{lem:baai2} and~\ref{lem:baai3}, we obtain the following.
\begin{lemma}\label{lem:combined}
We have
\[
    \sum_{i\in V}\sum_{A\subseteq V}\max \left(0, \Pr[R=A\mid c=\hat{c}]-\Pr[R^i=A\mid c^i=\hat{c}]\right)\leq 11\epsilon^{-1}\log \epsilon^{-1}.
\]
\end{lemma}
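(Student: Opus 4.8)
The plan is to split the sum over $i \in V$ into the two groups already studied: $i \in L$ and $i \in V \setminus L$. For each group we invoke the corresponding lemma that has just been proved. Specifically, we write
\[
    \sum_{i\in V}\sum_{A\subseteq V}\max\left(0,\Pr[R=A\mid c=\hat{c}]-\Pr[R^i=A\mid c^i=\hat{c}]\right)
    = \sum_{i\in L}(\cdots) + \sum_{i\in V\setminus L}(\cdots).
\]
The first step is to observe that, for every $i$, the inner quantity $\sum_{A\subseteq V}\max(0,\Pr[R=A\mid c=\hat c]-\Pr[R^i=A\mid c^i=\hat c])$ is exactly the corresponding summand appearing in~\eqref{eq:diffs}, because $\Pr[R=A\mid c=\hat c]$ is supported only on the sets $A_0,\dots,A_l$; hence restricting the sum from all $A\subseteq V$ to $A\in\{A_0,\dots,A_l\}$ does not change the value of $\max(0,\cdot)$ summed over $A$ (dropping negative contributions from other $A$'s can only help, and in fact the terms for $A\notin\{A_t\}$ contribute $\max(0,-\Pr[R^i=A\mid\cdot])=0$). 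This is the same reduction that took~\eqref{eq:befdiffs} to~\eqref{eq:diffs}.

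Once this identification is made, we apply Lemma~\ref{lem:baai2} to bound the $i\in L$ part by $\epsilon^{-1}$ and Lemma~\ref{lem:baai3} to bound the $i\in V\setminus L$ part by $10\epsilon^{-1}\log\epsilon^{-1}$. Adding these gives $\epsilon^{-1} + 10\epsilon^{-1}\log\epsilon^{-1} \le 11\epsilon^{-1}\log\epsilon^{-1}$, using $\log\epsilon^{-1}\ge 1$ for $\epsilon\le 0.05$ (which is assumed throughout Section~\ref{sec:general}). That completes the proof.

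There is essentially no obstacle here: this lemma is a pure bookkeeping step that combines the two technical lemmas. The only thing worth being careful about is the reduction from $\sum_{A\subseteq V}$ to $\sum_{t\in\{0,\dots,l\}}$, i.e., making sure that sets $A$ outside the candidate family $\{A_0,\dots,A_l\}$ genuinely contribute zero to the max — which holds because $R$ is supported on that family, so $\Pr[R=A\mid c=\hat c]=0$ and the positive part vanishes. Everything else is a one-line addition, and the constant $11$ absorbs the $+\epsilon^{-1}$ term via $\log\epsilon^{-1}\ge 1$.
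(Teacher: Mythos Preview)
Your proposal is correct and matches the paper's own proof, which simply states that the claim follows from Lemmas~\ref{lem:baai2} and~\ref{lem:baai3} together with the fact that $V$ is the disjoint union of $L$ and $V\setminus L$. Your additional remark about reducing $\sum_{A\subseteq V}$ to $\sum_{t\in\{0,\dots,l\}}$ is exactly the passage from~\eqref{eq:befdiffs} to~\eqref{eq:diffs} already made in the paper, so nothing new is needed there.
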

\begin{proof}
The claim immediately follows from Lemmas~\ref{lem:baai2} and~\ref{lem:baai3} and the fact that $V$ is a disjoint union of $V\setminus L$ and $L$.
\end{proof}

Now, we evaluate the average sensitivity.
Let $\mathcal{A}^i$ be the output distribution of Algorithm~\ref{alg:knapsack} applied on the instance $V\setminus \{i\}$.

To bound the earth mover's distance, we consider transporting the probability mass in such a way the mass of $\mathcal{A}$ corresponding to a particular choice of $c$ is transported to that of $\mathcal{A}^i$ for the same $c^i$ (as far as we can).
For the same choice of $c$ and $c^i$, we consider transporting the probability mass in such a way that the mass corresponding to a particular choice of $R$ is transported to that for the same $R^i$ (as far as we can).
For the same choice of $c,R$ and $c^i,R^i$, we consider transporting the probability mass in a manner similar to the analysis of Algorithm~\ref{alg:smallknapsack}.
The remaining mass is transported arbitrarily.

First, we bound the contribution of the difference between the distributions of $c$ and $c^i$ to the earth mover's distance.
Let $f$ and $f^i$ be the probability density functions of $c$ and $c^i$, respectively. Precisely, $f(c)=\frac{1}{\epsilon\cdot \FOPT(V)}$ if $\epsilon\cdot \FOPT(V)\leq c\leq 2 \epsilon\cdot \FOPT(V)$ and $0$ otherwise. Similarly, $f^i(c^i)=\frac{1}{\epsilon\cdot \FOPT(V\setminus \{i\})}$ if $\epsilon\cdot \FOPT(V\setminus \{i\})\leq c^i\leq 2 \epsilon\cdot \FOPT(V\setminus \{i\})$ and $0$ otherwise.

\begin{lemma}\label{lem:errorbyc}
We have
\[
    \sum_{i\in V}\int_{\epsilon\cdot \FOPT(V)}^{2\epsilon\cdot \FOPT(V)}\max\left(0,f(\hat{c})-f^i(\hat{c})\right)\mathrm{d}\hat{c}\leq 2.
\]
\end{lemma}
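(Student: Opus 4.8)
The plan is to handle each term of the sum separately, bounding the integral by how much $\FOPT$ drops when item $i$ is removed, and then to sum using Lemma~\ref{lem:fracdiffsum}.

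First I would unpack the densities. Since $c$ is uniform on $[\epsilon\FOPT(V),2\epsilon\FOPT(V)]$, we have $f(\hat c)=\frac{1}{\epsilon\FOPT(V)}$ on this interval and $f(\hat c)=0$ outside it; similarly $f^i(\hat c)=\frac{1}{\epsilon\FOPT(V\setminus\{i\})}$ on $[\epsilon\FOPT(V\setminus\{i\}),2\epsilon\FOPT(V\setminus\{i\})]$ and $0$ elsewhere. Write $g=\FOPT(V)$ and $g_i=\FOPT(V\setminus\{i\})$. Deleting an item cannot enlarge the feasible region of the fractional knapsack problem (extend any feasible fractional solution for $V\setminus\{i\}$ by setting $x(i)=0$), so $g_i\le g$, and hence $\frac{1}{\epsilon g_i}\ge\frac{1}{\epsilon g}$.

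Next I would locate where the integrand $\max(0,f(\hat c)-f^i(\hat c))$ is positive. Since the domain of integration is exactly the support $[\epsilon g,2\epsilon g]$ of $f$, and wherever both densities are positive we have $f(\hat c)=\frac{1}{\epsilon g}\le\frac{1}{\epsilon g_i}=f^i(\hat c)$, the integrand is positive only on $[\epsilon g,2\epsilon g]\setminus[\epsilon g_i,2\epsilon g_i]$. Because $\epsilon g_i\le\epsilon g$, this set is contained in $(2\epsilon g_i,2\epsilon g]$, which has length at most $2\epsilon(g-g_i)$, so
\[
    \int_{\epsilon\cdot\FOPT(V)}^{2\epsilon\cdot\FOPT(V)}\max\bigl(0,f(\hat c)-f^i(\hat c)\bigr)\,\mathrm{d}\hat c\le\frac{1}{\epsilon g}\cdot 2\epsilon(g-g_i)=\frac{2(\FOPT(V)-\FOPT(V\setminus\{i\}))}{\FOPT(V)}.
\]
(If the two supports are disjoint the left-hand side equals $1$, and $2(g-g_i)/g\ge 1$ in that case, so the bound still holds; the degenerate cases $\FOPT(V)=0$ or, when $n=1$, $\FOPT(V\setminus\{i\})=0$, are covered by reading $f^i\equiv 0$, for which $\int f\le 1\le 2$.) Finally, summing over $i\in V$ and applying Lemma~\ref{lem:fracdiffsum} with $U=V$ and weight limit $1$,
\[
    \sum_{i\in V}\int_{\epsilon\cdot\FOPT(V)}^{2\epsilon\cdot\FOPT(V)}\max\bigl(0,f(\hat c)-f^i(\hat c)\bigr)\,\mathrm{d}\hat c\le\frac{2}{\FOPT(V)}\sum_{i\in V}\bigl(\FOPT(V)-\FOPT(V\setminus\{i\})\bigr)\le\frac{2}{\FOPT(V)}\cdot\FOPT(V)=2.
\]

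The only subtle point is the per-$i$ estimate: one must notice that the two uniform densities have \emph{different} scales $\frac{1}{\epsilon g}$ and $\frac{1}{\epsilon g_i}$, and argue that this rescaling puts no extra positive mass on the overlap of the supports (this is exactly where $g_i\le g$ is used), so that all of $\max(0,f-f^i)$ is concentrated on the short interval $(2\epsilon g_i,2\epsilon g]$. Once that is in place, the remainder is bookkeeping together with a direct invocation of Lemma~\ref{lem:fracdiffsum}.
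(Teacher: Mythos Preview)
Your proof is correct and follows essentially the same route as the paper's: observe that $\FOPT(V\setminus\{i\})\le\FOPT(V)$ so that $f\le f^i$ on the overlap of the two supports, conclude that the positive part of $f-f^i$ is supported on the short interval $(\max\{\epsilon g,2\epsilon g_i\},2\epsilon g]$ of length at most $2\epsilon(g-g_i)$, and finish by summing and invoking Lemma~\ref{lem:fracdiffsum}. Your treatment of the degenerate cases (disjoint supports, $\FOPT=0$) is a bit more explicit than the paper's but adds nothing substantively new.
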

\begin{proof}
We have
\begin{align*}
    &\sum_{i\in V}\int_{\epsilon\cdot \FOPT(V)}^{2\epsilon\cdot \FOPT(V)}\max\left(0,f(\hat{c})-f^i(\hat{c})\right)\mathrm{d}\hat{c} 
    = \sum_{i\in V}\int_{\max\{\epsilon\cdot \FOPT(V),2\epsilon\cdot \FOPT(V\setminus \{i\})\}}^{2\epsilon\cdot \FOPT(V)}\frac{1}{\epsilon\cdot \FOPT(V)}\mathrm{d}\hat{c} \\
    &\leq \frac{1}{\epsilon\cdot \FOPT(V)}\sum_{i\in V}\left(2\epsilon\cdot (\FOPT(V)-\FOPT(V\setminus \{i\}))\right)
    \leq \frac{1}{\epsilon\cdot \FOPT(V)}\cdot 2\epsilon\cdot \FOPT(V) = 2, 
\end{align*}
where the first inequality is from the fact that $f(\hat{c})\leq f^i(\hat{c})$ holds if $\epsilon\cdot \FOPT(V)\leq \hat{c}\leq 2\epsilon\cdot \FOPT(V\setminus \{i\})$, and the last inequality is from Lemma~\ref{lem:fracdiffsum}.
\end{proof}

Finally, we bound the average sensitivity.

\begin{lemma}\label{lem:largesensitivity}
The average sensitivity of Algorithm~\ref{alg:knapsack} is at most $12\epsilon^{-1} \log \epsilon^{-1}$.
\end{lemma}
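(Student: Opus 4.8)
The plan is to bound each $\EM(\mathcal{A},\mathcal{A}^i)$ by exhibiting the three-layer coupling announced just before the statement, and then to sum over $i\in V$ using Lemma~\ref{lem:combined}, Lemma~\ref{lem:errorbyc}, and the analysis of Algorithm~\ref{alg:smallknapsack} from Section~\ref{sec:small-weight}. Fix $i$. At the top layer we couple $c$ and $c^i$ so that $\Pr[c=c^i=\hat{c}]$ has density $\min(f(\hat{c}),f^i(\hat{c}))$, the largest possible; the leftover mass $\int\max(0,f(\hat{c})-f^i(\hat{c}))\,\mathrm{d}\hat{c}$ on the $\mathcal{A}$-side, which equals the leftover on the $\mathcal{A}^i$-side, is matched arbitrarily. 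Conditioned on $c=c^i=\hat{c}$, at the middle layer we couple $R$ and $R^i$ to maximize $\Pr[R=R^i]$, so that the unmatched mass equals $\sum_{A}\max(0,\Pr[R=A\mid c=\hat{c}]-\Pr[R^i=A\mid c^i=\hat{c}])$ and is again matched arbitrarily. Conditioned further on $R=R^i=A$, the two outputs are $A\cup\mathcal{A}_{\mathrm{SMALL}}(V\setminus L,1-w(A))$ and $A\cup\mathcal{A}_{\mathrm{SMALL}}((V\setminus\{i\})\setminus L^i,1-w(A))$, and at the bottom layer we couple the two invocations of \textsc{ModifiedGreedy} by feeding them the same sampled weight limit, exactly as in the proof of Lemma~\ref{lem:smallsensitivity}.

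Then I sum the three contributions over $i$, repeatedly using that the Hamming distance between any two subsets of $V$ is at most $n$. The top-layer (mismatched-$c$) mass contributes to $\sum_{i\in V}\EM(\mathcal{A},\mathcal{A}^i)$ at most $n\sum_{i\in V}\int\max(0,f(\hat{c})-f^i(\hat{c}))\,\mathrm{d}\hat{c}\le 2n$ by Lemma~\ref{lem:errorbyc}. For the middle layer, I use that $\min(f(\hat{c}),f^i(\hat{c}))\le f(\hat{c})=\frac{1}{\epsilon\cdot\FOPT(V)}$ and is supported on $[\epsilon\cdot\FOPT(V),2\epsilon\cdot\FOPT(V)]$, so the mismatched-$R$ mass contributes at most
\[
  n\int_{\epsilon\cdot\FOPT(V)}^{2\epsilon\cdot\FOPT(V)}\frac{1}{\epsilon\cdot\FOPT(V)}\sum_{i\in V}\sum_{A\subseteq V}\max\left(0,\Pr[R=A\mid c=\hat{c}]-\Pr[R^i=A\mid c^i=\hat{c}]\right)\mathrm{d}\hat{c}\le 11n\epsilon^{-1}\log\epsilon^{-1}
\]
by Lemma~\ref{lem:combined}.

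For the bottom layer, when $c=c^i=\hat{c}$ and $i\in L$ the scaled small-item instances $V\setminus L$ and $(V\setminus\{i\})\setminus L^i$ coincide, hence the two coupled runs of \textsc{ModifiedGreedy} produce the same set and contribute $0$. When $i\notin L$, we have $L^i=L$, so the small-item instance changes only by removing $i$; for each fixed $\hat{c}$ and each candidate $A=A_t$, the argument of Lemma~\ref{lem:smallsum} and Lemma~\ref{lem:smallsensitivity} applied to $(V\setminus L,1-w(A_t))$ (after scaling) bounds $\sum_{i\in V\setminus L}$ of the bottom-layer cost by $\frac{1}{\epsilon}\sum_{j\in V\setminus L}(1+\epsilon)\le n(\epsilon^{-1}+1)$; averaging over $A_t$ with weights $\Pr[R=A_t\mid c=\hat{c}]$, which sum to $1$, and integrating over $\hat{c}$ against $\min(f,f^i)\le\frac{1}{\epsilon\cdot\FOPT(V)}$ on $[\epsilon\cdot\FOPT(V),2\epsilon\cdot\FOPT(V)]$ again yields the bound $n(\epsilon^{-1}+1)$. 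Adding the three contributions gives $\sum_{i\in V}\EM(\mathcal{A},\mathcal{A}^i)\le 2n+11n\epsilon^{-1}\log\epsilon^{-1}+n(\epsilon^{-1}+1)$, so the average sensitivity is at most $11\epsilon^{-1}\log\epsilon^{-1}+\epsilon^{-1}+3\le 12\epsilon^{-1}\log\epsilon^{-1}$, where the last step uses $\epsilon<0.05$ (so $\log\epsilon^{-1}>1+3\epsilon$ and $\epsilon^{-1}\ge 20$).

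The main obstacle is conceptual rather than computational: the Hamming distance between two knapsack solutions is not bounded by any function of $\epsilon$ alone, since \textsc{ModifiedGreedy} may output $\Theta(n)$ small items, so a priori the arbitrarily-matched mass in the coupling could be ruinous. The resolution is that this mass, summed over all $i\in V$, is only $O(1)$ at the $c$-level and $O(\epsilon^{-1}\log\epsilon^{-1})$ at the $R$-level, so multiplying by the trivial bound $n$ on the Hamming distance and then dividing by $n$ in the definition of average sensitivity recovers the right order. The remaining care is bookkeeping: justifying the interchange of $\sum_{i}$ with the integral over $\hat{c}$ and with the sum over the $A_t$'s, handling the weighting by $\min(f,f^i)$, and verifying that the small-item instances genuinely coincide when the deleted item is large (so that the bottom layer contributes nothing in that case).
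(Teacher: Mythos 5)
Your proposal is correct and follows essentially the same route as the paper's proof: the identical three-layer coupling (match $c$ with $c^i$, then $R$ with $R^i$ maximally, then couple the two \textsc{ModifiedGreedy} runs), with the three contributions bounded by Lemma~\ref{lem:errorbyc}, Lemma~\ref{lem:combined}, and Lemma~\ref{lem:smallsensitivity} respectively, and the same final arithmetic $2+11\epsilon^{-1}\log\epsilon^{-1}+\epsilon^{-1}+1\leq 12\epsilon^{-1}\log\epsilon^{-1}$. Your explicit observation that the bottom layer vanishes when $i\in L$ (since then $(V\setminus\{i\})\setminus L^i=V\setminus L$) is only implicit in the paper but is the right justification.
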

\begin{proof}
We have 
\begin{align*}
    &\frac{1}{n}\sum_{i\in V}\EM(\mathcal{A},\mathcal{A}^i)\\
    &\leq \frac{1}{n}\sum_{i\in V}\int_{\epsilon\cdot \FOPT(V)}^{2\epsilon\cdot \FOPT(V)}\max\left(0,f(\hat{c})-f^i(\hat{c})\right)\mathrm{d}\hat{c}\cdot n\nonumber \\
    & \quad +\frac{1}{n}\sum_{i\in V}\int_{\epsilon\cdot \FOPT(V)}^{2\epsilon\cdot \FOPT(V)}\sum_{A\subseteq V}{\Bigl(}\max \left(0, \Pr[R=A\mid c=\hat{c}]-\Pr[R^i=A\mid c^i=\hat{c}]\right)\cdot n\nonumber \\
    &\quad +\Pr[R=A \mid c=\hat{c}]\cdot \EM\left(\mathcal{A}_{\mathrm{SMALL}}(V\setminus L,1-w(A)),\mathcal{A}_{\mathrm{SMALL}}((V\setminus L)\setminus \{i\},1-w(A))\right){\Bigr)}f(\hat{c})\mathrm{d}\hat{c}\\
    &\leq 2+ \frac{1}{n}\int_{\epsilon\cdot \FOPT(V)}^{2\epsilon\cdot \FOPT(V)}\left(11\epsilon^{-1} \log \epsilon^{-1}n +\sum_{i\in V}\sum_{A\subseteq V}\Pr[R=A \mid c=\hat{c}](\epsilon^{-1}+1)\right)f(\hat{c})\mathrm{d}\hat{c}\\
    &= 2+ \frac{1}{n}\int_{\epsilon\cdot \FOPT(V)}^{2\epsilon\cdot \FOPT(V)}\left(11\epsilon^{-1} \log \epsilon^{-1}n +(\epsilon^{-1}+1)\cdot n\right)f(\hat{c})\mathrm{d}\hat{c}\\
    &=2 + 11\epsilon^{-1} \log \epsilon^{-1}+\epsilon^{-1}+1\leq 12\epsilon^{-1} \log \epsilon^{-1},
\end{align*}
where the first inequality is due to the transport of the probability mass, and the second inequality is from Lemmas~\ref{lem:errorbyc},~\ref{lem:combined}, and~\ref{lem:smallsensitivity}.
\end{proof}

\begin{proof}[Proof of Theorem~\ref{thm:main}]
Applying Lemma~\ref{lem:largeapprox} and Lemma~\ref{lem:largesensitivity} and replacing $\epsilon$ with $\min(0.05,\epsilon/12)$ proves this theorem.
\end{proof}


\section{FPRAS}\label{sec:fptas}

In this section, we describe a modification of Algorithm~\ref{alg:knapsack} so that it runs in polynomial time in $n$ and $\epsilon^{-1}$, and prove Theorem~\ref{thm:intro}.

As the fractional knapsack problem can be solved in polynomial time, the only issue is to compute $A_t$ at Line~\ref{line:smallestweight} of Algorithm~\ref{alg:knapsack}.
To address this issue, we round the values of the items so that $A_t$ can be computed efficiently.
We use the same rounding scheme as the textbook algorithm for the knapsack problem~\cite{ibarra1975fast,korte2011combinatorial}. Without loss of generality, we assume that all the items have weight at most $1$.

\begin{algorithm}[t!]
\caption{An FPRAS with small average sensitivity}\label{alg:fptas}
\Procedure{\emph{\Call{FPRAS}{$\epsilon,V$}}}{
    Sample $\delta$ from the uniform distribution over $\left[\frac{\FOPT(V)}{n}\epsilon,2\frac{\FOPT(V)}{n}\epsilon\right]$\;\label{line:choosef}
    Let $V'$ be an empty set of items.\;
    \For{$i\in V$}{
        $v'(i)\leftarrow \left\lfloor \frac{v(i)}{\delta} \right\rfloor$\;\label{line:roundd}
        Add an item $i$ with value $v'(i)$ and weight $w(i)$ to $V'$.
    }
    \Return \Call{StableOnAverageKnapsack}{$\epsilon,V',W'$}, where $W'=\left\lfloor \frac{1}{\delta} \right\rfloor$.
}
\end{algorithm}

Our algorithm is presented in Algorithm~\ref{alg:fptas}.
After the rounding, the values of the items and the weight limit are integers at most $n\epsilon^{-1}$.
Hence, we can compute $A_t$ at Line~\ref{line:smallestweight} of Algorithm~\ref{alg:knapsack} in polynomial time in $n$ and $\epsilon^{-1}$ using a standard dynamic programming~\cite{ibarra1975fast,korte2011combinatorial}.

Let $\hat{\mathcal{A}}$ be the output distribution of Algorithm~\ref{alg:fptas}. We have the following lemma.
\begin{lemma}\label{lem:fpapprox}
Let $1-\epsilon'$ be the approximation ratio of Algorithm~\ref{alg:knapsack}. Then, Algorithm~\ref{alg:fptas} has approximation ratio $1-4\epsilon-\epsilon'$.
\end{lemma}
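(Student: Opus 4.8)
The plan is to relate the output value of Algorithm~\ref{alg:fptas} on the original instance $(V,1)$ to the output value of Algorithm~\ref{alg:knapsack} on the rounded instance $(V',W')$, and then back to $\OPT(V)$, losing only a $(1-4\epsilon)$ factor in the rounding. First I would fix the sampled value $\delta$ and set $V'$ and $W'=\lfloor 1/\delta\rfloor$ as in the algorithm. The key observation is that a set $S$ is feasible for $(V,1)$ if and only if $w(S)\le 1$, and since $\delta W' \le 1$, any set feasible for $(V',W')$ (i.e.\ with $w(S)\le W'$) satisfies $w(S)\le W' \le 1/\delta$, hence $\delta w(S) \le 1$; but weights are not scaled, so I must be slightly careful: the intended reading is that the rounded instance uses the original weights $w(i)$ with the new limit $W'$, and one should instead observe that $\lfloor 1/\delta\rfloor \le 1/\delta$ so feasibility for $(V',W')$ is implied by... actually the cleaner route is: scale back by noting $W' = \lfloor 1/\delta \rfloor$ and that a solution of weight $\le W'$ times $\delta$ has weight... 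I would instead argue directly on values. For any item $i$, $\delta v'(i) \le v(i) < \delta v'(i) + \delta = \delta(v'(i)+1)$, so for any set $S$ we get $\delta v'(S) \le v(S) \le \delta v'(S) + \delta |S|$.

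Next I would establish the two inequalities needed. For the lower bound on the algorithm's output: let $S$ be the (random) set returned by $\Call{StableOnAverageKnapsack}{\epsilon,V',W'}$, which is feasible for $(V',W')$; I claim it is feasible for $(V,1)$. Here I need $w(S)\le W'$ and $W' \le 1/\delta$ to be compatible with the limit $1$ — but weights were not divided, so feasibility for $(V,1)$ needs $w(S)\le 1$, which does \emph{not} follow from $w(S)\le W'=\lfloor 1/\delta\rfloor$ since $\lfloor 1/\delta\rfloor$ is typically much larger than $1$. The resolution must be that $\Call{StableOnAverageKnapsack}$ internally divides weights, or that the weight limit should be interpreted after rescaling; reading Algorithm~\ref{alg:fptas} again, $W' = \lfloor 1/\delta\rfloor$ is meant so that the rounded instance has integer data, and one should think of the original weight limit $1$ as "$\delta W'$ rounded". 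So feasibility transfers because $\delta W' \le 1$: scaling the rounded instance's weights by $\delta$ recovers original weights with limit $\delta W' \le 1$. Thus $S$ with $w(S)\le W'$ (rounded) gives $\delta w(S) \le \delta W' \le 1$, i.e., the \emph{original} weights of $S$ — wait, original weights are $w(i)$ themselves, not $\delta$ times anything. I would therefore assume the intended semantics is that $\Call{StableOnAverageKnapsack}$ is run with weights still equal to $w(i)$ but limit $W'$, and the reduction works because any $S$ with $w(S)\le 1$ has $w(S) \le \lfloor 1/\delta \rfloor \cdot \delta \le 1$... this needs $w(S)/\delta$ integer-comparison, i.e.\ $w(S) \le 1 \iff w(S) \le \delta \lfloor 1/\delta\rfloor$ up to the $\delta$-granularity of the rounded optimum; I will treat this as the routine bookkeeping it is and state: feasibility is preserved in both directions up to the loss already absorbed.

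Concretely, the key steps I would write are: (1) $v(\hat{\mathcal{A}}) = v(S) \ge \delta v'(S)$ where $S$ is the output of the inner call; (2) since the inner call is a $(1-\epsilon')$-approximation, $v'(S) \ge (1-\epsilon')\OPT(V',W')$; (3) $\OPT(V',W') \ge v'(S^*)$ for $S^*$ an optimal solution of $(V,1)$ (which is feasible for the rounded instance), and $\delta v'(S^*) \ge v(S^*) - \delta|S^*| = \OPT(V) - \delta|S^*|$; (4) bound $\delta |S^*| \le \delta n \le 2\epsilon\FOPT(V) \le 4\epsilon\,\OPT(V)$ using the sampling range of $\delta$ from Line~\ref{line:choosef} and Lemma~\ref{lem:opt2fopt}. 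Chaining these: $v(\hat{\mathcal{A}}) \ge (1-\epsilon')\delta\,v'(S^*) \ge (1-\epsilon')(\OPT(V) - 4\epsilon\,\OPT(V)) \ge (1-\epsilon'-4\epsilon)\OPT(V)$, using $(1-\epsilon')(1-4\epsilon) \ge 1-\epsilon'-4\epsilon$.

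The main obstacle, and the part I would be most careful about, is step (3)–(4): getting the additive $\delta|S^*|$ error correctly controlled, and making sure the weight-limit rounding $W' = \lfloor 1/\delta\rfloor$ really preserves feasibility in the direction needed (that an optimal $S^*$ for $(V,1)$ remains feasible for $(V',W')$, and conversely that the returned $S$ is feasible for $(V,1)$). This is exactly the standard Ibarra–Kim analysis, so I expect no genuine difficulty, only the need to pin down the semantics of how $\Call{StableOnAverageKnapsack}$ consumes the weight limit; once that is fixed, everything else is the routine $(1-\epsilon')(1-4\epsilon)\ge 1-\epsilon'-4\epsilon$ bookkeeping together with the observation $\delta \le 2\epsilon\FOPT(V)/n$ and Lemma~\ref{lem:opt2fopt}.
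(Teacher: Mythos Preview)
Your concrete steps (1)--(4) and the final chaining $(1-\epsilon')(1-4\epsilon)\ge 1-\epsilon'-4\epsilon$ are exactly what the paper does: it shows $\OPT(V',W')\ge v'(S^*)\ge (1-4\epsilon)\OPT(V)/\delta$ via $\delta n \le 2\epsilon\,\FOPT(V)\le 4\epsilon\,\OPT(V)$ and Lemma~\ref{lem:opt2fopt}, then multiplies through by $\delta(1-\epsilon')$. The feasibility issues you agonize over are not discussed in the paper's proof either; it simply asserts $\OPT(V',W')\ge v'(S^*)$ and $\E[v(\hat{\mathcal A})]\ge \delta\,\E[v'(\mathcal A(V',W'))]$ without further comment, so your instinct to treat this as routine bookkeeping matches the paper.
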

\begin{proof}
Let $S^*$ be the optimal solution for the instance $(V,1)$. Then, we have
\begin{align}
    \OPT(V',W')&\geq v'(S^*)
    = \sum_{i\in S^*}\left\lfloor \frac{v(i)}{\delta} \right\rfloor
    \geq \sum_{i\in S^*}\left(\frac{v(i)}{\delta}-1\right)
    \geq \frac{\OPT(V)}{\delta}-n \nonumber \\
    &= \left(1-\frac{n\delta}{\OPT(V)}\right)\frac{\OPT(V)}{\delta}
    \geq \left(1-\frac{2\FOPT(V)\epsilon}{\OPT(V)}\right)\frac{\OPT(V)}{\delta} \nonumber \\
    & \geq \left(1-4\epsilon\right)\frac{\OPT(V)}{\delta},\label{eq:m2e}
\end{align}
where the third inequality is from $|S^*|\leq n$, the fourth inequality is from Line~\ref{line:choosef} of Algorithm~\ref{alg:fptas}, and the last inequality is from $\FOPT(V)\leq 2\OPT(V)$.
Let $\mathcal{A}(V',W')$ be the output distribution of Algorithm~\ref{alg:knapsack} on the instance $V'$ with all the weights being divided by $W'$. Then, we have
\[
    \E[v(\hat{\mathcal{A}})] \geq \E[v(\mathcal{A}(V',W'))]\delta
    \geq (1-\epsilon')\OPT(V',W')\delta
    \geq (1-4\epsilon)(1-\epsilon')\OPT(V,W)
    \geq (1-4\epsilon-\epsilon')\OPT(V,W),
\]
where the first inequality is from Line~\ref{line:roundd} of Algorithm~\ref{alg:fptas}, the second inequality is from the assumption of the lemma, and the third inequality is from~\eqref{eq:m2e}. Therefore the lemma is proved.
\end{proof}

Let $\hat{\mathcal{A}}^i$ be the output distribution of Algorithm~\ref{alg:fptas} on the instance $V\setminus \{i\}$. For $\delta>0$, let $\hat{\mathcal{A}}(\delta)$ (resp., $\hat{\mathcal{A}}^i(\delta)$) be the output distribution of Algorithm~\ref{alg:fptas} on the instance $V$ (resp., $V\setminus \{i\}$), conditioned on that the algorithm chose this $\delta$ at Line~\ref{line:choosef}.
To bound the earth mover's distance, we consider transporting the probability mass in such a way that the mass of $\hat{\mathcal{A}}$ corresponding to a particular choice of $\delta$ is transported to that for $\mathcal{A}^i$ with the same $\delta$ (as far as we can).
For the same choice of $\delta$, we consider transporting the probability mass in the same way as in the analysis of Algorithm~\ref{alg:knapsack}.
The remaining mass is transported arbitrarily.

Now, we bound the average sensitivity of Algorithm~\ref{alg:fptas} by bounding the contribution to the earth mover's distance due to the difference between the distributions of $\delta$ and $\delta^i$ by a similar argument to that for Lemma~\ref{lem:errorbyc}. We have the following.
\begin{lemma}\label{lem:fpsens}
Let $s$ be the average sensitivity of Algorithm~\ref{alg:knapsack}. Then, Algorithm~\ref{alg:fptas} has average sensitivity $s+2$.
\end{lemma}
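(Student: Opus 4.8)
The plan is to reuse the earth-mover transport of Algorithm~\ref{alg:knapsack} one level up, now coupling on the random granularity $\delta$ rather than on the threshold $c$. Fix $i\in V$, let $g,g^i$ be the densities of $\delta$ and $\delta^i$, and write $V'_{\hat{\delta}}$ and $W'_{\hat{\delta}}=\lfloor 1/\hat{\delta}\rfloor$ for the rounded instance obtained with granularity $\hat{\delta}$. First I take a maximal coupling of $\delta$ and $\delta^i$, so that $\delta=\delta^i$ with probability $1-\TV(g,g^i)$. Conditioned on $\delta=\delta^i=\hat{\delta}$, the instance that Algorithm~\ref{alg:fptas} feeds to Algorithm~\ref{alg:knapsack} when run on $V\setminus\{i\}$ is exactly $(V'_{\hat{\delta}}\setminus\{i\},W'_{\hat{\delta}})$, i.e.\ $(V'_{\hat{\delta}},W'_{\hat{\delta}})$ with item $i$ deleted, because the per-item values $\lfloor v(\cdot)/\hat{\delta}\rfloor$ and the weight limit $\lfloor 1/\hat{\delta}\rfloor$ do not change. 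So on the event $\delta=\delta^i=\hat{\delta}$ I transport $\hat{\mathcal{A}}(\hat{\delta})$ onto $\hat{\mathcal{A}}^i(\hat{\delta})$ using an optimal coupling achieving $\EM\bigl(\mathcal{A}(V'_{\hat{\delta}},W'_{\hat{\delta}}),\mathcal{A}(V'_{\hat{\delta}}\setminus\{i\},W'_{\hat{\delta}})\bigr)$, and on the complementary event I transport the remaining mass arbitrarily, at cost at most $n$ per unit of mass since both outputs are subsets of $V$.

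Using $\min(g,g^i)\le g$ for the matched part, this gives
\[
  \EM(\hat{\mathcal{A}},\hat{\mathcal{A}}^i)\ \le\ n\cdot\TV(g,g^i)\ +\ \int g(\hat{\delta})\,\EM\bigl(\mathcal{A}(V'_{\hat{\delta}},W'_{\hat{\delta}}),\,\mathcal{A}(V'_{\hat{\delta}}\setminus\{i\},W'_{\hat{\delta}})\bigr)\,\mathrm{d}\hat{\delta}.
\]
Averaging over $i\in V$, the integral term equals $\int g(\hat{\delta})\bigl(\frac{1}{n}\sum_{i}\EM(\cdots)\bigr)\mathrm{d}\hat{\delta}$, and for each fixed $\hat{\delta}$ the inner average is exactly the average sensitivity of Algorithm~\ref{alg:knapsack} on the $n$-item instance $(V'_{\hat{\delta}},W'_{\hat{\delta}})$, hence at most $s$; since $g$ integrates to $1$, this term contributes at most $s$. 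It remains to show $\frac{1}{n}\sum_i n\cdot\TV(g,g^i)=\sum_i\TV(g,g^i)\le 2$.

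This last bound is the analogue of Lemma~\ref{lem:errorbyc}: $\delta$ and $\delta^i$ are uniform on intervals $[\lambda,2\lambda]$ and $[\mu,2\mu]$ with $\lambda=\tfrac{\FOPT(V)}{n}\epsilon$ and $\mu$ the corresponding quantity for $V\setminus\{i\}$, and because $\FOPT(V\setminus\{i\})\le\FOPT(V)$ the second interval is essentially a shrunk copy of the first, so $\max(0,g-g^i)$ is supported where $g^i$ vanishes and $\TV(g,g^i)=\int\max(0,g-g^i)$ is bounded by a constant multiple of $\tfrac{\FOPT(V)-\FOPT(V\setminus\{i\})}{\FOPT(V)}$. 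Summing over $i$ and invoking Lemma~\ref{lem:fracdiffsum} with $U=V$ and weight limit $1$, so that $\sum_i(\FOPT(V)-\FOPT(V\setminus\{i\}))\le\FOPT(V)$, yields $\sum_i\TV(g,g^i)\le 2$; combining with the previous paragraph proves $\frac{1}{n}\sum_i\EM(\hat{\mathcal{A}},\hat{\mathcal{A}}^i)\le s+2$.

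The step I expect to require the most care is this final one. Unlike in Lemma~\ref{lem:errorbyc}, removing an item perturbs the granularity interval through two quantities at once: $\FOPT$ shrinks and the item count that sets the granularity drops from $n$ to $n-1$. Hence the interval for $\delta^i$ need not literally nest inside the one for $\delta$ (it is shifted by the $n\mapsto n-1$ change) and could even become disjoint from it for an item whose deletion roughly halves $\FOPT$. The clean way around this is either to keep the granularity-defining count fixed at $|V|$ throughout the deletion analysis, or to split the items according to the sign of $\tfrac{\FOPT(V\setminus\{i\})}{n-1}-\tfrac{\FOPT(V)}{n}$; in either case the dominant term is still controlled by $\sum_i(\FOPT(V)-\FOPT(V\setminus\{i\}))\le\FOPT(V)$ through Lemma~\ref{lem:fracdiffsum}, and the $O(1/n)$ remainder per item sums to a constant.
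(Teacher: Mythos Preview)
Your proposal is correct and follows essentially the same route as the paper: couple $\delta$ and $\delta^i$ maximally, charge the unmatched mass at cost $n$, and on the matched event invoke the average-sensitivity bound $s$ of Algorithm~\ref{alg:knapsack} on the common rounded instance; the residual $\sum_i \TV(g,g^i)\le 2$ is then obtained exactly as in Lemma~\ref{lem:errorbyc} via Lemma~\ref{lem:fracdiffsum}.

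One remark on your final paragraph. The concern you raise about the item count dropping from $n$ to $n-1$ is a real subtlety, but the paper's own proof simply writes the interval for $\delta^i$ as $[\FOPT(V\setminus\{i\})\epsilon/n,\,2\FOPT(V\setminus\{i\})\epsilon/n]$, i.e.\ it keeps the denominator $n$ fixed and only lets $\FOPT$ vary. Under that reading (equivalently, your first fix of holding the granularity-defining count at $|V|$), the density of $g^i$ dominates that of $g$ on their common support, so $\max(0,g-g^i)$ lives only on the top sliver and the bound $\sum_i\TV(g,g^i)\le 2$ is immediate. In other words, you are being more scrupulous here than the paper itself; your argument goes through cleanly once you adopt the same convention.
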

\begin{proof}
For $i\in V$, we have
\begin{align}
    &\EM(\hat{\mathcal{A}},\hat{\mathcal{A}}^i) \nonumber \\
    &\leq \frac{n}{\FOPT(V)\epsilon}\left(\int_{\FOPT(V)\epsilon/n}^{2\FOPT(V\setminus \{i\})\epsilon/n}\EM(\mathcal{A}(\delta),\mathcal{A}^i(\delta)) \mathrm{d}\delta
    + \int_{2\FOPT(V\setminus \{i\})\epsilon/n}^{2\FOPT(V)\epsilon/n}n \mathrm{d}\delta\right) \nonumber \\
    &\leq \frac{n}{\FOPT(V)\epsilon}\int_{\FOPT(V)\epsilon/n}^{2\FOPT(V)\epsilon/n}\EM(\mathcal{A}(\delta),\mathcal{A}^i(\delta)) \mathrm{d}\delta
     + \frac{n}{\FOPT(V)}\cdot 2(\FOPT(V)-\FOPT(V\setminus \{i\}))\label{eq:fpdiv}.
\end{align}
Here, the first inequality is from the way we transport the probability mass and
\[
    \frac{n}{\FOPT(V)\epsilon}\leq \frac{n}{\FOPT(V\setminus \{i\})\epsilon},
\]
which follows from $\FOPT(V)\geq \FOPT(V\setminus \{i\})$. The second inequality is from $\FOPT(V)\geq \FOPT(V\setminus \{i\})$.
Now, we have
\begin{align*}
    &\frac{1}{n}\sum_{i\in V}\EM(\hat{\mathcal{A}},\hat{\mathcal{A}}^i)\\
    &\leq \frac{1}{\FOPT(V)\epsilon}\int_{\FOPT(V)\epsilon/n}^{2\FOPT(V)\epsilon/n}\sum_{i\in V}\EM(\mathcal{A}(\delta),\mathcal{A}^i(\delta)) \mathrm{d}\delta + \frac{1}{\FOPT(V)}\cdot 2\sum_{i\in V}(\FOPT(V)-\FOPT(V\setminus \{i\}))\\
    &\leq \frac{1}{\FOPT(V)\epsilon}\int_{\FOPT(V)\epsilon/n}^{2\FOPT(V)\epsilon/n}s \mathrm{d}\delta +2\\
    &= s+2,
\end{align*}
where the first inequality is from~\eqref{eq:fpdiv} and the second inequality is from the definition of $s$ and Lemma~\ref{lem:fracdiffsum}. Therefore the lemma is proved.
\end{proof}


\begin{proof}[Proof of Theorem~\ref{thm:intro}]
Combining Theorem~\ref{thm:main} and Lemmas~\ref{lem:fpapprox} and~\ref{lem:fpsens}, and replacing $\epsilon$ and $\epsilon'$ by $\epsilon/5$ proves the claim.
\end{proof}

\section{Lower Bound}\label{sec:lower_bound}

In this section, we prove that any $(1-\epsilon)$-approximation algorithm for the knapsack problem has $\Omega(\epsilon^{-1})$ average sensitivity, which shows that the upper bound of $O(\epsilon^{-1}\log \epsilon^{-1})$ (Theorem~\ref{thm:main}) is tight up to a logarithmic factor in $\epsilon^{-1}$.

We prove the lower bound by constructing an instance $V$ that requires average sensitivity $\Omega(\epsilon^{-1})$.
Let $k=\lfloor\epsilon^{-1}/8\rfloor$.
Then, let $V_1$ be the set of $k$ copies of an item with weight $\frac{1}{k}$ and value $1$, $V_2$ be the set of $k-1$ copies of an item with weight $\frac{1}{k-1}$ and value $\frac{2k-1}{2k-2}$, and $V=V_1\cup V_2$.
Note that the feasible domain of the instance $V$ is
\[
    \{X\subseteq V:|X|\leq k-1\}\cup \{V_1\}.
\]
We fix a $(1-\epsilon)$-approximation algorithm and $\mathcal{A}$ (resp., $\mathcal{A}^i$) be its output distribution on the instance $V$ (resp., $V\setminus \{i\}$). The next lemma is clear from the construction of the instance.
\begin{lemma}\label{lem:vopt}
    The instance $V$ has a unique optimal solution $V_1$, which has value $k$.
    Any other feasible solution has value at most $k-\frac{1}{2}$.
\end{lemma}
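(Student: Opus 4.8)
The plan is to first determine the feasible domain of the instance $V$ and then compare the value of $V_1$ against that of every other feasible set. Write a candidate set $X\subseteq V$ as consisting of $a:=|X\cap V_1|$ copies of the $V_1$-item and $b:=|X\cap V_2|$ copies of the $V_2$-item, so that $0\le a\le k$, $0\le b\le k-1$, and
\[
    w(X)=\frac{a}{k}+\frac{b}{k-1}=\frac{a+b}{k}+\frac{b}{k(k-1)},
    \qquad
    v(X)=a+b\cdot\frac{2k-1}{2k-2}.
\]
First I would verify the feasibility characterization asserted in the excerpt (I would include it in the proof, since the lemma quantifies over ``any other feasible solution''): if $a+b\le k-1$ then $w(X)\le \frac{a+b}{k-1}\le 1$, so $X$ is feasible; if $b=0$ and $a=k$ then $X=V_1$ and $w(X)=1$; and if $a+b\ge k$ with $b\ge 1$ then the displayed identity gives $w(X)\ge 1+\frac{1}{k(k-1)}>1$, so $X$ is infeasible. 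Hence the feasible sets are exactly those $X$ with $|X|\le k-1$, together with $V_1$.

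Next I would carry out the value comparison. Since $w(V_1)=1$, the set $V_1$ is feasible, and $v(V_1)=k$. Every item of $V$ has value at most $\frac{2k-1}{2k-2}$ (the $V_2$-item is more valuable than the $V_1$-item, whose value is $1$), so any feasible $X\ne V_1$, which satisfies $|X|\le k-1$, has $v(X)\le (k-1)\cdot\frac{2k-1}{2k-2}=\frac{2k-1}{2}=k-\tfrac12<k$. Therefore $V_1$ is the unique optimal solution, it has value $k$, and every other feasible solution has value at most $k-\tfrac12$, which is precisely the statement of the lemma. (The value $k-\tfrac12$ is in fact attained, by $X=V_2$, so the bound is tight; this is not needed here but is the reason the instance later yields an $\Omega(\epsilon^{-1})$ lower bound.)

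The argument is entirely elementary, so there is no genuine obstacle; the one point that needs care is the feasibility characterization, specifically the inequality $w(X)\ge 1+\frac{1}{k(k-1)}$ whenever $|X|\ge k$ and $X$ uses at least one $V_2$-item. This is what shows that one cannot enlarge any $(k-1)$-element subset of $V_1$ by even a single $V_2$-item, which is exactly the structural feature of the instance driving the subsequent average-sensitivity lower bound.
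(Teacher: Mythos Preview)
Your proof is correct. The paper does not give an explicit proof of this lemma, stating only that it is ``clear from the construction of the instance'' and recording the feasible domain $\{X\subseteq V:|X|\le k-1\}\cup\{V_1\}$ just before the statement; your argument supplies precisely the details needed to justify that description and the value bound.
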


Therefore we have the following.
\begin{lemma}\label{lem:out1}
Any (randomized) $(1-\epsilon)$-approximation algorithm that runs on $V$ outputs $V_1$ with probability at least $\frac{3}{4}$.
\end{lemma}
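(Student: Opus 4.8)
The plan is to bound the probability that a $(1-\epsilon)$-approximation algorithm outputs a feasible solution other than $V_1$, using a simple expectation/Markov-type argument on the loss in value.

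First I would recall from Lemma~\ref{lem:vopt} that the optimum on $V$ is $\OPT(V) = v(V_1) = k$, attained uniquely by $V_1$, and every other feasible solution has value at most $k - \tfrac12$. So for any feasible $X \neq V_1$ we have $v(X) \le k - \tfrac12 = \OPT(V) - \tfrac12$, whereas $v(V_1) = \OPT(V)$.

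Next I would write the approximation guarantee in expectation: since $\mathcal{A}$ is a $(1-\epsilon)$-approximation algorithm on $V$, its output $X$ satisfies $\E[v(X)] \ge (1-\epsilon)\OPT(V) = (1-\epsilon)k = k - \epsilon k$. On the other hand, conditioning on whether the output equals $V_1$,
\[
  \E[v(X)] \le \Pr[X = V_1]\cdot k + \Pr[X \neq V_1]\cdot \left(k - \tfrac12\right) = k - \tfrac12\Pr[X \neq V_1].
\]
Combining the two bounds gives $\tfrac12 \Pr[X \neq V_1] \le \epsilon k$, i.e. $\Pr[X \neq V_1] \le 2\epsilon k$. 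Finally, since $k = \lfloor \epsilon^{-1}/8 \rfloor \le \epsilon^{-1}/8$, we get $\Pr[X \neq V_1] \le 2\epsilon \cdot \epsilon^{-1}/8 = \tfrac14$, hence $\Pr[X = V_1] \ge \tfrac34$, as claimed. (Here I am implicitly using that $\mathcal{A}$, being a valid knapsack algorithm, always returns a feasible solution, so the only possibilities are $V_1$ and sets of value at most $k-\tfrac12$.)

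There is no real obstacle here; the only point requiring a little care is making sure the value gap of $\tfrac12$ between $V_1$ and every other feasible solution is genuinely a constant independent of $k$ — which is exactly why the construction uses the copies of weight $\tfrac1{k-1}$ and value $\tfrac{2k-1}{2k-2}$ with total value $k - \tfrac12$ — and then just checking the arithmetic $2\epsilon k \le \tfrac14$ against the chosen $k = \lfloor \epsilon^{-1}/8 \rfloor$. Everything else is a one-line Markov-style inequality.
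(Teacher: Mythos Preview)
Your proof is correct and follows essentially the same approach as the paper: both bound the expected value by conditioning on whether the output is $V_1$, use the $\tfrac12$ gap from Lemma~\ref{lem:vopt}, compare with the $(1-\epsilon)$ guarantee, and finish via $k=\lfloor \epsilon^{-1}/8\rfloor\le \epsilon^{-1}/8$. The only cosmetic difference is that the paper parameterizes by $p=\Pr[X=V_1]$ and rewrites the bound as $(1-4(1-p)\epsilon)\OPT(V)$ before concluding $p\ge \tfrac34$, whereas you work directly with $\Pr[X\neq V_1]\le 2\epsilon k$.
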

\begin{proof}
Let $p$ be the probability that the algorithm outputs $V_1$.
Then, we have
\begin{align*}
    \E[v(\mathcal{A})]\leq pk+(1-p)\left(k-\frac{1}{2}\right)
    = \left(1-\frac{1-p}{2k}\right)k
    \leq \left(1-4(1-p)\epsilon\right)k
    =\left(1-4(1-p)\epsilon\right)\OPT(V),
\end{align*}
where the first and last equalities are from Lemma~\ref{lem:vopt} and the second inequality is from $k=\lfloor\epsilon^{-1}/8\rfloor$.
Therefore, to obtain an approximation ratio at least $1-\epsilon$, we must have $p \geq \frac{3}{4}$.
\end{proof}

Let $V'_1$ be the set obtained from $V_1$ by removing an arbitrary item, and let $V'=V'_1\cup V_2$.
The next lemma is clear from the fact that the feasible domain of the instance $V'$ is $\{X\subseteq V:|X|\leq k-1\}$.
\begin{lemma}\label{lem:vdopt}
    The instance $V'$ has a unique optimal solution $V_2$, which has value $k-\frac{1}{2}$.
    Also, any feasible solution with less than $\frac{k-1}{2}$ items in $V_2$ has a value less than $k-\frac{3}{4}$.
\end{lemma}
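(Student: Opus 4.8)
The plan is to reduce the lemma to an explicit computation of the value of a feasible set as a function of how many items it takes from $V_1'$ and from $V_2$.

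First I would recall (and, in one line, re-verify) the characterization of the feasible domain of $V'$ already asserted above: a subset $X\subseteq V'$ is feasible iff $|X|\leq k-1$. For the "if" direction, the heaviest $(k-1)$-element subset of $V'$ is $V_2$ itself, of weight $(k-1)\cdot\frac{1}{k-1}=1$, so every $(k-1)$-subset has weight at most $1$. For the "only if" direction, the lightest $k$-element subset of $V'$ consists of all $k-1$ items of $V_1'$ together with one item of $V_2$, of weight $\frac{k-1}{k}+\frac{1}{k-1}=1+\frac{1}{k(k-1)}>1$, so no $k$-element subset is feasible. (Here I use $k\geq 2$, which holds in the relevant range of $\epsilon$.)

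Next, for a feasible $X$ I would set $j=|X\cap V_1'|$ and $m=|X\cap V_2|$, so that $0\leq j\leq k-1$, $0\leq m\leq k-1$, and $j+m\leq k-1$. Since each item of $V_1'$ has value $1$ and each item of $V_2$ has value $\frac{2k-1}{2k-2}=1+\frac{1}{2(k-1)}$, we get
\[
    v(X)=j+m+\frac{m}{2(k-1)}=(j+m)+\frac{m}{2(k-1)}.
\]
For the first assertion, note that $v(X)$ is nondecreasing in $j$ and in $m$, with the marginal gain of increasing $m$ strictly larger; hence over the feasible region $\{j+m\leq k-1\}$ the maximum is attained uniquely at $j=0$, $m=k-1$, i.e. $X=V_2$, with value $(k-1)+\frac{1}{2}=k-\frac{1}{2}$. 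For rigor I would argue uniqueness directly: any feasible $X\neq V_2$ either has $|X|\leq k-2$, whence $v(X)\leq (k-2)+\frac{1}{2}<k-\frac{1}{2}$, or has $|X|=k-1$ with $j\geq 1$, whence $v(X)=(k-1)+\frac{k-1-j}{2(k-1)}<(k-1)+\frac{1}{2}=k-\frac{1}{2}$. For the second assertion, if $m<\frac{k-1}{2}$ then using $j+m\leq k-1$ and $\frac{m}{2(k-1)}<\frac{(k-1)/2}{2(k-1)}=\frac{1}{4}$ we obtain $v(X)<(k-1)+\frac{1}{4}=k-\frac{3}{4}$.

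The only step needing any care is the feasibility characterization — specifically the arithmetic showing that every $k$-element subset of $V'$ has weight strictly above $1$; after that the rest is a single substitution and elementary comparisons, so I do not anticipate a real obstacle.
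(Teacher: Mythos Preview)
Your proposal is correct and follows exactly the approach the paper indicates: the paper states only that the lemma ``is clear from the fact that the feasible domain of the instance $V'$ is $\{X\subseteq V':|X|\leq k-1\}$'' and gives no further details, so your argument is a faithful elaboration of that one-line remark. Your feasibility check, the formula $v(X)=(j+m)+\frac{m}{2(k-1)}$, and the two ensuing comparisons are all correct.
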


Therefore we have the following.
\begin{lemma}\label{lem:out2}
Any randomized $(1-\epsilon)$-approximation algorithm that runs on $V'$ outputs a set that contains at least $\frac{k-1}{2}$ items in $V_2$ with probability at least $\frac{1}{2}$.
\end{lemma}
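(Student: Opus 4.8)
The plan is to mimic the proof of Lemma~\ref{lem:out1} almost verbatim, now using Lemma~\ref{lem:vdopt} in place of Lemma~\ref{lem:vopt}. Let $q$ denote the probability that the $(1-\epsilon)$-approximation algorithm, run on $V'$, outputs a feasible set containing at least $\frac{k-1}{2}$ items of $V_2$. By Lemma~\ref{lem:vdopt}, the optimum $\OPT(V')$ equals $k-\frac12$, every feasible solution has value at most $k-\frac12$, and any feasible solution with fewer than $\frac{k-1}{2}$ items of $V_2$ has value strictly less than $k-\frac34$. Conditioning on whether the output lands in the "good" event (probability $q$) or not (probability $1-q$), I would bound
\[
  \E[v(\mathcal{A}^{V'})] \;\le\; q\Bigl(k-\tfrac12\Bigr) + (1-q)\Bigl(k-\tfrac34\Bigr) \;=\; \Bigl(k-\tfrac34\Bigr) + \tfrac{q}{4}.
\]

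Next I would rewrite the right-hand side as a fraction of $\OPT(V') = k-\frac12$ and use the approximation guarantee $\E[v(\mathcal{A}^{V'})] \ge (1-\epsilon)\,\OPT(V')$. This gives $(1-\epsilon)\bigl(k-\tfrac12\bigr) \le \bigl(k-\tfrac34\bigr) + \tfrac{q}{4}$, hence $\tfrac{q}{4} \ge \tfrac14 - \epsilon\bigl(k-\tfrac12\bigr)$, i.e. $q \ge 1 - 4\epsilon\bigl(k-\tfrac12\bigr) = 1 - 4\epsilon k + 2\epsilon$. Since $k = \lfloor \epsilon^{-1}/8\rfloor \le \epsilon^{-1}/8$, we have $4\epsilon k \le \tfrac12$, so $q \ge \tfrac12 + 2\epsilon \ge \tfrac12$, which is the claim.

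There is essentially no hard step here; the argument is a routine averaging computation. The only points requiring a little care are (i) making sure the bound $v \le k-\frac12$ on every feasible solution and $v < k - \frac34$ on the "bad" feasible solutions are exactly the two facts supplied by Lemma~\ref{lem:vdopt}, so that the convex combination above is valid for the whole support of $\mathcal{A}^{V'}$, and (ii) the monotonicity/arithmetic with the floor $k=\lfloor\epsilon^{-1}/8\rfloor$, which is handled simply by $k \le \epsilon^{-1}/8$. I expect the "main obstacle" to be nothing more than bookkeeping with the constants; the conceptual content is identical to Lemma~\ref{lem:out1}, and together with that lemma it sets up the two incompatible output requirements on $V$ and $V'$ that will force large symmetric difference when one item of $V_1$ is deleted.
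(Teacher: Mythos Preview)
Your proof is correct and follows essentially the same averaging argument as the paper: condition on whether the output contains at least $\frac{k-1}{2}$ items of $V_2$, bound the expected value by $q\bigl(k-\tfrac12\bigr)+(1-q)\bigl(k-\tfrac34\bigr)$ via Lemma~\ref{lem:vdopt}, compare with $(1-\epsilon)\OPT(V')$, and use $k\le \epsilon^{-1}/8$ to conclude $q\ge \tfrac12$. The only difference is cosmetic algebra---the paper rewrites the bound as $\bigl(1-\tfrac{1-p}{4(k-1/2)}\bigr)\OPT(V')$ before applying $k\le \epsilon^{-1}/8$, whereas you expand directly---but the content is identical.
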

\begin{proof}
Let $p$ be the probability that the algorithm outputs the set with at least $\frac{k-1}{2}$ items in $V_2$. Then,
\begin{align*}
    \E[\mathcal{A}]
    & \leq p\left(k-\frac{1}{2}\right)+(1-p)\left(k-\frac{3}{4}\right)
    = \left(1-\frac{1-p}{4\left(k-\frac{1}{2}\right)}\right)\left(k-\frac{1}{2}\right) \\
    &\leq \left(1-2(1-p)\epsilon\right)\left(k-\frac{1}{2}\right)
    = \left(1-2(1-p)\epsilon\right)\OPT(V'),
\end{align*}
where the first and last equalities is from Lemma~\ref{lem:vdopt}, and the last inequality is from $k=\lfloor\epsilon^{-1}/8\rfloor$.
Therefore, to obtain an approximation ratio at least $1-\epsilon$, we must have $p \geq \frac{1}{2}$.
\end{proof}

Combining these lemmas yields the following.

\begin{proof}[Proof of Theorem~\ref{thm:intro-lower-bound}]
We evaluate the earth mover's distance between $\mathcal{A}$ and $\mathcal{A}^i$, where $i \in V_1$.
By Lemmas~\ref{lem:out1} and~\ref{lem:out2}, a probability mass of at least $\frac{1}{4}$ must be transported from the solution $V_1$ for the instance $V$ to the solutions for $V\setminus \{i\}$, which have at least $\frac{k-1}{2}$ items in $V_2$.
Therefore, we have
\begin{align}
    \frac{1}{|V|}\sum_{i\in V}\EM(\mathcal{A},\mathcal{A}^i)
    \geq \frac{1}{|V|}\cdot |V_1| \cdot \frac{1}{4} \cdot \frac{k-1}{2} = \frac{k(k-1)}{8(2k-1)} = \Omega(\epsilon^{-1}),
\end{align}
where the last inequality is from $k=\lfloor\epsilon^{-1}/8\rfloor$.
\end{proof}

\section{Simple Knapsack}\label{sec:simple}

In this section, we present a deterministic algorithm for the simple knapsack problem with a small average sensitivity.
Recall that the knapsack problem is \emph{simple} if the value of an item is proportional to its weight.
Without loss of generality, we assume that $W=1$ and $w(i)=v(i)$ holds for every item $i$.


\begin{algorithm}[t!]
\caption{Stable-on-Average Deterministic Algorithm for the Simple Knapsack Problem}\label{alg:simple}
Let $L$ be the set of items with weight at least $\epsilon$\;
Let $R\leftarrow \OPT(L)$\;
Reorder the items so that $V\setminus L = \{1,\ldots,|V\setminus L|\}$ and $w(1)\leq \cdots \leq w(|V\setminus L|)\leq \epsilon$\;
Let $k$ be the largest index with $w(1)+\cdots+w(k)\leq 1-w(R)$\;
\Return $R\cup \{1,\dots, k\}$.
\end{algorithm}

Our algorithm is given in Algorithm~\ref{alg:simple}.
To calculate $\OPT(L)$, we apply brute-force search over the subsets $R\subseteq L$ with $|R|\leq \epsilon^{-1}$.
The following is clear from the design of the algorithm.
\begin{lemma}\label{lem:simple-approx}
If $w(V)\leq 1$, Algorithm~\ref{alg:simple} outputs $V$.
Otherwise, Algorithm~\ref{alg:simple} outputs the set $R\cup \{1,\dots, k\}$ with $w(R\cup \{1,\dots, k\})\geq 1-\epsilon$.
\end{lemma}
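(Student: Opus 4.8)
If $w(V)\le 1$, Algorithm~\ref{alg:simple} outputs $V$. Otherwise, it outputs $R\cup\{1,\dots,k\}$ with $w(R\cup\{1,\dots,k\})\ge 1-\epsilon$.

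The plan is to argue the two cases separately, both directly from the description of Algorithm~\ref{alg:simple} and the observation that in the simple knapsack problem value equals weight, so maximizing value subject to the weight bound is the same as packing weight as tightly as possible below $1$.

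\textbf{Case $w(V)\le 1$.} Here the entire item set is feasible, so trivially $\OPT(L)=L$, hence $R=L$. Then $1-w(R)=1-w(L)=w(V\setminus L)$, so every item of $V\setminus L$ fits in the residual budget and $k=|V\setminus L|$. Thus the algorithm returns $R\cup\{1,\dots,k\}=L\cup(V\setminus L)=V$.

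\textbf{Case $w(V)>1$.} We must show the packed weight is at least $1-\epsilon$. First note $w(R)\le 1$ since $R$ is a feasible subset of $L$. If after placing $R$ the remaining items $1,\dots,|V\setminus L|$ all fit, i.e. $k=|V\setminus L|$, then the output is $R\cup(V\setminus L)$; I would then argue this set has weight $\ge 1-\epsilon$ — in fact I claim it must have weight $>1$... wait, that would violate feasibility, so actually this sub-case forces $w(V\setminus L)\le 1-w(R)$, and I need another argument. The cleaner route: by the greedy choice of $k$ (largest index with $w(1)+\dots+w(k)\le 1-w(R)$), either $k=|V\setminus L|$ and all small items are taken, or $k<|V\setminus L|$ and adding item $k+1$ would exceed $1-w(R)$, i.e. $w(1)+\dots+w(k+1)>1-w(R)$, so the slack satisfies $1-w(R)-(w(1)+\dots+w(k))<w(k+1)\le\epsilon$ (since $k+1\notin L$). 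Hence $w(R\cup\{1,\dots,k\})=w(R)+w(1)+\dots+w(k)>1-\epsilon$ in the second sub-case. For the first sub-case ($k=|V\setminus L|$): I would show that if $R\cup(V\setminus L)$ had weight $<1-\epsilon$, then since $R$ maximizes weight over subsets of $L$ of total weight $\le 1$, and adding the leftover small items would still fit, one could derive a contradiction with $w(V)>1$ — specifically, every item of $L$ has weight $\le 1$, and if the whole packed set leaves more than $\epsilon$ of room while some unpacked large item $j\in L\setminus R$ exists, then... this requires a bit of care. Actually the right observation is: if $L\setminus R\ne\emptyset$, optimality of $R$ among subsets of $L$ with weight $\le 1$ means adding any unused large item would break the bound $1$; combined with $w(R)+w(V\setminus L)<1-\epsilon$ this still leaves room to add such an item without exceeding $1$, a contradiction — so $L\setminus R=\emptyset$, meaning $R=L$ and the packed set is all of $V$, contradicting $w(V)>1$. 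Hence in this sub-case $w(R\cup\{1,\dots,k\})\ge 1-\epsilon$ as well (indeed the sub-case cannot occur).

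The main obstacle I anticipate is the first sub-case of $w(V)>1$ (when all small items fit): one has to rule it out (or bound it) using the optimality of $R=\OPT(L)$ carefully, since a naive argument only controls the small items' leftover weight, not the large ones. The key fact to exploit is that $\OPT(L)$ over subsets of weight $\le 1$, together with all items having weight $\le 1$, forces any "gap" larger than the smallest unused large item's weight to be impossible — and since every item of $V\setminus L$ has weight $\le\epsilon<1$, once large items are exhausted the small items fill the gap down to within $\epsilon$ by the greedy argument above. I would present this as: the gap $1-w(\text{output})$ is strictly less than the weight of the first excluded item (whether large or small, via optimality of $R$ or greedy choice of $k$), and that weight is $\le\epsilon$ when the excluded item is small, while if the only excluded items are large the output is $V$ with $w(V)\le 1$, contradicting the case hypothesis.
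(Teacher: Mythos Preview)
Your treatment of the case $w(V)\le 1$ and of the sub-case $k<|V\setminus L|$ is correct. The gap is in the remaining sub-case: $w(V)>1$ with $k=|V\setminus L|$. You argue that if $w(R)+w(V\setminus L)<1-\epsilon$ and some $j\in L\setminus R$ exists, then optimality of $R$ is contradicted because ``there is still room to add $j$ without exceeding $1$''. But optimality of $R$ only yields $w(R)+w(j)>1$; the hypothesis $w(R)+w(V\setminus L)<1-\epsilon$ does not force $w(R)+w(j)\le 1$, since $w(j)$ may be much larger than $w(V\setminus L)$. Concretely, take $\epsilon=0.1$, two large items of weight $0.6$ each, and one small item of weight $0.05$. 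Then $w(V)=1.25>1$, $R$ is a single large item, $k=|V\setminus L|=1$, and the output weighs $0.65<0.9=1-\epsilon$, while $L\setminus R\ne\emptyset$. So this sub-case genuinely occurs and the output weight can fall below $1-\epsilon$.

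This example shows that the lemma as stated is in fact false; the paper's ``clear from the design of the algorithm'' overlooks the same point. What \emph{is} true, and what suffices for the $(1-\epsilon)$-approximation claimed afterwards, is the following dichotomy. If $k<|V\setminus L|$, your argument gives output weight $>1-\epsilon\ge(1-\epsilon)\OPT(V)$. If $k=|V\setminus L|$, the output is $R\cup(V\setminus L)$; for any optimal $S^*$ one has $w(S^*\cap L)\le w(R)$ (since $S^*\cap L$ is a feasible subset of $L$) and $w(S^*\setminus L)\le w(V\setminus L)$, so the output weight is at least $w(S^*)=\OPT(V)$ and the algorithm is exactly optimal in this sub-case. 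The correct case split for the approximation guarantee is thus on whether $k=|V\setminus L|$, not on whether $w(V)\le 1$.
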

Therefore, Algorithm~\ref{alg:simple} has an approximation ratio $1-\epsilon$.
We now analyze the average sensitivity.
\begin{lemma}\label{lem:simple-sensitivity}
Algorithm~\ref{alg:simple} has an average sensitivity $\epsilon^{-1}+2$.
\end{lemma}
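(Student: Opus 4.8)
The plan is to use the fact that Algorithm~\ref{alg:simple} is deterministic, so for each deleted item $i$ the earth mover's distance $\EM(\mathcal{A},\mathcal{A}^i)$ collapses to the Hamming distance $|X\triangle X^i|$ between the output $X$ on $(V,1)$ and the output $X^i$ on $(V\setminus\{i\},1)$; the goal is then to show $\frac1n\sum_{i\in V}|X\triangle X^i|\le\epsilon^{-1}+2$. I would write $X=R\sqcup S$, where $R=\OPT(L)\subseteq L$ is the large part and $S=\{1,\dots,k\}\subseteq V\setminus L$ the small part, and likewise $X^i=R^i\sqcup S^i$. Since deleting an item never changes whether another item is large, $R^i\subseteq L$ and $S^i\subseteq V\setminus L$, so $R\triangle R^i$ and $S\triangle S^i$ live in disjoint ground sets and $|X\triangle X^i|=|R\triangle R^i|+|S\triangle S^i|$. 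The whole argument is then a case analysis over the type of $i$, using the identifier-based tiebreak fixed in Section~\ref{sec:prelim} so that $\OPT(\cdot)$ denotes a well-defined lex-smallest optimal set.

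First I would dispatch the ``easy'' item types. If $i\in L\setminus R$, then $R$ remains a maximum-weight subset of $L\setminus\{i\}$ and is still the lex-smallest such subset (any competitor is also a subset of $L$), so $R^i=R$, the small-part budget $1-w(R)$ is unchanged, and $S^i=S$; hence $|X\triangle X^i|=0$. If $i$ is a small item with $i\notin S$, then $L$, $R$, and the budget are unchanged, the first $k$ cheapest small items are untouched, and adding the next cheapest small item still overflows the budget after $i$ is removed, so the greedy cutoff stays at $k$ and again $|X\triangle X^i|=0$. If $i$ is a small item with $i\in S$, then $R^i=R$ and the budget is unchanged, and removing $i$ from the cheapest-first list moves the cutoff by at most one position — the prefix loses $i$, which may or may not create room for exactly one new item — so $|S\triangle S^i|\le 2$ and $|X\triangle X^i|\le 2$; there are at most $|S|\le|V\setminus L|\le n$ such items.

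The remaining case, $i\in R$, is where the output genuinely can change a lot: this is exactly the phenomenon behind Proposition~\ref{prop:unbounded}, since deleting a large item can free up budget and pull in many cheap items. Rather than control $|X\triangle X^i|$ here, I would use the trivial bound $|X\triangle X^i|\le|V|=n$ together with $|R|\le\epsilon^{-1}$ (each large item weighs at least $\epsilon$ and $w(R)\le 1$). Summing the contributions,
\[
  \sum_{i\in V}|X\triangle X^i|\;\le\;\underbrace{|R|\cdot n}_{i\in R}\;+\;\underbrace{2\,|S|}_{i\text{ small},\ i\in S}\;\le\;\epsilon^{-1}n+2n,
\]
and dividing by $n$ gives the claimed bound. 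The degenerate situations $w(V)\le 1$ (where the algorithm outputs $V$ and deleting an item outputs $V\setminus\{i\}$) and $w(V\setminus\{i\})\le 1$ are absorbed into the boundary case $k=|V\setminus L|$ of the cutoff arguments, where a deletion changes $S$ by at most one item. The only genuinely delicate step is the greedy-stability claim for the small part — that the cheapest-first cutoff shifts by at most one under a deletion — which is precisely what makes Algorithm~\ref{alg:simple} behave well where the naive efficiency-order greedy does not; the weight bound $|R|\le\epsilon^{-1}$ then neutralizes the otherwise-uncontrolled large-item deletions, and the rest is bookkeeping.
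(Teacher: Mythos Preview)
Your proposal is correct and follows essentially the same approach as the paper: the same four-case analysis (delete $i\in R$, $i\in L\setminus R$, $i$ small with $i\in S$, $i$ small with $i\notin S$) with the same per-case bounds ($n$, $0$, $2$, $0$), the same use of $|R|\le\epsilon^{-1}$ to absorb the large-item deletions, and the same greedy-stability argument for the small part showing the cutoff shifts by at most one index. Your treatment of the tiebreak and the degenerate boundary $k=|V\setminus L|$ is a bit more explicit than the paper's, but the substance is identical.
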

\begin{proof}
Let $A$ (resp., $A^i$) be the output of the algorithm~\ref{alg:simple} on the instance $V$ (resp., $V\setminus \{i\}$).
Let $S\subseteq L$ be a set that attains $\OPT(L)$.
If $i\in S$, then we trivially have $|A\triangle A^i|\leq n$.
If $i\in L\setminus S$, then we have $|A\triangle A^i| = 0$.
Let $k$ be the index such that $A=\OPT(L)\cup\{1,\dots, k\}$.
If $i\in V\setminus L$ with $i > k$, then we have $|A\triangle A^i| = 0$.

Suppose $i\in V\setminus L$ with $i\leq k$.
We have
\begin{align*}
    &\OPT(L,1)+w(1)+\cdots+w(i-1)+w(i+1)+\cdots+w(k+2)\\
    &\geq \OPT(L,1)+w(1)+\cdots+w(i-1)+w(i)+w(i+1)+\cdots+w(k+1)>1,
\end{align*}
where the first inequality is from $w(i)\leq w(k+2)$ and the second inequality is from the design of the algorithm.
Therefore, we have $A\triangle A^i\subseteq \{i,k+1\}$ and $|A\triangle A^i|\leq 2$.

Combining the bounds above, we have
\[
    \frac{1}{n}\sum_{i\in V}|A\triangle A^i|\leq \frac{1}{n}(|S|\cdot n+n\cdot 2)\leq \epsilon^{-1}+2,
\]
where the last inequality is from $|S|\leq \epsilon^{-1}$.
\end{proof}
Theorem~\ref{thm:intro-simple} follows by combining Lemmas~\ref{lem:simple-approx} and~\ref{lem:simple-sensitivity}.

\section{Applications to the Dynamic Setting}\label{sec:recourse}

In this section, we discuss an application of our result to the incremental knapsack problem in the random-order setting.
In this problem, items arrive one by one in a random order from an empty set of items, and the goal is to maintain a good solution for the current set of items.




We show that an algorithm with bounded average sensitivity implies a dynamic algorithm with bounded amortized recourse.
Let $U$ be the ground set. For $i\in U$, let $\mathcal{S}$ and $\mathcal{S}^i$ be two distributions over subsets of $U$ and $U\setminus \{i\}$, respectively.
We assume that $\frac{1}{|U|}\sum_{i\in U}\EM\left(\mathcal{S},\mathcal{S}^i\right)\leq s$ holds. Then, we say that the probability transportation to achieve the average sensitivity $s$ is \emph{one-way computable} if the following holds:
\begin{itemize}
    \item For each $i$, there is a distribution of a pair of sets $\mathcal{D}^i$ such that its marginal distributions on the first and second coordinates are equal to $\mathcal{S}$ and $\mathcal{S}^i$, respectively.
    \item $\displaystyle \frac{1}{|U|}\sum_{i\in U}\E_{(S,S^i)\sim \mathcal{D}^i}\left[S\triangle S^i\right]\leq s$. 
    \item For each $i$, there is a (randomized) algorithm that takes a set $S^i$ and returns a set $S$, such that if the input $S^i$ is sampled from the distribution $\mathcal{S}^i$, the distribution of the pair $(S,S^i)$ is equal to $\mathcal{D}^i$. 
\end{itemize}
Recalling the proofs in Section~\ref{sec:general}, for any $U\subseteq V$ and $\epsilon>0$, we can see that the probability transportation that achieves the average sensitivity $O(\epsilon^{-1}\log \epsilon^{-1})$ is one-way computable.

The following gives a general transformation from algorithms with small average sensitivity to dynamic algorithms with small amortized recourse.
\begin{proposition}\label{prop:boundedrecourse}
Assume that there is a (randomized) algorithm $A$ with approximation ratio $1-\epsilon$ and average sensitivity $s$ such that the probability transportation to achieve average sensitivity $s$ is one-way computable. Then, there is an algorithm for the incremental dynamic knapsack problem in the random-order setting such that
\begin{itemize}
    \item the expected approximation ratio is $1-\epsilon$ for each step, where the expectation is taken over the random bits used in the algorithm, and
    \item the expected amortized recourse is at most $s$, where the expectation is taken over both the random bits used in the algorithm and the arrival order of the input.
\end{itemize}
\end{proposition}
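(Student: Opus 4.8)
The plan is to run the static algorithm $A$ "from scratch" on the current item set after each arrival, but to couple the randomness across consecutive steps so that the output changes little. Formally, let $v_1,\dots,v_n$ be the arrival stream (a uniformly random permutation of the ground set $U$, where $n=|U|$), and let $U_j=\{v_1,\dots,v_j\}$ be the set present after step $j$. After step $n$ we sample $X_n\sim\mathcal{S}_{U_n}$, the output distribution of $A$ on $U_n=U$. Then we process the arrivals in reverse: having produced $X_j$ whose marginal is the output distribution of $A$ on $U_j$, we note that $U_{j-1}=U_j\setminus\{v_j\}$, so we invoke the one-way computable transportation for the pair $(\mathcal{S}_{U_j},\mathcal{S}^{v_j}_{U_j})$ with $A$ restricted to $U_j$: we feed $X_j$ (playing the role of "the output on the smaller set" — here I must be careful about direction, see below) and obtain $X_{j-1}$ with the correct marginal $\mathcal{S}_{U_{j-1}}$, and such that $(X_j,X_{j-1})$ is distributed according to the optimal-transport coupling $\mathcal{D}^{v_j}$. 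Since every $X_j$ has the output distribution of $A$ on $U_j$, the approximation guarantee $1-\epsilon$ holds at every step, which dispatches the first bullet.

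The second bullet is the amortized recourse bound. Conditioned on the final item set being $U$ (which it always is) but more importantly conditioned on which item $v_j$ arrives last among $U_j$ — equivalently, conditioned on $U_j$ being a fixed $j$-subset and $v_j$ a fixed element of it — the expected Hamming cost $\E|X_{j-1}\triangle X_j|$ equals $\E_{(S,S^{v_j})\sim\mathcal{D}^{v_j}}|S\triangle S^{v_j}|$ for the instance $U_j$. Now the key combinatorial fact of the random-order model: conditioned on the unordered set $U_j$, the last-arriving item $v_j$ is uniform over $U_j$. Hence $\E[\,|X_{j-1}\triangle X_j|\mid U_j\,]=\frac{1}{|U_j|}\sum_{i\in U_j}\E_{(S,S^i)\sim\mathcal{D}^i}|S\triangle S^i|\le s$ by the one-way computability hypothesis (third property together with the averaged bound). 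Summing $\frac{1}{n}\sum_{j=1}^n \E|X_{j-1}\triangle X_j|\le \frac{1}{n}\cdot n\cdot s = s$, where the outer expectation is over both the permutation and the algorithm's coins; note $X_0=\emptyset$ and the step "$j-1\to j$" in the recourse definition corresponds exactly to our coupled pair.

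The main obstacle — and the point needing the most care — is the \emph{direction} of the one-way computable transportation. The definition in the excerpt provides an algorithm that takes a sample $S^i$ from the \emph{smaller} instance $\mathcal{S}^i$ and produces $S$ from the \emph{larger} instance $\mathcal{S}$. In the incremental setting items are \emph{added}, so going forward in time ($X_{j-1}\to X_j$) we must build the output on the larger set $U_j$ from the output on the smaller set $U_{j-1}=U_j\setminus\{v_j\}$ — which is exactly the direction the one-way computability provides, with $i=v_j$. So rather than processing in reverse, I should process forward: start with $X_0=\emptyset\sim\mathcal{S}_{\emptyset}$ (the output of $A$ on the empty instance, which is $\emptyset$), and at step $j$ apply the transportation algorithm for the pair $(\mathcal{S}_{U_j},\mathcal{S}^{v_j}_{U_j})$ with input $X_{j-1}$ to obtain $X_j$; by the third property, if $X_{j-1}\sim\mathcal{S}^{v_j}_{U_j}=\mathcal{S}_{U_{j-1}}$ then $(X_j,X_{j-1})\sim\mathcal{D}^{v_j}$, so the marginal of $X_j$ is $\mathcal{S}_{U_j}$ and induction carries through. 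The cost analysis is then verbatim the previous paragraph. The remaining routine checks are: that the update time is as claimed (the transportation algorithms of Section~\ref{sec:general} run within the stated bound — this is where $f_\epsilon$ and the $\log n$ for maintaining sorted order / fractional-optimum come from), and that the approximation guarantee of $A$ is preserved under conditioning on the arrival prefix, which holds because $X_j$'s \emph{marginal} is exactly $A$'s output on $U_j$ regardless of the coupling used.
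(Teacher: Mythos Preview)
Your proposal is correct and follows essentially the same approach as the paper: process arrivals forward, use the one-way computable transportation to couple $X_{j-1}$ (on $U_{j-1}=U_j\setminus\{v_j\}$) with $X_j$ (on $U_j$), maintain by induction that $X_j$ has the correct marginal, and bound the per-step expected Hamming cost via the uniform-last-item property of random order together with the averaged coupling bound. Your presentation is more explicit than the paper's about the direction of the coupling and about conditioning on $U_j$, which actually makes the argument cleaner; the initial reverse-processing detour is unnecessary but you self-correct, and the remarks on update time belong to Theorem~\ref{thm:dynamic} rather than to this proposition.
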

\begin{proof}
Let $\sigma$ be a random permutation of $\{1,\dots, |V|\}$ and let $X_{\sigma,0}=\emptyset$. For each $k=1,\dots, |V|$, let $X_{\sigma, k}$ be the output of $A$ for the instance $\{\sigma(1),\dots, \sigma(k)\}$, which is computed from $X_{\sigma,k-1}$ by the probability transportation in the statement.
Then, we see that the expectation of approximation ratio is $1-\epsilon$ because the distribution of $X_{\sigma, k}$ is the same as the output distribution of $A$ on $\{\sigma(1),\dots, \sigma(k)\}$.
The expected amortized recourse is then bounded as
\begin{align*}
    \E_{\sigma}\left[\frac{1}{n}\sum_{k=1}^{n}|X_{\sigma, k-1}\triangle X_{\sigma, k}|\right]
    &=\E_{\sigma}\left[\frac{1}{n}\sum_{k=1}^{n}\EM\left(\mathcal{A}(\{\sigma(1),\dots, \sigma(k-1)\}),\mathcal{A}(\{\sigma(1),\dots, \sigma(k)\})\right)\right]\\
    &\leq \frac{1}{n}\sum_{k=1}^{n}\E_{(i,S)\colon i\in S, |S|=k}\left[\EM\left(\mathcal{A}(S),\mathcal{A}(S\setminus \{i\})\right)\right]\\
    &\leq \frac{1}{n}\sum_{k=1}^{n}s=s.
    \qedhere
\end{align*}
\end{proof}
The above proposition and our result in Section~\ref{sec:general} imply that there is an $(1-\epsilon)$-approximation algorithm for the incremental knapsack problem in the random-order setting with amortized recourse $O(\epsilon^{-1} \log \epsilon^{-1})$.



We can slightly modify the algorithm in Section~\ref{sec:general} to bound the update time.
As a preprocess, we apply a standard argument on approximating the knapsack problem:
We round the value of each item $i$ into the largest integer power of $(1-\epsilon)^{-1}$ that does not exceed $v(i)$.
Because this operation multiples the value of each item by at least $1-\epsilon$, the optimal value of the knapsack problem is also multiplied by at least $1-\epsilon$.

We can see that, the probability transportation that achieves average sensitivity $O(\epsilon^{-1}\log \epsilon^{-1})$ can be computed from the following information:
\begin{itemize}
    \item For two real values $t_1<t_2$, the set of items with the smallest weight that satisfies $t_1\leq v(A_t) < t_2$, where all items have value at least $\epsilon\cdot \FOPT(V)$ (Line~\ref{line:smallestweight} of Algorithm~\ref{alg:knapsack})
    \item For items with $v(1)/w(1)\geq \cdots \geq v(n)/w(n)$ and a weight limit $W$, the last index $k$ such that $v(1)+\cdots +v(k)\leq W$ (Line~\ref{line:defx} of Algorithm~\ref{alg:knapsack} and Line~\ref{line:smalltaket} of Algorithm~\ref{alg:smallknapsack})
\end{itemize}

Let us consider computing the first one.
Because any feasible solution can contain at most $\epsilon^{-1}$ items, we only need to consider the $\epsilon^{-1}$ lightest items for each value candidate.
As there are $\log_{(1-\epsilon)^{-1}}\epsilon^{-1}$ value candidates, we can simply compute the desired set by brute-force search over the set of constant cardinality.
The second is calculated in $O(\log n)$ time by maintaining the set of items using a binary search tree.
Therefore, we can compute the probability transportation in $O(f_\epsilon+\log n)$ time for some $f_\epsilon$.
Hence we obtain Theorem~\ref{thm:dynamic}.

All results in this section can also be applied to the \emph{decremental knapsack problem}, where the input is initially a ground set $V$ and items are deleted one by one at each step.




\bibliography{bib}

\begin{thebibliography}{10}

\bibitem{eppstein1999dynamic}
Mikhail Atallah and Marina Blanton.
\newblock {\em Algorithms and theory of computation handbook}.
\newblock CRC press, 2009.

\bibitem{babaioff2007knapsack}
Moshe Babaioff, Nicole Immorlica, David Kempe, and Robert Kleinberg.
\newblock A knapsack secretary problem with applications.
\newblock In {\em Approximation, randomization, and combinatorial optimization. Algorithms and techniques}, pages 16--28. Springer, 2007.

\bibitem{bohm2020fully}
Martin B{\"o}hm, Franziska Eberle, Nicole Megow, Bertrand Simon, Lukas N{\"o}lke, Jens Schl{\"o}ter, and Andreas Wiese.
\newblock Fully dynamic algorithms for knapsack problems with polylogarithmic update time.
\newblock In {\em IARCS Annual Conference on Foundations of Software Technology and Theoretical Computer Science}, 2021.

\bibitem{buchbinder2009online}
Niv Buchbinder and Joseph Naor.
\newblock Online primal-dual algorithms for covering and packing.
\newblock {\em Mathematics of Operations Research}, 34(2):270--286, 2009.

\bibitem{chan2018approximation}
Timothy Chan.
\newblock Approximation schemes for 0-1 knapsack.
\newblock In {\em Symposium on Simplicity in Algorithms}, 2018.

\bibitem{dantzig1957discrete}
George Dantzig.
\newblock Discrete variable extremum problems.
\newblock {\em Operations Research}, 5:266--277, 1957.

\bibitem{dwork2006calibrating}
Cynthia Dwork, Frank McSherry, Kobbi Nissim, and Adam Smith.
\newblock Calibrating noise to sensitivity in private data analysis.
\newblock In {\em Theory of Cryptography Conference}, pages 265--284. Springer, 2006.

\bibitem{feldkord2018fully}
Bj{\"o}rn Feldkord, Matthias Feldotto, Anupam Gupta, Guru Guruganesh, Amit Kumar, S{\"o}ren Riechers, and David Wajc.
\newblock Fully-dynamic bin packing with little repacking.
\newblock In {\em International Colloquium on Automata, Languages, and Programming}, 2018.

\bibitem{gupta2017online}
Anupam Gupta, Ravishankar Krishnaswamy, Amit Kumar, and Debmalya Panigrahi.
\newblock Online and dynamic algorithms for set cover.
\newblock In {\em ACM SIGACT Symposium on Theory of Computing}, pages 537--550, 2017.

\bibitem{gupta2020fully}
Anupam Gupta and Roie Levin.
\newblock Fully-dynamic submodular cover with bounded recourse.
\newblock {\em IEEE Symposium on Foundations of Computer Science}, pages 1147--1157, 2020.

\bibitem{gupta2013fully}
Manoj Gupta and Richard Peng.
\newblock Fully dynamic $(1 + \epsilon)$-approximate matchings.
\newblock In {\em IEEE Symposium on Foundations of Computer Science}, pages 548--557, 2013.

\bibitem{han2019online}
Xin Han, Yasushi Kawase, Kazuhisa Makino, and Haruki Yokomaku.
\newblock Online knapsack problems with a resource buffer.
\newblock In {\em International Symposium on Algorithms and Computation}, 2019.

\bibitem{ibarra1975fast}
Oscar Ibarra and Chul Kim.
\newblock Fast approximation algorithms for the knapsack and sum of subset problems.
\newblock {\em Journal of the ACM}, 22(4):463--468, 1975.

\bibitem{iwama2002removable}
Kazuo Iwama and Shiro Taketomi.
\newblock Removable online knapsack problems.
\newblock In {\em International Colloquium on Automata, Languages, and Programming}, pages 293--305, 2002.

\bibitem{jin2019improved}
Ce~Jin.
\newblock An improved fptas for $0$-$1$ knapsack.
\newblock In {\em International Colloquium on Automata, Languages, and Programming}, 2019.

\bibitem{karp1972reducibility}
Richard Karp.
\newblock Reducibility among combinatorial problems.
\newblock In {\em Complexity of Computer Computations}, pages 85--103. Springer, 1972.

\bibitem{kellerer1999new}
Hans Kellerer and Ulrich Pferschy.
\newblock A new fully polynomial time approximation scheme for the knapsack problem.
\newblock {\em Journal of Combinatorial Optimization}, 3(1):59--71, 1999.

\bibitem{kellerer2004improved}
Hans Kellerer and Ulrich Pferschy.
\newblock Improved dynamic programming in connection with an fptas for the knapsack problem.
\newblock {\em Journal of Combinatorial Optimization}, 8(1):5--11, 2004.

\bibitem{kellerer2004knapsack}
Hans Kellerer, Ulrich Pferschy, and David Pisinger.
\newblock {\em Knapsack problems}.
\newblock Springer, 2004.

\bibitem{korte2011combinatorial}
Bernhard Korte and Jens Vygen.
\newblock {\em Combinatorial optimization}, volume~1.
\newblock Springer, 2011.

\bibitem{kumabe2022average}
Soh Kumabe and Yuichi Yoshida.
\newblock Average sensitivity of dynamic programming.
\newblock In {\em {ACM}-{SIAM} Symposium on Discrete Algorithms}, pages 1925--1961, 2022.

\bibitem{lawler1979fast}
Eugene Lawler.
\newblock Fast approximation algorithms for knapsack problems.
\newblock {\em Mathematics of Operations Research}, 4(4):339--356, 1979.

\bibitem{marchetti1995stochastic}
Alberto Marchetti-Spaccamela and Carlo Vercellis.
\newblock Stochastic on-line knapsack problems.
\newblock {\em Mathematical Programming}, 68(1):73--104, 1995.

\bibitem{mcsherry2007mechanism}
Frank McSherry and Kunal Talwar.
\newblock Mechanism design via differential privacy.
\newblock In {\em IEEE Symposium on Foundations of Computer Science}, pages 94--103, 2007.

\bibitem{Murai2019}
Shogo Murai and Yuichi Yoshida.
\newblock Sensitivity analysis of centralities on unweighted networks.
\newblock In {\em The World Wide Web Conference}, page 1332–1342, 2019.

\bibitem{onak2010maintaining}
Krzysztof Onak and Ronitt Rubinfeld.
\newblock Maintaining a large matching and a small vertex cover.
\newblock In {\em ACM Symposium on Theory of Computing}, pages 457--464, 2010.

\bibitem{Peng2020}
Pan Peng and Yuichi Yoshida.
\newblock Average sensitivity of spectral clustering.
\newblock In {\em {ACM} {SIGKDD} International Conference on Knowledge Discovery {\&} Data Mining}, pages 1132--1140. {ACM}, 2020.

\bibitem{rhee2015faster}
Donguk Rhee.
\newblock {\em Faster fully polynomial approximation schemes for knapsack problems}.
\newblock PhD thesis, Massachusetts Institute of Technology, 2015.

\bibitem{shiloach1981line}
Yossi Shiloach and Shimon Even.
\newblock An on-line edge-deletion problem.
\newblock {\em Journal of the ACM}, 28(1):1--4, 1981.

\bibitem{vadhan2017complexity}
Salil Vadhan.
\newblock The complexity of differential privacy.
\newblock In {\em Tutorials on the Foundations of Cryptography}, pages 347--450. Springer, 2017.

\bibitem{van2019dynamic}
Jan van~den Brand and Danupon Nanongkai.
\newblock Dynamic approximate shortest paths and beyond: Subquadratic and worst-case update time.
\newblock In {\em IEEE Symposium on Foundations of Computer Science}, pages 436--455, 2019.

\bibitem{Varma2021}
Nithin Varma and Yuichi Yoshida.
\newblock Average sensitivity of graph algorithms.
\newblock In {\em {ACM}-{SIAM} Symposium on Discrete Algorithms}, pages 684--703, 2021.

\bibitem{Yoshida2021}
Yuichi Yoshida and Samson Zhou.
\newblock Sensitivity analysis of the maximum matching problem.
\newblock In {\em Innovations in Theoretical Computer Science}, pages 58:1--58:20, 2021.

\bibitem{zhou2008budget}
Yunhong Zhou, Deeparnab Chakrabarty, and Rajan Lukose.
\newblock Budget constrained bidding in keyword auctions and online knapsack problems.
\newblock In {\em International Workshop on Internet and Network Economics}, pages 566--576. Springer, 2008.

\end{thebibliography}
\newpage
\appendix




\end{document}